\newif\iftr % used to control whether this version is the technical report (= with appendix) or not
\def\BibTeX{{\rm B\kern-.05em{\sc i\kern-.025em b}\kern-.08em
		T\kern-.1667em\lower.7ex\hbox{E}\kern-.125emX}}
\pgfplotsset{compat=1.13}
\newtheorem{lemma}{Lemma}
\newtheorem{theorem}{Theorem}
\newtheorem{sub-goal}{Sub-goal}
\newcommand{\set}[1]{\left\{{#1}\right\}}
\newcommand{\lmax}{l^{\textit{max}}}
\newcommand{\isdef}{\ensuremath{\overset{\textit{def}}{=}}}
\newcommand{\cV}{\mathcal{V}}
\newcommand{\cE}{\mathcal{E}}
\renewcommand{\lmax}{l^{\scriptsize\textrm{max},v}}
\newcommand{\tot}{\textrm{tot}}
\renewcommand{\isdef}{\ensuremath{\overset{\scriptsize\textit{def}}{=}}}
\newcommand{\lp}{\ensuremath{\left(}}
\newcommand{\rp}{\ensuremath{\right)}}
\newcommand{\lb}{\ensuremath{\left[}}
\newcommand{\rb}{\ensuremath{\right]}}
\newcommand{\mand}{\mbox{ and }}
\newcommand{\Reals}{\mathbb{R}}
\newcommand{\rmin}{R^{{\scriptsize\textrm{min},v}}}
\newcommand{\rmax}{R^{{\scriptsize\textrm{max},v}}}
\newcommand{\tmin}{T^{{\scriptsize\textrm{min},v}}}
\newcommand{\tmax}{T^{{\scriptsize\textrm{max},v}}}
\newcommand{\rminc}{\Bar{R}^{{\scriptsize\textrm{min},v}}}
\newcommand{\rmaxc}{\Bar{R}^{{\scriptsize\textrm{max},v}}}
\newcommand{\tminc}{\Bar{T}^{{\scriptsize\textrm{min},v}}}
\newcommand{\tmaxc}{\Bar{T}^{{\scriptsize\textrm{max},v}}}
\newcommand{\oldservice}{\beta^{\textrm{\scriptsize old},v}}
\newcommand{\newservice}{\beta^{\textrm{\scriptsize new},v}}
\newcommand{\mdelta}{d^{\scriptsize\textrm{max},v}}
\newcommand{\gammacon}{\gamma_c^{\scriptsize\textrm{convex},v}}
\newcommand{\betacdm}{\beta_c^{\scriptsize\textrm{CDM},v}}
\newcommand{\betacdma}{\beta^{\scriptsize\textrm{CDM},v}}
\newcommand{\plpd}{{\textrm{PLP}}^{\scriptsize\textrm{delay}}}
\newcommand{\iplpd}{{\textrm{iPLP}}^{\scriptsize\textrm{delay}}}
\newcommand{\plpb}{{\textrm{PLP}}^{\scriptsize\textrm{backlog}}}
\newcommand{\plpcut}{\textrm{FP-PLP}}
\newcommand{\tfa}{{\textrm{GenericTFA}}}
\newcommand{\tvar}[2]{\mathbf{t}_{(#1,#2)}}
\newcommand{\xvar}[1]{\mathbf{x}_{#1}}
\newcommand{\fvar}[4]{\mathbf{Ft}^{#1,#2}_{#3,#4}}
\newcommand{\depth}{\textrm{dp}}
\newcommand{\suc}{\textrm{sc}}
 \newcommand{\pd}{\textrm{perNodeDelay}}
\newcommand{\drrb}{{\textrm{DRRservice}}^{\scriptsize\textrm{outputBurst}}}
\newcommand{\drri}{{\textrm{DRRservice}}^{\scriptsize\textrm{inputArrival}}}
\newcommand{\zcuti}{z^{\scriptsize\textrm{cut}}_c}
\newcommand{\zcutc}{z^{\scriptsize\textrm{cut}}_c}
\newcommand{\fcutc}{F^{\scriptsize\textrm{cut}}_c}
\newcommand{\zcut}{z^{\scriptsize\textrm{cut}}}
\newcommand{\ecutc}{E^{\scriptsize\textrm{cut}}_c}
\newcommand\sbullet[1][.5]{\mathbin{\vcenter{\hbox{\scalebox{#1}{$\bullet$}}}}}
\newcommand{\dend}{d^{{\scriptsize\textrm{e2e}}}}
\newcommand\copyrighttext{%
	\footnotesize This work has been submitted to the IEEE for possible publication.
}
\newcommand\copyrightnotice{%
	\begin{tikzpicture}[remember picture,overlay]
		\node[anchor=north,yshift=-10pt] at (current page.north) {\parbox{\dimexpr\textwidth-\fboxsep-\fboxrule\relax}{\copyrighttext}};
	\end{tikzpicture}%
}
\begin{document}
	\pagenumbering{arabic}
	\pagestyle{plain}
	%Dimensioning Buffer Size and Delay Bound in Time-Sensitive Networking
	\title{Worst-case Delay Analysis of Time-Sensitive Networks with Deficit Round-Robin}
	% author names and affiliations
	% use a multiple column layout for up to three different
	%affiliations

%	\author{
%		\IEEEauthorblockN{...\\
%			\IEEEauthorblockN{Author 1, Author 2, Author 3\\
%		}
%	\IEEEauthorblockA{\'Ecole Polytechnique F\'ed\'erale de Lausanne, Switzerland\\
%	$\{$firstname.lastname$\}$@epfl.ch}}
%	

%	\author{\IEEEauthorblockN{Author 1}
%	\IEEEauthorblockA{\textit{Affiliation 1}\\
%		City 1, Country 1 \\
%		email@address}
%	\and
%	\IEEEauthorblockN{Author 2}
%	\IEEEauthorblockA{\textit{Affiliation 2}\\
%		City 2, Country 2\\
%		email@address}
%}
	\author{\IEEEauthorblockN{Seyed Mohammadhossein Tabatabaee}
	\IEEEauthorblockA{\textit{EPFL}\\
		Lausanne, Switzerland \\
		hossein.tabatabaee@epfl.ch}
		
	\and
	\IEEEauthorblockN{Anne Bouillard}
	\IEEEauthorblockA{\textit{Huawei Technologies France}\\
		Paris, France \\
		anne.bouillard@huawei.com}
	
	\and
	\IEEEauthorblockN{Jean-Yves Le Boudec}
	\IEEEauthorblockA{\textit{EPFL}\\
		Lausanne, Switzerland \\
		jean-yves.leboudec@epfl.ch}
}

	% make the title area
	\maketitle
	\copyrightnotice
%		\copyrightnotice
        %\Marc{I have not worked enough to be second author.}
	% As a general rule, do not put math, special symbols or citations
	% in the abstract
	
	% no keywords
	% For peerreview papers, this IEEEtran command inserts a page break and
	% creates the second title. It will be ignored for other modes.
	\IEEEpeerreviewmaketitle
	
	\setcounter{page}{1}
	% !TeX root = mainTON.tex
\begin{abstract}
In feed-forward time-sensitive networks with Deficit Round-Robin (DRR), worst-case delay bounds were obtained by combining Total Flow Analysis (TFA) with the strict service curve characterization of DRR by Tabatabaee et al. The latter is the best-known single server analysis of DRR, however the former is dominated by Polynomial-size Linear Programming (PLP), which improves the TFA bounds and stability region, but was never applied to DRR networks. We first perform the necessary adaptation of PLP to DRR by computing burstiness bounds per-class and per-output aggregate and by enabling PLP to support non-convex service curves. Second, we extend the methodology to support networks with cyclic dependencies: This raises further dependency loops, as, on one hand, DRR strict service curves rely on traffic characteristics inside the network, which comes as output of the network analysis, and on the other hand, TFA or PLP requires prior knowledge of the DRR service curves. This can be solved by iterative methods, however PLP itself requires making cuts, which imposes other levels of iteration, and it is not clear how to combine them. We propose a generic method, called PLP-DRR, for combining all the iterations sequentially or in parallel. We show that the obtained bounds are always valid even before convergence; furthermore, at convergence, the bounds are the same regardless of how the iterations are combined. This provides the best-known worst-case bounds for time-sensitive networks, with general topology, with DRR. We apply the method to an industrial network, where we find significant improvements compared to the state-of-the-art. 

	\end{abstract} 
	% !TeX root = mainTON.tex
\section{Introduction}
\label{sec:intro}
Deficit Round-Robin (DRR) \cite{DRR} is a scheduling algorithm that is often used in real-time systems or communication networks for scheduling tasks, or packets.
%, in real-time systems or communication networks. %It is a \bleu{form of fair queuing} %Weighted Round-Robin (WRR)
%that enables flows with variable packet lengths to fairly share the capacity. The capacity is shared among several clients or queues by giving each of them a quantum value and  by providing more service to those with larger quantum. DRR is widely used because it exhibits a low complexity, $O(1)$, provided that an allocated quantum is no smaller than the maximum packet size; and it can be implemented in very efficient ways, such as the Aliquem implementation \cite{Aliquem}.
With DRR, every queue is assigned  a \emph{quantum} that is a static number. Queues are visited in \emph{round-robin} manner; the service received at every visit, which is measured in bits for communication networks or in seconds for task processing systems,  is up to the quantum value. Tasks or packets have variable sizes and it might happen that, during a visit of the server, at least one task or packet, which cannot be served, remains in the queue; this is because the unused part of the quantum is positive but not large enough. In such a case, the unused part of the quantum (called the \emph{residual deficit}) is carried over to the next round.
DRR shares resources flexibly (the amount of service reserved for one queue is proportional to its quantum), and is efficient (when a queue is idle, the server capacity is available to other queues). As it has low complexity and very efficient implementations exist~\cite{Aliquem}, it is widely used.

% With DRR, every queue is associated with a static number, called \emph{quantum}. Queues are visited \emph{round-robin} (one after the other), and at every visit, receive service (measured in bits for communication networks, in seconds for task processing systems) up to the quantum value. Tasks or packets are of variable sizes and it may happen that, during one visit of the server, there remains at least one task or packet in the queue that cannot be served because the unused part of the quantum is positive but not large enough. In such a case, the unused part of the quantum (called the \emph{residual deficit}) is carried over to the next round.
% DRR shares resources flexibly (the amount of service reserved for one queue is proportional to its quantum) and efficiently (when a queue is idle, the server capacity is available to other queues). It is widely used as it has low complexity and very efficient implementations exist~\cite{Aliquem}.

DRR can be applied to time-sensitive networks  where  bounds on worst-case delays (not on average) are required. Here, flows with similar delay requirements can be mapped to the same class, every class corresponds to one DRR queue at every node, and packets inside one class are handled first in first out (FIFO). Also, %Furthermore, 
flows are limited at sources by arrival curve constraints, i.e., limits to the number of bits that can be sent over any time interval.  Finding delay bounds in a DRR network involves two steps: a single node analysis and a combination of nodes in a per-class network analysis. For the former, in a recent RTAS conference, the authors in~\cite{drr_rtas} %(extended version in \cite{drr_ton}) 
derive a strict service curve for DRR, i.e., a function that lower bounds the amount of service received by every DRR queue. Delay bounds are then derived by using network calculus formulas. This method captures the interference of competing DRR classes and, as of today, provides the best known worst-case delay bounds \cite{1043123,10.5555/923589,Lenzini_fullexploitation,boyer_NC_DRR,anne_drr}.  We call this method the \emph{DRR strict service curve.} 

For the latter step, per-class network analysis, Total Flow Analysis (TFA) \cite{bouillard_deterministic_2018} was used in \cite{drr_rtas}. TFA obtains delay bounds in FIFO networks and can be applied to per-class networks that are FIFO per class and where a service curve is known for every class at every node. When applying TFA to DRR networks, DRR strict service curves require the knowledge  of burstiness bounds of competing classes inside the network, which is an output of the network analysis of TFA. Conversely, TFA needs to know the strict service curves. The authors in \cite{drr_rtas} solve this problem by considering only feed-forward networks (in the application example, they constrain flow routes to avoid cyclic dependencies). However, cyclic dependencies are frequent in time-sensitive networks and cannot be ignored. Recent versions of TFA~\cite{ludo_cycle,sync_TFA} apply to networks with cyclic dependencies and can therefore be used: as a side-contribution, in Section~\ref{sec:initPhase}, we show how to apply TFA to DRR networks with cyclic dependencies, by developing, and proving the validity of, an iterative procedure, called TFA-DRR.

However, our main contribution goes well beyond TFA-DRR; indeed, it is known that TFA is outperformed by polynomial-size linear programming (PLP)~\cite{plp}, which always provides delay bounds better than or equal to those of TFA, and, at high network utilization, often converges when TFA does not. Other methods such as LUDB \cite{ludb} and flow prolongations \cite{fp} also tend to dominate TFA, however unlike PLP, they do not apply to generic topologies. This motivates the goal of this paper, which is to design how PLP can be applied to DRR networks. The existing PLP, like the recent versions of TFA, applies to FIFO networks with any topology.  PLP consists in three phases: First, per-node delay bounds are computed (from TFA); Second, cuts are performed on the network topology in order to obtain a collection of trees and valid burstiness bounds are computed at the cuts by solving a linear program; Third, delay bounds for the flows of interest are computed on the cut network by solving another linear program for every flow of interest. PLP uses the per-node delay bounds obtained by TFA as a constraint in all its linear programs; it follows that the PLP delay bounds are guaranteed to be as good as the bounds obtained with TFA. An intriguing feature is that this enables PLP to obtain end-to-end delay bounds that are generally better than with TFA, whereas not using the per-node delay bounds as constraints may provide worse results.

Using PLP to analyse DRR networks requires to introduce the DRR strict service curve into the PLP procedure. PLP uses internal variables such as the burstiness bounds at cuts and the per-node delay bounds, the computation of which depend on the DRR strict service curves; the DRR strict service curves depend on burstiness bounds of interfering flows at the output of every node, which can be obtained by adding to PLP another family of linear programs;  the outputs of such linear programs depend on the burstiness at cuts, the per-node delay bounds, 
%\anne{called per-node in the peper. Should we unify?}\jylb{yes}, 
and the DRR strict service curves. In total, there are four collections of variables (burstiness bounds at cuts, burstiness bounds for interfering flows at DRR nodes, per-node delay bounds and DRR strict service curves), and the computation of every collection depends on the values of the other collections. It is natural to propose an iterative procedure as we mentioned above for the application of TFA to DRR (where there were only two collections, per-node delay bounds and DRR strict service curves), however it is not clear how the iterations should be combined and whether some specific combinations provide better bounds. To solve this issue, we propose a generic method to combine updates to any item in the four collections in any arbitrary order, using a distributed, shared-memory computing model. We show that the resulting bounds do not depend on how the item updates are executed, as long as every update is executed infinitely often in a hypothetical execution of infinite duration (Theorem~\ref{thm:genMethod}). Still, some concrete implementations of the method may have better execution times, and we propose two such concrete, parallel implementation methods, which we apply to the industrial network in \cite{drr_rtas}.

When applying PLP to DRR, we make two further improvements. First, the existing PLP obtains burstiness bounds for individual flows, whereas the DRR strict service curve uses burstiness bounds for the aggregate of all flows for every 
interfering DRR class at node output. Of course, a burstiness bound for an aggregate can be obtained by summing the burstiness bounds of every individual flow, but this is generally sub-optimal. In Theorem~\ref{thm:plpBack}, we extend the PLP methodology to obtain such per-aggregate bounds. Second, PLP requires convex service curves; \cite{drr_rtas} provides both convex and non-convex DRR strict service curves, and the latter may obtain smaller delay bounds when the delay bounds are very small. We solve this issue with a modification of PLP, called iPLP, that adds one binary variable to the linear program per DRR node (Section~\ref{sec:iplp}). 

The contributions of this paper are as follows:

$\sbullet[.75]$ We provide a method (PLP-DRR), for the worst-case timing analysis of per-class DRR network with or without cyclic dependencies, which combines DRR strict service curves and PLP in a novel way. It has three phases: (i) initial, which obtains initial TFA bounds; (ii) refinement, which improves the four collections of burstiness bounds at cuts, burstiness bounds for interfering flows at DRR nodes, per-node delay bounds and DRR strict service curves; (iii) post-process, which obtains delay bounds for flows of interest using iPLP. 
   
$\sbullet[.75]$ The refinement phase 
    uses a distributed computing model with shared memory, where individual improvements can be applied in any order. We show that any execution provides the same bounds, regardless of the order in which the individual improvements are applied. We prove that the bounds are valid. The bounds are guaranteed to be at least as good as the ones obtained with TFA-DRR.
    
$\sbullet[.75]$    
    We develop, and show the validity of, a method to apply TFA to DRR networks with cyclic dependencies. This method is of independent interest and is also used in the initial phase.
    
$\sbullet[.75]$    
    We design two improvements to the PLP methodology. The former computes improved burstiness bounds for aggregates of flows and is used in the refinement phase. The latter enables us to use non-convex service curves in PLP and is used in the post-process phase. 
    
$\sbullet[.75]$
        We design two concrete implementations of the method, with parallel for-loops in the refinement phase, and apply them to the industrial network in~\cite{drr_rtas}. We find that the delay bounds are significantly better than the state-of-the-art.
%\begin{itemize}
 %
  %   
 %   \item 
  %  \item We design two concrete implementations of the method, with parallel for-loops in the refinement phase, and apply them to the industrial network in~\cite{drr_rtas}. We find that the delay bounds are significantly better than the state of the art. %Also our method finds finite bounds in the high utilization region where existing methods do not.% to obtain finite bounds.
%\end{itemize}

The rest of the paper is organized as follows. Section~\ref{sec:sysmodel} describes the system model, including DRR operation, the network under study and the resulting graphs. Section~\ref{sec:bg} gives the necessary background on DRR strict service curve, TFA and PLP.  Section~\ref{sec:general_idea} gives a global view of PLP-DRR, our method for combining the DRR strict service curve and PLP. It uses two improvements of PLP, which are described and proven in Section~\ref{sec:2imp}. The details of PLP-DRR are described in Section~\ref{sec:detail}, including statements about the validity and the uniqueness of the obtained bounds. Section~\ref{sec:proof} contains proofs of theorems. Section~\ref{sec:numEval} applies the method to the industrial network in~\cite{drr_rtas} and illustrates the obtained improvements on delay bounds. Section~\ref{sec:conc} concludes the paper.

\section{System Model} \label{sec:sysmodel}

 We are interested in computing end-to-end delay bounds of flows in an asynchronous, time-sensitive packet-switched network with DRR. 
 % \bleu{In the context of Time-Sensitive Networking (TSN), DRR is one of the possible schedulers that is allowed.}

% in the context of deterministic networking. %In this section, we describe the model of a DRR system. 

%We consider a packet-switched network that uses DRR in the context of deterministic networking, and we are interested in the worst-case delays for flows, given the arrival curve constraints of flows at the source. 

%  We consider a packet-switched network that uses DRR, as explained in section \ref{sec:drr}, in the context of deterministic networking, and we are interested in the worst-case delays for flows, given the arrival curve constraints of flows at the source. Here we present our assumptions and system models in detail.
%  We consider a packet-switched network. We assume that there are several classes of traffic , and  flows are statically assigned to a class. At every node, packets of flows of different classes are processed according to DRR scheduling, as explained in \ref{sec:drr}. Also, within a class, packets of all flows are processed  are 
%   processed FIFO-per-class, i.e., in order of arrival.

\subsection{Deficit Round-Robin Scheduling}
\label{sec:drr}
%\subsection{Network Calculus \cite{le_boudec_network_2001, Changbook,bouillard_deterministic_2018}}
% !TeX root = mainDRR.tex

A DRR subsystem serves $n$ inputs, has one queue per input. %When a packet of input $i$ enters the deficit round-robin subsystem, it is put into queue $i$.
Each queue $c$ is assigned a quantum $Q_c$.
DRR runs an infinite loop of \emph{rounds}. In one round, if queue $c$ is non-empty, a service for this queue starts and its  \emph{deficit} is increased by $Q_c$. The service ends when either the deficit is smaller than the size of the head-of-the-line packet or the queue becomes empty. In the latter case, the deficit is set to zero. 
%The only operation that has a non-null duration is the packet sending. 
Packet sending duration depends both on the packet size and the amount of service available for the entire DRR subsystem. 

The DRR subsystem is placed in a larger system and can compete with other queuing subsystems. Then, the service offered to the DRR subsystem might not be instantly available. This can be modelled by means of a strict service curve $B()$, i.e., a function such at least $B(\tau)$ bits of any DRR class are served during any period of time of duration $\tau$ where the DRR subsystem is backlogged. A frequently used strict service curve
is the rate-latency function with rate $R$ and latency $T$, defined by $
\beta_{R,T}(t) = R[t-T]^+$, where we use the notation $[x]^+=\max\set{x,0}$.
%If the DRR subsystem has exclusive access to a transmission line of rate $R$, then $B(t)=Rt$ for $t\geq 0$. 
For example, when the DRR subsystem is at the highest priority on a non-preemptive server with line rate $R$, this gives a rate-latency strict service curve with rate $R$ and latency $\frac{R}{L^{\max}}$ where ${L^{\max}}$ is the maximum packet size of lower priority.
%\anne{the latency is 0 if the DRR sybsystem has exclusive access t the transmission line) (to keep??)}\jylb{no, this corresponds to $L^{\max}=0$}. 
If the DRR subsystem is not at the highest priority level, this can be modelled with a more complex strict service curve \cite[Section 8.3.2]{bouillard_deterministic_2018}.

Here we use the language of communication networks, yet the results equally apply to real-time systems: Simply map flow to task,  packet to job,  packet size to job-execution time, and  strict service curve to ``delivery curve" \cite{4617308,858698}.

\begin{figure}[htbp]
	\centering
	\includegraphics[width=0.72\linewidth]{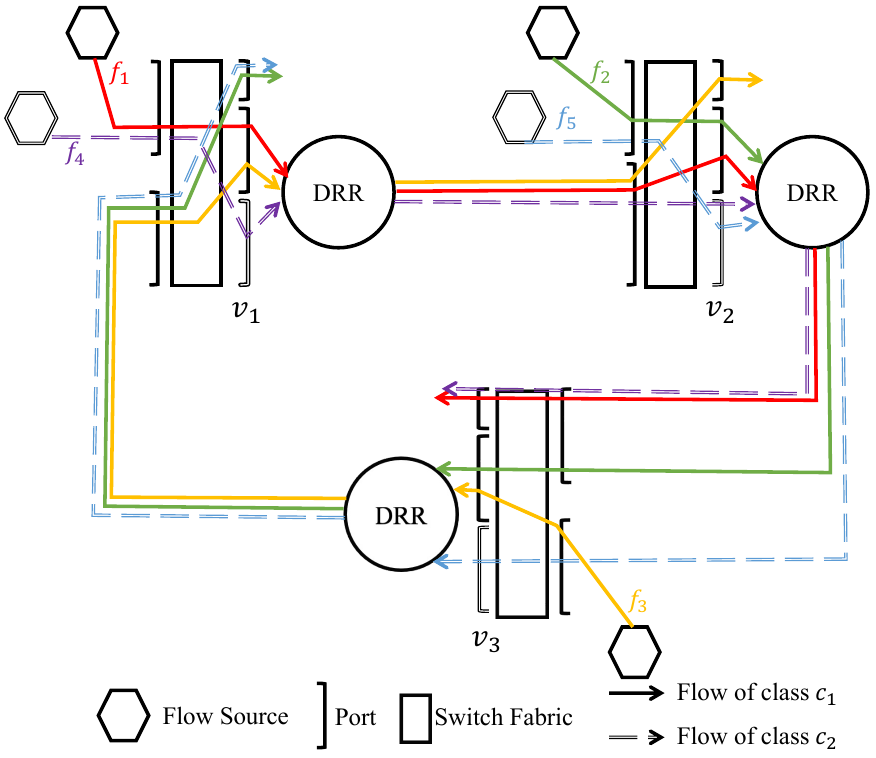}
	\caption{\sffamily \small Toy Network with 2 DRR classes. Flows $f_1$, $f_2$, and $f_3$ belong to class $c_1$; flow $f_4$ and $f_5$ belong to class $c_2$.}
	\label{fig:toyNet}
\end{figure}    
% \vspace{-0.5cm}
\begin{figure}[htbp]
	\centering
	%  \title={.}
	\begin{subfigure}[b]{0.23\textwidth}
		\centering
		% include first image
		\includegraphics[width=0.72\linewidth]{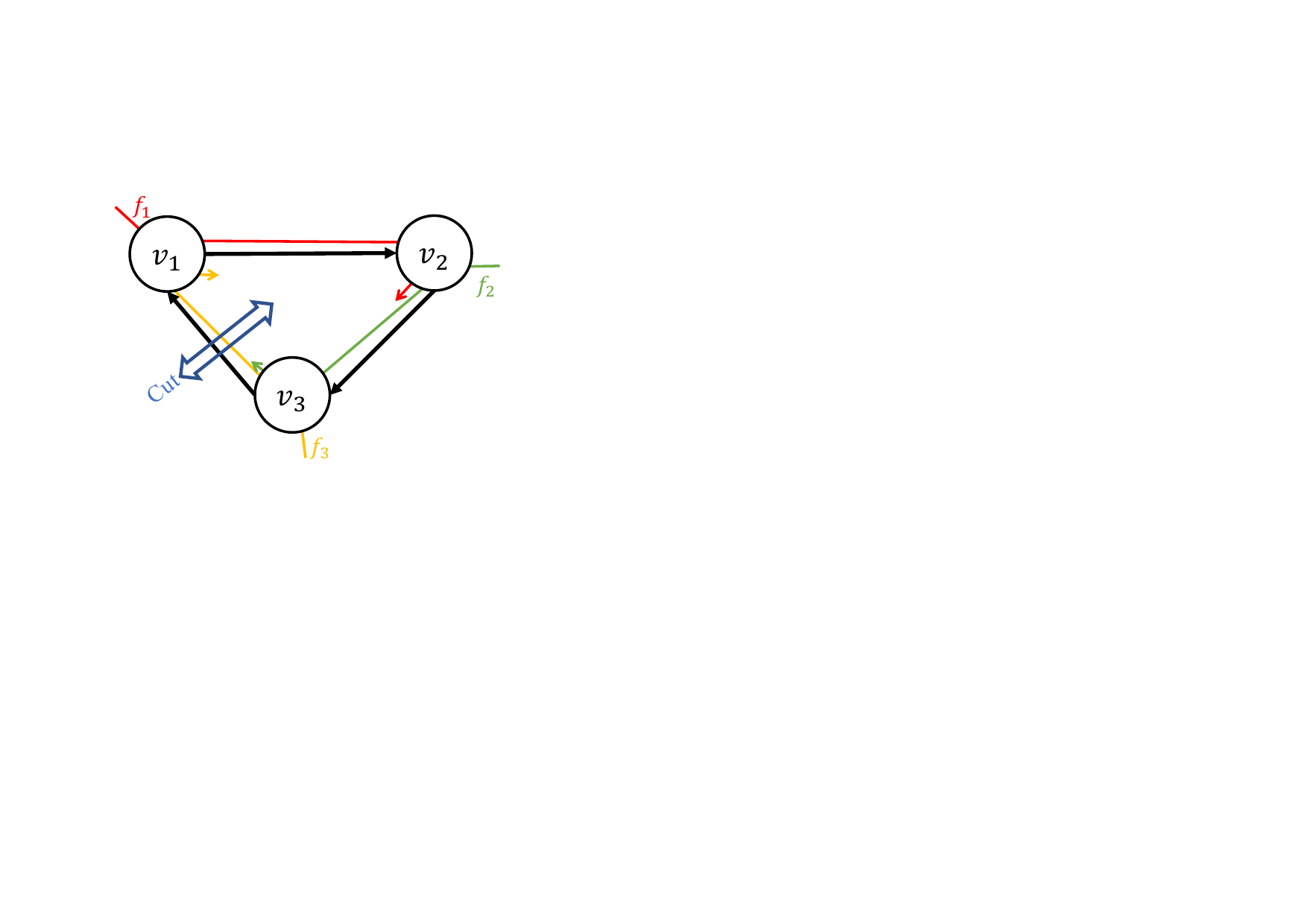}
		\caption{$\mathcal{G}_{c_1}$	}
	\end{subfigure}
	\hfill
	\begin{subfigure}[b]{0.23\textwidth}
		\centering
		% include first image
		\includegraphics[width=0.72\linewidth]{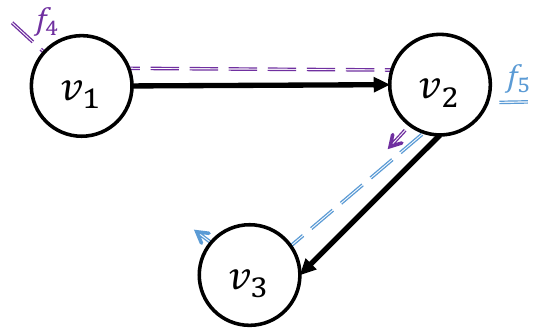}
	\caption{$\mathcal{G}_{c_2}$	 	}
	\end{subfigure}
	\caption{\sffamily \small The graphs induced by flows of class $c_1$ and $c_2$ of toy network of Fig.~\ref{fig:toyNet}, also showing the flows path. $\mathcal{G}_{c_1}$ has one cyclic dependency, $\mathcal{G}_{c_2}$ has none.}
	\label{fig:toyG}
\end{figure}

\subsection{Network Model and Resulting Graphs}
% We now describe and give the model for the the complete system.
\subsubsection{Device Model}
Devices represent switches or routers and consist of input ports, output ports, and a switching fabric. Each packet enters a device via an input port and is stored in a packetizer. A packetizer releases a packet only when the entire packet is received. Then, the packet goes through a switching fabric. A switching fabric transmits the packet to a specific output port, based on the static route of the packet. Then, the packet, based on its static class, is either queued in a FIFO-per-class queue or exits the network via a terminal port. At each non-terminal output port,  packets of flows of different classes are processed according to DRR scheduling, as explained in \ref{sec:drr}. See Fig.~\ref{fig:toyNet}. The aggregate service received by the DRR scheduler of a non-terminal output port $v\in\mathcal{V}$ is modeled by a strict service curve $B^v$, that we assume to be a rate-latency function $\beta_{R^v,T^v}$ with rate $R^v$ and a latency $T^v$ (not to be confused with the strict service curve offered by the DRR scheduler to each class).

\subsubsection{Flow Model}
  We assume that there are $n$ classes of traffic $1,\ldots, n$ in the system, and  flows are statically assigned a class and a path. Traffic generated by flows is constrained at the source. This can be modeled by means of an arrival curve $\alpha()$, i.e., a function such that the number of bits observed during any period of duration $\tau$ is at most $\alpha(\tau)$. Specifically,  each flow $f$ is constrained at source by a token-bucket arrival curve defined by $\gamma_{r_f,b_f}(t) = r_ft+b_f$ for $t>0$ and $\gamma_{r_f,b_f}(t)=0$ for $t= 0$. Having a token-bucket arrival curve with a rate $r_f$ and a burst $b_f$ means that the number of bits observed for this flow in any time interval of duration $\tau$ cannot be more than $b_f + r_f\tau$ bits.

\subsubsection{Graph induced by flows}
For every class $c$, the graph $\mathcal{G}_c = (\cV_c,\mathcal{E}_c)$ induced by flows is the directed graph defined as follows: 1) $\cV_c \subseteq \cV$ is the subset of all non-terminal output ports used by at least one flow of class $c$. 2) The directed edge $e = (v,u) \in \cE_c$ exists if there is at least  one flow of class $c$ that traverses $v$ and $u$ in this order. We say that $\mathcal{G}_c$ has a cyclic dependency if it contains at least one cycle. Let $E^{\scriptsize\textrm{cut}}_c \subseteq \mathcal{E}_c$ be a cut such that artificially removing the edges in $E^{\scriptsize\textrm{cut}}_c$ creates a tree or a forest (i.e., a collection of non-connected trees). Such cuts can be obtained by any traversal graph algorithm~\cite{cormen}.
%Edmond's algorithm \cite{edmonds1967optimum} with unit weights that provides minimum number of cuts. 
We keep the same node naming of vertices across graphs of different classes, namely, output ports of different classes that are connected to the same DRR scheduler have the same name. Let $\textrm{In}_c(v) \subset \mathcal{E}_c$ (resp. $\textrm{Out}_c(v)$) denote the set of edges of class $c$ that are incidents  at (resp. leave) node $v$. Consider the toy network of Fig.~\ref{fig:toyNet}. We assume we have two classes where $f_1$, $f_2$, and $f_3$ belong to class $c_1$ and, $f_4$ and $f_5$ belong to class $c_2$. The graph induced by flows of $c_1$ and $c_2$ as well as the flow paths are illustrated in Fig. \ref{fig:toyG}. Graph $\mathcal{G}_{c_1}$ has a cycle; the figure shows one possible artificial cut to create a tree.

% \anne{Question: What i the actual topology in the paper: do we consider the set of tree? the original topology? (In particular when computing $PLP_{backlog}$ at $Out(v)$...)}
% \hossein{My answer: If we separate $\plpb$, 1) we need to add some new variables in the shared memory: aggregate burstiness for each class at every node. 2) these variables are only used by $\drrb$. That is why I made them local variables such that whenever $\drrb$ is visited, we compute them. This contrasts other variables of the shared memory because they are used in every refinements.}

% The graph $\mathcal{G} = (\mathcal{V},\mathcal{E})$ induced by flows is a directed graph where (1) a node $v \in \cV$ is an output port, and (2) a directed edge $e = (v,u) \in \cE$ exists if there is at least  one flow that traverse $v$ and $u$ consecutively. For each class $i$, the graph induced by flows of this class is $\mathcal{G}_i = (\cV_i,\mathcal{E}_i)$ 
% %induced by flows of class $C_i$.  For each class $i$, graph $\mathcal{G}_i$ 
% and has a general topology. Let $E^{\scriptsize\textrm{cut}}_i \subseteq \mathcal{E}_i$ be a cut such that artificially removing them creates a tree or a forest (i.e., a collection of trees).
% \anne{the fonts are changing $E_i$or $\mathcal{E}_i$?}

% \subsection{Notation List}

% Please add the following required packages to your document preamble:
% \usepackage{graphicx}
\begin{table*}[]
\centering
	\caption{\sffamily \small Notation List}
\resizebox{1.5\columnwidth}{!}{%
\begin{tabular}{|l|l|}
\hline
$\cV, v$               & The set of all output ports, an output port                                          \\
$\mathcal{G}_c = (\mathcal{V}_c,\mathcal{E}_c)$               & The graph induced by flows of class $c$                                            \\
$\mathcal{V}_c$                                           & The set of all output ports of class $c$                                          \\      
%$v \in \mathcal{V}_c$                                     & An output port                                                          \\
$\mathcal{E}_c$, $e$                                          & The set of edges of class $c$, an edge of class $c$                                                      \\
%$e \in \mathcal{E}_c$                                     & An edge in network                                                      \\
$E^{\scriptsize\textrm{cut}}_c$                         & Cutset: removing $E^{\scriptsize\textrm{cut}}_c$ creates a tree or a forest for class $c$ \\
$B^v$                                 & Aggregate strict service curve offered to $v$            \\
$f$, $\alpha_f = \gamma_{r_f,b_f}$                                                    & A flow, its token-bucket arrival curve                                                     \\
%$\alpha_f = \gamma_{r_f,b_f}$                           & Flow $f$  constrained by a token-bucket arrival curve                   \\ \hline
$\textrm{In}_c(v) \subset \mathcal{E}_c$ & The set of edges of class $c$ that are incidents at node $v$                        \\
$\textrm{Out}_c(v) \subset \mathcal{E}_c$          & The set of edges of class $c$ that leave node $v$                                   \\
\hline
$z_c^e$ &
Collection of burstiness upper bounds for transit flows of class $c$ carried by edge $e$ \\
 % \begin{tabular}[c]{@{}l@{}}A vector that has an upper bound on the propagated burstiness \\ for every transit flow of class $c$ that is carried out by edge $e$\end{tabular} \\
  $z_c^E$             & Collection of $z_c^e$ such that $e \in E$               \\
$z$                                                     & Collection of $z_c^{\mathcal{E}_c}$ of all classes $c$                    \\
%$E^{\scriptsize\textrm{cut}}_c$                         &Cuts that removing them create trees for class $c$                                                                        \\
$\zcuti$ &
  %\begin{tabular}[c]{@{}l@{}}
  Upper bounds on the burstiness of %\\ 
  flows of class $c$ at cuts  $E^{\scriptsize\textrm{cut}}_c$ 
  %\end{tabular} 
  \\
$\zcut$                       & Collection of $\zcuti$ for all classes $c$ \\
$d_c^v$                                                 & Delay bound on node $v$ for class $c$                                 \\
$d_c$                                                   & Collection of delay bounds at all nodes for class $c$                       \\
$d$                                                     & Collection of $d_c$ (per-node, per-class delay bounds)                           \\
$\beta_c^v$                                             & Strict service curve offered to class $c$ at node $v$                 \\
$\beta_c$                                               & Collection of per-node strict service curve offered to class $c$ at all nodes               \\
$\beta$                                                 & Collection of  $\beta_c$  (per-node, per-class strict service curves)     \\
$b_c^v$ & bound on aggregate burstiness of flows of class $c$ that exits node $v$\\
$b_c^e$ & bound on aggregate burstiness of flows of class $c$ carried by edge $e$\\
$b$ & Collection of $b_c^v$ and $b_c^e$ for every class $c$, every edge $e$, and every node $v$  \\\hline
$\beta_{R,T}$ &  $\beta_{R,T}(t) = R[t-T]^+$, rate-latency function\\
$\betacdm$ & DRR strict service curve, no assumption on arrival curves\\
$\beta_c^{\scriptsize\textrm{nc},v}$ & Non-convex DRR strict service curve
\\ \hline
% % $\tfa_c $ & todo \\
% % $\drri\lp\alpha^v\rp $ & todo \\
% % $\lp \beta^v \rp = \drrb\lp \beta^v, b^v \rp $ & todo \\
% % $\lp \beta^v \rp = \drrb\lp \beta^v, b^v \rp $ & todo \\
\end{tabular}%
}\label{tab:not}
\end{table*}

\section{Network Calculus Background}
\label{sec:bg}
In this section, we provide the necessary background for analyzing a DRR system. In the network calculus framework, the classical method for this analysis combines two techniques: 1) the computation of strict service curves for each DRR class at each node, presented in Section~\ref{sec:drrservice}; 2) the analysis of one FIFO network per DRR class. Two methods are presented, TFA in Section~\ref{sec:tfa} and PLP in Section~\ref{sec:plp}.

\subsection{Strict Service Curves of DRR} \label{sec:drrservice}
%\jylb{We say convex and then we spend most of the section discussing a non convex sc !}
Authors in \cite{drr_ton} (extended version of \cite{drr_rtas}) find DRR strict service curves for degraded operational mode (i.e., where no assumption on the arrival curve of the interfering traffic is assumed) and for non-degraded operational mode (i.e., where some arrival curves can be assumed for the interfering traffic). They show that their 
% strict service curves and the 
resulting delay bounds dominate all previous works for single node analysis \cite{1043123,10.5555/923589,Lenzini_fullexploitation,boyer_NC_DRR,anne_drr}, and hence, are the best, proven ones.

%Here we summarize their convex strict service curves and one non-convex part of their service curves.
%, and we present a corollary to refine strict service curves of DRR using the output bursts.

\subsubsection{Degraded Operational Mode}
%: No Assumption on the Arrival Curves of Interfering Traffic} 
\label{sec:drrservice_dm}
%\subsection{Non-Convex Strict Service Curves}
% \begin{theorem}[Theorem 1 of \cite{drr_ton}: Non-convex Strict Service Curve for DRR]\label{thm:drrService}
% 	Let $S$ be a server that, shared by $n$ flows,  uses DRR, as explained in Section \ref{sec:drr}, with quantum $Q_i$ for class $i$. The server offers a strict service curve $B$ to the aggregate of the $n$ classes. For any class $i$, $\mdelta_i$ is the maximum residual deficit.
	
% 	Then, for every $i$, $S$ offers to class $i$ a strict service curve $\betancdm$ given by $\betancdm(t)=\gamma_i \lp \beta(t) \rp$ with
% 	\begin{align}
% 		\label{eq:gamma}
% 		\gamma_i (x) &= \lp \lambda_1 \otimes \nu_{Q_i,Q_{\tot}} \rp \lp  \lb x - \psi_i \lp Q_i - \mdelta_i \rp \rb^+ \rp \\ \nonumber &+ \min  ([x - \sum_{j \neq i} \lp Q_j + \mdelta_j\rp  ] ^+ , Q_i - \mdelta_i  )
% 		%		\gamma_i (x) &= \lp \lambda_1 \otimes \nu_{Q_i,Q_{\tot}} \rp \lp  \lb x - F_i \rb^+ \rp \\ \nonumber &+ \min  ([x - \sum_{j,j \neq i} Q_j + \mdelta_j  ] ^+ , Q_i - \mdelta_i  )
% 		\\
% 		\label{eq:Qtot}
% 		Q_{\tot} &= \sum_{j=1}^n Q_j
% 		%	\\
% 		%	F_i &= Q_i - \mdelta_i + \sum_{j,j \neq i}\phi_{i,j}(Q_i - \mdelta_i )
% 		\\
% 		\label{eq:psi}
% 		\psi_i(x) &\isdef x + \sum_{j,j \neq i} \phi_{i,j} \lp x \rp
% 		\\
% 		\label{eq:phi}
% 		\phi_{i,j}(x) &\isdef \left\lfloor \frac{x + \mdelta_i}{Q_i} \right\rfloor Q_j + \lp  Q_j + \mdelta_j \rp
% 	\end{align}
	
% 	Here, $\nu_{a,b}$ is the stair function, $\lambda_1$ is the unit rate function and $\otimes$ is the min-plus convolution, all described in Fig.~\ref{fig:minplus}.
% \end{theorem}
Let $v$ be a node shared by $n$ classes that uses DRR (see Section~\ref{sec:drr}), with quantum $Q_c$ for class $c$. The node offers a strict service curve $B^v$ to the aggregate of the $n$ classes. 
%For any class $i$, $\mdelta_i$ is the maximum residual deficit. 
Then, for every class $c$, node $v$ offers to class $c$ a strict service curve $\betacdm$ that is the maximum of two rate-latency functions, and hence, is piece-wise and convex; The rate and latency depend on the quanta and maximum residual deficits (see Appendix \ref{app:drr} for details).

Non-convex strict service curves of DRR can improve delay bounds when they are small, specifically if the service for a flow finishes in the first round (i.e., the flow is never backlogged at the end of each of its round of service).
% the first time that this class is visited by DRR scheduler). 
This motivates us to consider only the first non-convex part of the DRR strict service curve, say $\beta_c^{\scriptsize\textrm{nc},v}$, as it corresponds to the first service round. 
% \jylb{Do not say: we denote with $x$, instead, say: we let $x$ denote...} 
%We let denote it by 
We have $\beta_c^{\scriptsize\textrm{nc},v}(t) = \min\lp \beta_{R^v,T_1}(t), q_c^v \rp$
%Q_i - \mdelta_i\rp$, 
where $T_1$ is the maximum period of time during which no data of class $c$ can be served and %$Q_i - \mdelta_i$ 
$q_c^v$ corresponds to the minimum amount of data guaranteed for class $c$ during each round
%$\beta_i^{\scriptsize\textrm{nc},v}$ where $\beta_i^{\scriptsize\textrm{nc},v}(t) = \min\lp \beta_{c_v,\tmin_i}(t), Q_i - \mdelta_i\rp$ 
(see Fig. \ref{fig:ncService}); the exact values are given in Appendix~\ref{app:drrDM}. We use it as follows in Section~\ref{sec:iplp}: since $\beta_c^{\scriptsize\textrm{nc},v}$ is a strict service curve, we can replace any other strict service curve for class $c$, $\beta_c^v$, by $\max(\beta_c^v, \beta_c^{\scriptsize\textrm{nc},v})$, which is also a strict service curve.  
% ; hence, maximum of $\beta_i^{\scriptsize\textrm{nc},v}$ and any other available strict service curve $\beta_i^v$ is also a strict service curve. 
%We use it as follows: 
%whenever $\beta_c^v$ is another strict service curve for class $c$, then we can replace it by $\max(\beta_c^v, \beta_c^{\scriptsize\textrm{nc},v})$, which is also a strict service curve.  
% \anne{hence, if $\beta_c^v$ is a strict service curve for class $c$, then $\max(\beta_c^v, \beta_c^{\scriptsize\textrm{nc},v})$ is also a strict service curve.  }
%\jylb{I don't understand what we want to say here. Are we replacing the convex service curve by this one ? It has $0$ rate beyond $T_2$ !} \hossein{I did pass on this part}
\begin{figure}%[htbp]
	\centering
        % This file was created by matlab2tikz.
%
%The latest updates can be retrieved from
%  http://www.mathworks.com/matlabcentral/fileexchange/22022-matlab2tikz-matlab2tikz
%where you can also make suggestions and rate matlab2tikz.
%
\definecolor{mycolor1}{rgb}{0.00000,0.44700,0.74100}%
\definecolor{mycolor2}{rgb}{0.85000,0.32500,0.09800}%
\definecolor{mycolor3}{rgb}{0.92900,0.69400,0.12500}%
\begin{tikzpicture}
\begin{axis}[%
width=2.7in,
height=1.7in, %754in,
at={(0in,0in)},
% scale only axis,
xmin=0,
xmax=200,
xlabel style={font=\color{white!15!black}},
xlabel={Time ($\mu s$)},
ymin=0,
ymax=180,
ylabel style={font=\color{white!15!black}},
ylabel={Bits},
axis background/.style={fill=white},
axis x line=bottom,
axis y line=left,
legend style={legend cell align=left, align=left, draw=white!15!black},
        xtick = {50,75},
        xticklabels = {$T_1$,$T_2$},
                ytick = {100},
        yticklabels = {$q_c^v$},
% xticklabels={},
% yticklabels={},
% extra x ticks={45},
% extra x tick labels={$T_1$},
]

% \draw [-> , thick, color = blue] (50,70) -- (60,80)node[pos=0.5, above, , yshift=0.275cm] {slope $c$};
 \draw [<-> , dashed, color = blue] (0,35) -- (60,35)node[pos=0.5, above,  yshift=0.275cm] {$d^{\scriptsize\textrm{nc},v}_c$};
  \draw [<-> , dashed, color = red] (0,30) -- (85,30)node[pos=0.5, below,  yshift=0.05cm] {$d^{\scriptsize\textrm{c},v}_c$};
\draw [-> , thick, color = blue] (65,110) -- (70,100)node[pos=0.5, left] {$\beta^{\scriptsize\textrm{nc},v}$};
\draw [-> , thick, color = red] (130,120) -- (135,110)node[pos=0.5, left] {$\beta^{\scriptsize\textrm{c},v}$};
\draw [-> , thick, color = black] (170,60) -- (165,75)node[pos=0.5, right] {$\alpha^v_c$};

\draw [dotted, color = black] (75,100) -- (75,0);
\draw [ dotted, color = black] (75,100) -- (0,100);

% \draw (45,-5) node{$T_1$};
% \draw (85,-5) node{ \small{$T_2 = T_1 + \frac{q_c^v}{R^v}$}};
%\draw (35,110) node{\small $c_v(T_2 - T_1) = Q_i-\mdelta_i$};
% \draw (5,110) node{$q_c^v$};

\addplot [color=blue, line width=1.5pt ]
table[row sep=crcr]{%
    0      0\\
    50      0\\
    75      100\\
    200     100\\
};

\addplot [color=red, line width=1pt ]
table[row sep=crcr]{%
    0      0\\
    50      0\\
    90      30\\
    130    90\\
    200    250\\
};

\addplot [color=black, line width=1pt ]
table[row sep=crcr]{%
    0      0\\
    0      40\\
    200    90\\
};

\end{axis}
\end{tikzpicture}%
	\caption{\sffamily \small Non-convex part of a service curve $\beta^{\scriptsize\textrm{nc},v}(t) = \min\lp \beta_{R^v,T_1}(t),q_c^v\rp$ and $T_2 = T_1 + \frac{q_c^v }{R^v}$ at some node $v$ and for some class $c$. The convex function $\beta^{\scriptsize\textrm{c},v}$ is also a valid service curve, so is the maximum of $\beta^{\scriptsize\textrm{c},v}$ and $\beta^{\scriptsize\textrm{nc},v}$. The figure shows an arrival curve $\alpha_c^v$ to illustrate that the bound obtained with the non-convex service curve, $d^{\scriptsize\textrm{nc},v}_c$, is better than that of the convex service curve, $d^{\scriptsize\textrm{c},v}_c$,  when the delay bound is small. %Both Also, the figure illustrates  a convex strict service curve $\beta^{\scriptsize\textrm{nc},v}$ and an arrival curve $\alpha_c^v$ for this class. Then, the obtained bound using the non-convex service curve, $d^{\scriptsize\textrm{nc},v}_c$, is better than that of the convex service curve, $d^{\scriptsize\textrm{c},v}_c$,  when the delay bound is small.
 %\beta_{c_v,T_1}(T_2)\rp$.
	%, where $T_1 = T_v + \tmin_i$ and $T_2 = \tmin_i + \frac{Q_i-\mdelta_i}{c_v}$.
		% it also shows the fixed-point strict service curve $\beta_i^0^*$ of Theorem \ref{thm:optDrrService} by assuming $\alpha_j(t) =\lmax_j \lceil \frac{t}{512} \rceil$ for $j \neq i$ (note that $\beta_i^0^*$  is not a rate-latency function)
		%	\jylb{
		%the figure does not show the convex service curve	}
		%\ihossein{done}
	}
	\label{fig:ncService}
\end{figure}
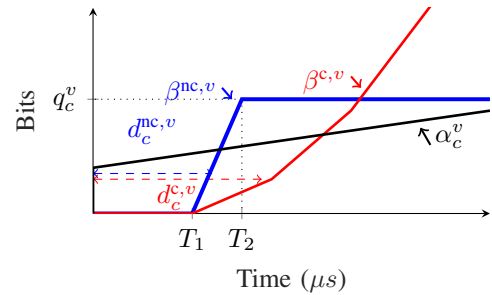

\subsubsection{Non-Degraded Operational Mode}\label{sec:drrservice_ndm}
%: Some Arrival Curves can be assumed for Interfering Traffic}
When some arrival curves can be assumed for the interfering classes, the service received by the class of interest can be improved. Authors in \cite{drr_ton} present a method that starts from service curves with no assumption on the interfering traffic (i.e., $\betacdm$ explained in Section \ref{sec:drrservice_dm}), and iteratively improves them by taking into account the arrival curve constraints of interfering traffic. 
% \hossein{I have removed the theorems}
% \jylb{I find it confusing to give it a new theorem number (Theorem 5) when in reality we are citing existing work.}
% \hossein{Modfied it.}
% See Theorem~5 and Algorithm~2 of \cite{drr_ton} for more details.  
We call 
%this method $\beta^v = \lp\beta_1^v,\ldots,\beta_n^v \rp = 
% $\drri_v\lp\alpha_1^v,\ldots,\alpha_n^v \rp$ 
$\drri_v\lp\alpha^v\rp$ 
the method that computes a collection of strict service curves for each class given $\alpha^v$, input arrival curves of all classes at node $v$ (See Appendix~\ref{app:drrNDM} for more details.).

We also use \eqref{eq:betamJ} in Appendix~\ref{app:drrNDM} (same as Lemma 2 in \cite{drr_ton}), which improves the strict service curves of DRR, given \emph{output}  arrival curves of every class. Specifically, we apply it when 
% We also use an application of Theorem \ref{thm:map} in Appendix \ref{app:drrNDM}, which improves the strict service curves of DRR, given \emph{output}  arrival curves of every class. \jylb{This is even more confusing. What is Theorem 6 ? A new result ? If so it should not appear in appendix. If it is not new, we should present it differently.} Specifically, we apply it when 
% We also use Lemma 2 in \cite{drr_ton}, which improves the strict service curves of DRR, given \emph{output}  arrival curves of every class. Specifically, we apply it when 
% This result can be simplified  in a specific case that is used later in this paper: Assume that
every class $c$ has a token-bucket arrival curve at the output, say $\gamma_{r_c,b_c^v}$, and a known strict service curve $\beta^v_c$. 
We call
%$\newservice  =  \drrb \lp b^*, \oldservice \rp$
$\drrb_v \lp \beta^v ,b^v\rp$ the function that implements \eqref{eq:betamJ} in Appendix~\ref{app:drrNDM}
% Lemma 2 in \cite{drr_ton}
and returns an improved collection of strict service curves for all classes at node $v$. 
% (See \eqref{eq:betamJ} in Appendix~\ref{app:drrNDM} for details). %\jylb{fix inconsistent notation with algo 1} 
In the above, $\beta^v$ and $b^v$ are the collection of $\beta_c^v$ strict service curves and collection of $b_c^v$ output burstiness of all classes at node $v$. 

%We call this operation 
%$\newservice  =  \drrb \lp b^*, \oldservice \rp$
%$\drrb \lp \beta^v ,b^v\rp$.
%
%Let $\beta^v$ and $b^v$ be the collection of $\beta_c^v$ strict service curves and collection of $b_c^v$ output burstiness of all classes at node $v$.
%$\oldservice_i$. % is known for class $C_i$.
%Lemma 2 in gives an improved service curve %$\newservice_i$ 
%for each class given output burstiness of other classes.
In the common case where the aggregate strict service curve is a rate-latency function say $\beta_{c_v,T_v}$, the obtained strict service curves are the maximum of a finite number of rate-latency functions.
% : the latency of such functions is a linear function of output burstinesses $b_c^v$, and the rates depend only on arrival rates $r_c$. 
 Specifically, consider  class  $c$ and partition other classes into two arbitrary sets $J$ and $\Bar{J}$. Then, let $r^{v,\bar{J}} =  \sum_{c' \in \bar{J}}r_{c'}$ and  $b^{v,\bar{J}} = \sum_{c' \in \bar{J}}b_{c'}$, i.e., $r^{v,\bar{J}} $ and $b^{v,\bar{J}}$ are the aggregated arrival rate and aggregated output burstiness bound of interfering classes in $\bar{J}$. Then, two rate-latency functions $\beta_{{\rmaxc_c }^J,{\tmaxc_c }^J} $ and $\beta_{{\rminc_c}^J,{\tminc_c}^J}$ can be computed for class $c$ with 
 \begin{align}
     {\rmaxc_c}^J &= (c_v - r^{v,\bar{J}}){\rmax_c }^J 
     \label{eq:ratemax}\\
     {\rminc_c}^J &= (c_v - r^{v,\bar{J}}){\rmin_c}^J
     \label{eq:ratemin}\\
          {\tmaxc_c }^J &=\frac{c_vT_v +{\tmax_c }^J + b^{v,\bar{J}} }{(c_v - r^{v,\bar{J}})} 
     \label{eq:latencymax}\\
     {\tminc_c }^J &=\frac{c_vT_v + {\tmin_c }^J + b^{v,\bar{J}} }{(c_v - r^{v,\bar{J}})}  
     \label{eq:latencymin}
 \end{align}
 Values of ${\rmax_c }^J$, ${\tmax_c }^J$, ${\rmin_c}^J$, and ${\tmin_c }^J$ depend on the quantum and the maximum residual deficits  of class $c$ and interfering classes in $J$; the exact values are given in Appendix \ref{app:drrNDM}. Thus, any choices of sets $J$ and $\Bar{J}$ results in two rate-latency functions for class $c$; function $\drrb_v \lp \beta^v ,b^v\rp$ computes the maximum of rate-latency functions obtained by all possible choices of sets $J$ and $\Bar{J}$. Observe that the latencies of the above functions, ${\tmaxc_c }^J$ and ${\tminc_c }^J$  in \eqref{eq:latencymax} and \eqref{eq:latencymin}, are  linear functions of output burstinesses of interfering classes $b_c^v$, and the rates, ${\rmaxc_c}^J$ and ${\rminc_c}^J$  in \eqref{eq:ratemax} and \eqref{eq:ratemin}, only depend  on arrival rates $r_c$.
 % ${\rmax_c}^J = \frac{(c_v - r^{v,\bar{J}})Q^v_c}{Q_c  + \sum_{c'\in J} Q_{c'} }$ and ${\rmin_c}^J &= \frac{((c_v - \sum_{c' \in \Bar{J}}r_{c'}))Q^v_c - \mdelta_c}{Q^{J,c}_\tot - \mdelta_c}$

 % let $N_c = \{c_1,c_2,\ldots,c_n\}\setminus\{c\}$, and consider a $J \subseteq N_c$ and let $\bar{J} =  N_c \setminus J$. Then, two rate-latency functions $\beta_{{\rmax_c }^J,{\tmax_c }^J} $ and $\beta_{{\rmin_c}^J,{\tmin_c}^J}\rp$ can be computed for class $c$ with ${\rmax_c}^J = \frac{(c_v - \sum_{c' \in \Bar{J}}r_{c'})Q^v_c}{Q_c  + \sum_{c'\in J} Q_{c'} }$ and ${\rmin_c}^J &= \frac{((c_v - \sum_{c' \in \Bar{J}}r_{c'}))Q^v_c - \mdelta_c}{Q^{J,c}_\tot - \mdelta_c}$}
% \begin{align}
%     \beta_{{\rmax_c }^J,{\tmax_c }^J} 
% \end{align}

	% \begin{align}
	% 	\label{eq:gammaMaxrateJ}
	% 			{\gammacon}^J &= 	 \max\lp\beta_{{\rmax_c }^J,{\tmax_c }^J} , \beta_{{\rmin_c}^J,{\tmin_c}^J}\rp\\
	% {\rmax_c }^J&= \frac{Q^v_c}{Q^{J,c}_\tot},~ {\tmax_c }^J= \sum_{c' \in J}\lp Q^v_{c'} + \mdelta_{c'}   +\frac{Q^v_{c'}}{Q^v_c} \mdelta_c  \rp\\
	% {\rmin_c}^J &= \frac{Q^v_c - \mdelta_c}{Q^{J,c}_\tot - \mdelta_c},~  {\tmin_c }^J= \sum_{c' \in J}\lp Q^v_{c'}  + \mdelta_{c'}  \rp		\\
	% 		\label{eq:QtotJ}
	% 	Q_{\tot}^{J,c} &=Q_c  + \sum_{c'\in J} Q_{c'} 
	% \end{align}
%\jylb{What is $\drrb$ ? where defined ? Is it new ? Is it from \cite{drr_ton} ? }
\subsection{Total Flow Analysis (TFA)} \label{sec:tfa}

Total Flow Analysis (TFA) \cite{tfa,tfa++,sync_TFA} is a method to conduct worst-case analysis in a FIFO network. In a per-class network where a service curve is known for every class at every node, one instance of TFA is run per class, and it outputs per-node  delay bounds as well as propagated burstiness for flows. %TFA is simple yet it can consider several important features such as the effect of packetizer and line-shaping. I
%It needs to know the arrival curves of flows at sources and offered service curves to each node, and it can be applied to a network with general topology. 
If the graph induced by flows is feed-forward (i.e, cycle-free), for each node in a topological order, a delay bound and output burstiness bounds of flows are computed: the output burstiness bounds at a node are used as input by its successors in the induced graph. 
%TFA first computes network calculus delay bound and output burstiness bounds of flows at edge-nodes (such nodes exists as the graph is cycle free). The output burstiness are then used as input for the following nodes. 
%Thus, TFA iteratively computes delay bounds for all nodes and burstiness bounds for all edges of the network. 
Else if the graph induced by flows has cyclic dependencies, no topological order can be defined  and % edge-nodes do not exist and 
a fixed point must be computed, using an iterative method \cite{sync_TFA}. If the iteration converges to a finite value for all delay and burstiness bounds, % bounds for all nodes and burstiness bounds for all edges of the network, 
then the network is stable and the computed bounds are valid. Otherwise, TFA diverges and the network might be truly unstable or not. 

All versions of TFA (specifically, FPTFA, SyncTFA, AsyncTFA, and AltTFA) are equivalent, i.e., they give the same bounds and stability regions \cite{sync_TFA}. We let  $(d_c,z_c) = \tfa_c \lp\beta_c\rp $ denote any version of TFA that computes per-node delay bounds and bounds on propagated burstiness for class $c$, given per-node  strict service curves $\beta_c$. We always apply TFA to the original (uncut) graph.

In networks with cyclic dependencies, there is a two-way dependency between TFA and DRR strict service curves: TFA needs to know DRR strict service curves a priori, however, DRR strict service curves depend on burstiness bounds of flows at the output of a node, which is a result of TFA. Specifically, on the one hand, DRR strict service curves of node $v$ depend on $b^v$,  the collection of $b_c^v$ output burstiness bound of all classes at node $v$ (see Section \ref{sec:drrservice_ndm}). On the other hand, $b^v$ is obtained from bounds on the propagated burstiness bounds $z_c$, which is computed by TFA (i.e., $(d_c,z_c) = \tfa_c \lp\beta_c\rp $) and requires knowing $\beta_c$, the collection of DRR service curve for class $c$ at every node $v$. \cite{drr_ton} avoids this two-way dependency between TFA and DRR by restricting the analysis only to feed-forward networks.

As of today, for networks with cyclic dependencies, the only TFA solution that exists in the state-of-the-art %(other than our TFA-DRR which we present in Section~\ref{sec:initPhase}) 
is to use DRR strict service curves in the degraded operational mode (see Section~\ref{sec:drrservice_dm}) which only depend on the assigned quantum and maximum packet size of every class and hence can be computed for all classes at all nodes a priori to TFA; thus, per-class networks are independent and can be analyzed separately (i.e., the network is sliced into some per-class networks), and one instance of TFA  can be run per-class to obtain bounds; we call this method %\emph{
TFA-SOA.
%}.%; %and is the best, proven bounds in the literature; 
%see Fig.~\ref{fig:sot}.
% ; when the network is feed-forward, for each node in topological order, output burstiness bound and DRR strict service curves are computed.
% As a minor contribution, we propose an iterative method in Algorithm~\ref{alg:initPahse}, and prove its validity in Theorem~\ref{thm:tfa}, to mitigate this two-way dependency, and it serves as the initial phase of our main method.

As a first step, we propose an iterative method in Algorithm~\ref{alg:initPahse}, called TFA-DRR, and prove its validity in Theorem~\ref{thm:tfa}; it combines DRR strict service curve in non-degraded operational mode (i.e., where some arrival curves can be assumed for the interfering traffic) and TFA, and it serves as the initial phase of our main method (see Fig.~\ref{fig:sot}).

% \bleu{As of today, the only solution in the state-of-the-art (other than our TFA-DRR) is to use DRR strict service curves in the degraded operational mode (see Section~\ref{sec:drrservice_dm}) which only depend on the assigned quentua and maximum packet sizes and hence can be computed for all classes at all nodes; thus, per-class networks are independent and can be analyzed separately (i.e., the network is sliced into some per-class networks), and one instance of TFA  can be run per-class to obtain bounds; we call this method \emph{TFA-SOT} and is the best, proven bounds in the literature; see Fig.~\ref{fig:sot}.}

\subsection{Polynomial-size Linear Programming (PLP)} \label{sec:plp}

\begin{figure*}[htbp]
	\centering
	\includegraphics[width=0.85\textwidth]{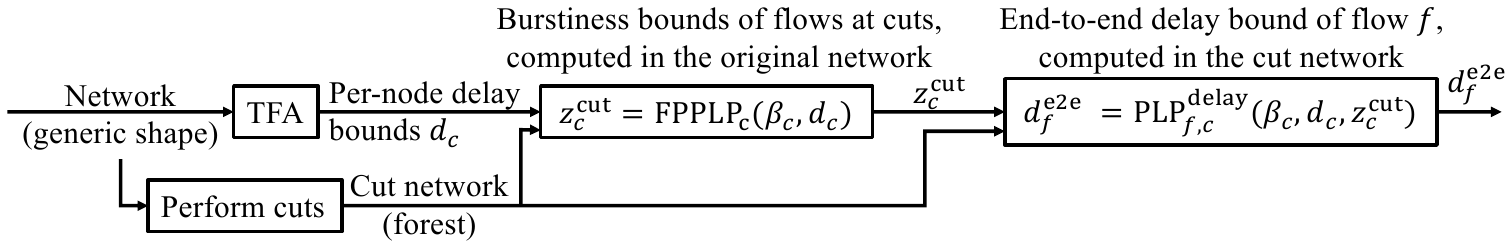}
	\caption{\sffamily \small Overview of PLP analysis for the FIFO-per-class network of some class $c$ (see items 1)-3) in Section~\ref{sec:plp}).}
	\label{fig:plp_overview}
\end{figure*} 

Polynomial-size linear program (PLP) computes end-to-end delay bounds in FIFO networks~\cite{plp}, so one instance of PLP is run per class. It requires piece-wise linear convex service curves. 
%Polynomial-size linear program (PLP) computes end-to-end delay bounds in FIFO networks with the convex service curves~\cite{plp}. In a per-class network where a convex service curve is known for every class at every node, one instance of PLP is run per class. 
PLP improves the bounds and stability region compared to TFA while remaining tractable. %It uses several linear programs.
The definition of linear programs is straightforward for tree topologies. The analysis of general topologies requires first making some cuts in the induced graph to create a forest (i.e., one or several non-connected trees). The analysis has three steps (Fig.~\ref{fig:plp_overview}): 

% \begin{enumerate}
%     \item TFA analysis to obtain per-node delay bounds $d_c$;

%     \item Then, output burstiness bounds at the cuts are computed.  %as a fixpoint of a function. In~\cite{plp}, this fixed point is computed 
% by solving one single linear program, which is equivalent to computing a fixpoint. We call $\plpcut_c(\beta_c, d_c)$ the algorithm that computes burstiness bounds of flows at cuts given strict service curves and per-node delay bounds of class $c$;

%     \item One linear program per flow of interest obtains an end-to-end delay or backlog bound; we call $\plpd_{f, c}( \beta_c, d_c,\zcutc)$ (resp. $\plpb_{f,c}(\beta_c, d_c,\zcutc) $) the algorithm that computes the end-to-end delay (resp. backlog) bound  of flow $f$ belonging to class $c$ given output burstiness bounds at the cuts, strict service curves and per-node delay bounds of class $c$.
% \end{enumerate}

1)  TFA analysis to obtain per-node delay bounds $d_c$;

2) Then, output burstiness bounds at the cuts are computed.  %as a fixpoint of a function. In~\cite{plp}, this fixed point is computed 
by solving one single linear program, which is equivalent to computing a fixpoint. We call $\plpcut_c(\beta_c, d_c)$ the algorithm that computes burstiness bounds of flows at cuts given strict service curves and per-node delay bounds of class $c$;

3) One linear program per flow of interest obtains an end-to-end delay or backlog bound; we call $\plpd_{f, c}( \beta_c, d_c,\zcutc)$ (resp. $\plpb_{f,c}(\beta_c, d_c,\zcutc) $) the algorithm that computes the end-to-end delay (resp. backlog) bound  of flow $f$ belonging to class $c$ given output burstiness bounds at the cuts, strict service curves and per-node delay bounds of class $c$.
 We use $\plpcut$ as is and use improved versions of $\plpd$ and $\plpb$, as explained in Section~\ref{sec:2imp}.
 %
 %We denote the three PLP  methods as follows: 1) $\plpcut_i(\beta_i, d_i)$ computes the valid bounds of the burstiness of flows at cuts given strict service curves and delay bounds of class $i$ for each node $v$; 2) (resp. 3))  $\mathrm{PLP}_{f, i}^{\mathrm{delay}}( \beta_i, d_i,z^{E_i^{\mathrm{cut}}})$ (resp. $\plpb_{f,i}(\beta_i, d_i,z^{E_i^{\mathrm{cut}}}) $) computes the end-to end delay (resp. backlog) bound  of flow $f$ belonging to class $i$ given output burstiness bounds at the cut, strict service curves and per-node delay bounds at each node. We will use 1) and we will improve on 2) and 3).
%Then, PLP can be applied to the cut network, which is a tree, given bounds on burstiness of flows at cuts. In this tree network, PLP computes end-to-end backlog bound and delay bound for a flow of interest. 
%Hence, we have two kinds of PLP operations: First, a PLP on the original graph induced by flows that computes bounds on the burstiness of flows at cuts. Second, given bounds on the burstiness of flows at cuts, a PLP in the tree graph included by flows to obtain end-to-end bounds.
%
As PLP uses per-node delay bounds computed by TFA, the end-to-end bounds are always better than with TFA.  
% PLP takes as inputs per-node delay bounds computed by generic TFA. This because PLP is a refinement method that given some per-node delay bonds, it can compute better end-to-end bounds (i.e., better than the sum of all per-node delay bounds in the path of flow). If per-node delay bounds are not included in PLP, the method still applies and some bounds are computed; however, there might not be as good. Also, 
 In a network with cyclic dependencies, it is possible that TFA diverges and hence the per-node delay bounds be infinite. In this case, the constraints used by PLP that involve infinite per-node delay bounds are simply always satisfied and %In spite of this, PLP generally finds delay bounds that are better than with TFA  
 %Then, adding infinite delay bounds to PLP is equivalent to not adding them at all (i.e., they become dummy constraints), and 
 PLP might or might not compute finite end-to-end bounds. In general, though, PLP finds a larger stability region than TFA, i.e., it often finds finite delay bounds when the TFA per-node delay bounds are infinite. 
 
 A detailed background on these linear programs is presented in Appendix \ref{app:plp}.

 Combining PLP and DRR strict service curves require more adaptation compared to TFA: Similar to TFA, there is a two-way dependency between DRR strict service curves and $\plpb$. Specifically, on the one hand, DRR strict service curves of node $v$ depend on 
 $b^v$,  the collection of $b_c^v$ output burstiness bound of all classes at node $v$ (see Section \ref{sec:drrservice_ndm}). On the other hand, $b^v$ is obtained using $\plpb$ which requires knowing $\beta_c$ (see item 3) in the above), the collection of DRR service curve for class $c$ at every node $v$. Thus, this creates a level of iteration between  collection  $b$ and collection $\beta$. Also, $\plpb$  uses the collection of per-node delay bounds $d$ (see item 3) in the above)  which depends on both DRR strict service curves $\beta$ and  burstiness bounds of flows at the input of nodes,  obtained using $\plpb$; this imposes another level of iteration. Moreover, $\plpb$ requires cuts and bounds on the burstiness of flows at cuts, $\zcut$ (see item 3) in the above), is obtained using $\plpcut$ where $\plpcut$ requires knowing DRR strict service curves $\beta$ and per-node delay bounds $d$ (see item 2) in the above); this imposes yet another level of iteration. Thus, we have a collection of DRR service curves $\beta$, a collection of per-node delay bounds $d$, a collection of output bound for flows at cuts $\zcut$, and a collection of burstiness bounds $b$ at the input and output of nodes, and we have different functions such as DRR strict service curves, $\plpb$, $\plpcut$, etc.  that each uses some values of these collections and improves some other values, hence, imposing different levels of iteration; it is not clear how to combine them.

 As of today, the only PLP solution that exists in the state-of-the-art %(other than our PLP-DRR which we present in Section~\ref{sec:genMethod})  
 is to use DRR strict service curves in the degraded operational mode (see Section~\ref{sec:drrservice_dm}) which only depend on the assigned quantum and maximum packet size of every class and hence can be computed for all classes at all nodes a priori to PLP; thus, per-class networks are independent and can be analyzed separately (i.e., the network is sliced into some per-class networks), and one instance of PLP can be run per-class to obtain bounds; we call this method PLP-SOA; % and is the best, proven bounds in the literature; 
 see Fig.~\ref{fig:sot}.

 We first perform the necessary adaptation of PLP to DRR by computing burstiness bounds per-class and per-output aggregate and by enabling PLP to support non-convex service curves. We then propose a generic method, PLP-DRR, in Section \ref{sec:genMethod}, for combining all these iterations sequentially and in parallel. We show, in Theorem~\ref{thm:genMethod}, that obtained bounds using our method are always valid even before convergence. Also, we show that, at convergence, the bounds are the same regardless of how iterations are combined. Lastly, we present two concrete implementations, using  a distributed computing model with shared memory (see Fig.~\ref{fig:gen_method} and Fig.~\ref{fig:versions}).

\begin{figure*}%[htbp]
	\centering
		
	\tikzset{
fnc/.style={
	rectangle,
	rounded corners,
	draw=black, very thick,
	minimum height=2em,
	text centered,
	fill=white,
	font=\scriptsize},
mem/.style={
	rectangle,
	%rounded corners,
	draw=black, very thick,
	minimum height=2em,
	text centered,
	font=\scriptsize},
ref/.style={
	%rectangle,
	%rounded corners,
	%draw=black, very thick,
	%text width=6.5em,
	minimum height=2em,
	text centered, 
	font=\bfseries\scriptsize},
	arc/.style={font=\scriptsize, sloped},
each/.style={
minimum height=2em,
	text centered,
	font=\tiny},}

\begin{tikzpicture}
	\node[fnc, text width=49.5em, text height=7.5em, dotted , ultra thin] (refine) at (6.75, 2.75) {};
	\node[fnc, text width=8em] (init) at (0, 0) {TFA analysis (Algo 1)};
	\node[ref] at (0, 0.5) {initial phase, Sec.~\ref{sec:initPhase}};
	\node [mem, text width=15em] (memory) at (6.6, 0) {{\bf shared memory} $(\beta, d, \zcut, b)$ collection of service curves, per-node delays and bursts} ;
	\node[fnc, text width=10em] (final) at (13, 0) {End-to-end delay bounds $\dend_f = \iplpd_{f,c}(\beta, d, \zcut)$};
	\node[ref] at (13, 0.6) {post-process phase, Sec.~\ref{sec:postPhase}};
	
	\node[fnc, text=white] (fpplpa) at (0.3, 2.3) {$\zcuti = \plpcut_c(\beta_c, d_c)$};	
	\node[fnc, text=white] (fpplpb) at (0.1, 2.1) {$\zcuti = \plpcut_c(\beta_c, d_c)$};
	\node[fnc] (fpplp) at (0, 2) {$\zcuti = \plpcut_c(\beta_c, d_c)$}; 
	\node[ref] at (0.3, 2.8) {bursts at cuts, Sec.~\ref{sec:plp}};
	\node[each] at (-1.6, 2.5) {\rotatebox{45}{per class $c$}};
	
	\node[fnc, text=white] (plpback) at (4.3, 3.3) {$b^g_c = \plpb_{g,c} ( \zcuti , \beta_c, d_c)$};
	\node[fnc, text=white] (plpback) at (4.1, 3.1) {$b^g_c = \plpb_{g,c} ( \beta_c, d_c,\zcuti)$};
	\node[fnc] (plpback) at (4, 3) {$b^g_c = \plpb_{g,c} ( \beta_c, d_c,\zcuti)$};
	\node[ref] at (4.3, 3.8) {aggregated bursts, Sec.~\ref{sec:plpBack}};
	\node[each] at (1.8, 3.5) {\rotatebox{45}{per class $c$,}};
	\node[each] at (2.1, 3.5) {\rotatebox{45}{per node/edge $g$}};
	
	\node[fnc, text=white] (drr) at (9.3, 3.3) {$\beta^v  = \drrb_v ( \beta^v ,b^v )$};
	\node[fnc, text=white] (drr) at (9.1, 3.1) {$\beta^v  = \drrb_v ( \beta^v ,b^v )$};
	\node[fnc] (drr) at (9, 3) {$\beta^v  = \drrb_v ( \beta^v ,b^v )$};
    \node[ref] at (9.3, 3.8) {DRR strict service curve, Sec.~\ref{sec:drrservice_ndm}};
    \node[each] at (6.8, 3.5) {\rotatebox{45}{per node $v$}};
    
    \node[fnc, text=white] (delay) at (13.3, 2.3) {$d_c^v = \pd_{v,c}(  \beta^v ,b^v)$};
    \node[fnc, text=white] (delay) at (13.1, 2.1) {$d_c^v = \pd_{v,c}(  \beta^v_c,b_c)$};
	\node[fnc] (delay) at (13, 2) {$d_c^v = \pd_{v,c}( \beta_c^v ,b_c^v)$};
	\node[ref] at (13.3, 2.8) {per-node delay, Sec.~\ref{sec:refinePhase}};
	\node[each] at (15.1, 1.9) {\rotatebox{45}{per class $c$,}};
	\node[each] at (15.2, 1.7) {\rotatebox{45}{per node $v$}};
	
	\draw[->, arc] (init) -- (memory) node[pos=0.5, above=-0.1cm] {$(\beta, d, \zcut, b)$};
	\draw[->, arc] (memory) -- (final)  node[pos=0.5, above=-0.1cm] {$(\beta, d, \zcut)$};
	
	\draw[arc, ->] ([yshift=-0.1cm] fpplp.east) -- (memory) node[pos=0.8, above = -0.1cm] {update $\zcuti$};
	\draw[arc, ->] ([xshift=-1.8cm]memory.north) -- ([xshift=1.3cm]fpplp.south) node[pos=0.5, below=-0.1cm] {read $(\beta_c, d_c)$};
	
	\draw[arc, ->] ([xshift=-0.1cm] plpback.south) -- ([xshift=-0.7cm] memory.north) node[pos=0.3, below = -0.1cm] {update $b^g$};
	\draw[arc, ->] (memory) -- (plpback) node[pos=0.5, above=-0.1cm] {read $(\beta_c, d_c, \zcuti)$};

	\draw[arc, ->] ([xshift=0.1cm] drr.south) -- ([xshift=0.7cm] memory.north) node[pos=0.2, below = -0.1cm] {update $\beta_v$};
	\draw[arc, ->] (memory) -- (drr) node[pos=0.67, above=-0.1cm] {read $(\beta^v, b^v)$};
	
	\draw[arc, ->] ([yshift=0.1cm] delay.west) -- ([xshift=1.5cm] memory.north) node[pos=0.8, above = -0.1cm] {update $d_c^v$};
	\draw[arc, ->] ([xshift=2.1cm]memory.north) -- ([yshift=-0.2cm]delay.west) node[pos=0.4, below=-0.1cm] {read $(\beta_c^v, b_c^v)$};
	
	\node[ref] at (6.5, 4.3) {refinement phase, Sec.~\ref{sec:refinePhase}};
	
\end{tikzpicture}
		\caption{\sffamily \small 
		%\jylb{Put a dotted box around the four top boxes and indicate ``refinement phase"}
		Overview of the method.
		The refinement phase consists in applying, in any order, any of the four types of refinement blocks shown on the top of the figure, which each improve the bounds stored in the shared memory. The refinement phase may be stopped using any criterion, such as convergence of the variables in the shared memory or a timeout. If stopped at convergence, the value of the shared memory is always the same, regardless of the order of the refinements.
		%We consider a shared memory  that contains $(\beta, d, \zcut, b)$. It is initialized by the bounds obtained at the initial phase in Section \ref{sec:initPhase} when using TFA. Then, at refinement phase of Section \ref{sec:refinePhase}, number of refinements run in parallel; each read some components of the shared memory and update some. Each refinement is infinite times visited, until the shared memory converges to the fixed point. Then, final values of the shared memory are post-processed in Section \ref{sec:postPhase} and end-to-end delay bounds are obtained for flows of interest. 
		}
		\label{fig:gen_method}
\end{figure*}
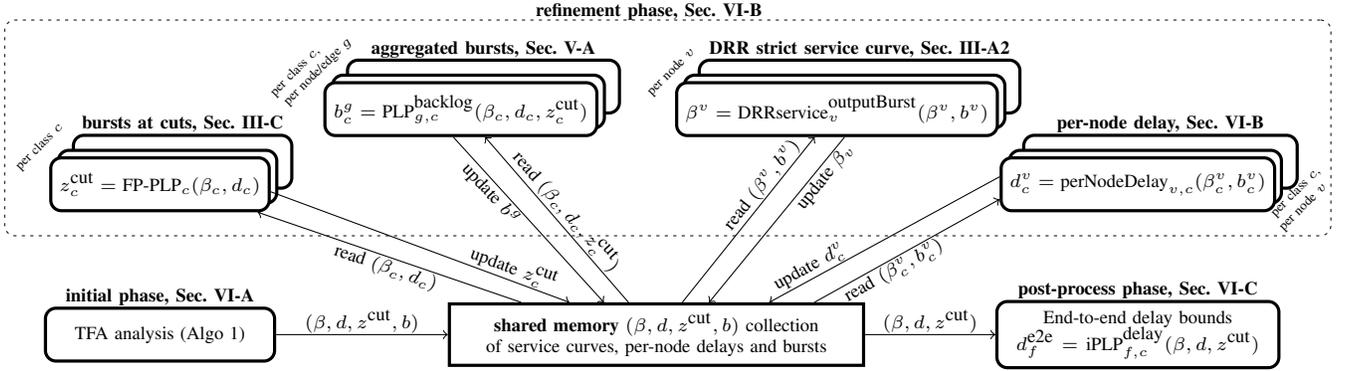

\section{Overview of the Proposed Method: PLP-DRR}
\label{sec:general_idea}
%\jylb{This now overlaps with the introduction}
Our method, called ``PLP-DRR", applies the PLP methodology to DRR and is illustrated in Fig.~\ref{fig:gen_method}. The starting point is the collection of per-class graphs and a cutset. We use the following notations:

% \begin{itemize}
%     \item $\beta = (\beta_c^v)_{1\leq c \leq n, v\in\mathcal{V}}$, is a valid collection of strict service curves of each class $c$ at each node $v$,%; all components of $\beta$ are finite.

%     \item $d = (d_c^v)_{1\leq c \leq n, v\in\mathcal{V}}$, is a valid collection of per-node delay bounds of each class $c$ at each node $v$,%; components of $d$ might be infinite.

%     \item $\zcut$, is a valid collection of output burst bounds for each flow at every edge of the cutset,%; components of $\zcut$ might be infinite.

%     \item $\zcut$, is a valid collection of output burst bounds for each flow at every edge of the cutset,%; components of $\zcut$ might be infinite.

%     \item $b = (b^g_c)_{1\leq c \leq n, g\in\mathcal{V}\cup \mathcal{E}}$ is a valid collection of burstiness bounds for aggregates of flows, indexed by a class $c$ and by some $g$. The index $g$ can be either  an edge, in which case the aggregate is the set of all flows of class $c$ carried on this edge, or a vertex $v$, in which case it is the set of all flows of class $c$ that exit the output buffer represented by $v$. %Some components of $b$ might be infinite. 
% \end{itemize}

$\sbullet[.75]$ $\beta = (\beta_c^v)_{1\leq c \leq n, v\in\mathcal{V}}$, is a valid collection of strict service curves of each class $c$ at each node $v$,%; all components of $\beta$ are finite.

$\sbullet[.75]$ $d = (d_c^v)_{1\leq c \leq n, v\in\mathcal{V}}$, is a valid collection of per-node delay bounds of each class $c$ at each node $v$,%; components of $d$ might be infinite.

$\sbullet[.75]$ $\zcut$, is a valid collection of output burst bounds for each flow at every edge of the cutset,%; components of $\zcut$ might be infinite.
    
$\sbullet[.75]$ $b = (b^g_c)_{1\leq c \leq n, g\in\mathcal{V}\cup \mathcal{E}}$ is a valid collection of burstiness bounds for aggregates of flows, indexed by a class $c$ and by some $g$. The index $g$ can be either  an edge, in which case the aggregate is the set of all flows of class $c$ carried on this edge, or a vertex $v$, in which case it is the set of all flows of class $c$ that exit the output buffer represented by $v$. %Some components of $b$ might be infinite. 

All components of $\beta$ are finite; the  components of $d$, $\zcut$ and $b$ might be infinite.

% \begin{itemize}
%     \item $\beta = (\beta_c^v)_{1\leq c \leq n, v\in\mathcal{V}}$, is a valid collection of strict service curves of each class $c$ at each node $v$; all components of $\beta$ are finite.
%     \item $d = (d_c^v)_{1\leq c \leq n, v\in\mathcal{V}}$, is a valid collection of per-node delay bounds of each class $c$ at each node $v$; components of $d$ might be infinite.
%     \item $\zcut$, is a valid collection of output burst bounds for each flow at every edge of the cutset; components of $\zcut$ might be infinite.
%     \item $b = (b^g_c)_{1\leq c \leq n, g\in\mathcal{V}\cup \mathcal{E}}$ is a valid collection of burstiness bounds for aggregates of flows, indexed by a class $c$ and by some $g$. The index $g$ can be either  an edge, in which case the aggregate is the set of all flows of class $c$ carried on this edge, or a vertex $v$, in which case it is the set of all flows of class $c$ that exit the output buffer represented by $v$. Some components of $b$ might be infinite. 
% \end{itemize}

As PLP requires per-node delay bounds, the method starts with an initial phase that performs a TFA analysis of the original (uncut) network. Note that DRR requires the service curve collection $\beta$ to be computed from output burstiness bounds, which we derive from the TFA analysis; hence, we apply an iterative procedure, which we prove to be valid. At the end of this initial phase, we have a collection of service curves $\beta$ and per-node delay bounds $d$, from which some propagated burstiness bounds (i.e., burstiness bounds for all flows at every output), hence $\zcut$ and $b$, can be derived. 

The next phase in the classical PLP methodology, FP-PLP, computes a fixpoint $\zcut$ by solving a linear program. Here, however, a new value of $\zcut$ allows to compute better output burstiness bounds in $b$ using another linear program with $\plpb$, which, in turn, allows to compute better service curves $\beta$ using the DRR service curve method recalled in Section~\ref{sec:drrservice}. The linear program in $\plpb$ uses per-node delay bounds $d$, which can in turn be improved by re-running TFA whenever $\beta$ is improved, and then $\plpb$ could also be re-run. Also, better $\beta$ and $d$ allows FP-PLP to re-compute a better fixpoint $\zcut$, which can in turn be used to improve all other variables. Therefore, the second phase of the PLP methodology needs to apply a number of refinements again and again. Instead of proposing a specific arrangement of the refinements, we propose to perform them in any arbitrary order, using a shared-memory model (Fig.~\ref{fig:gen_method}).
% , as illustrated in Fig.~\ref{fig:gen_method}. 
As we show in Section~\ref{sec:refinePhase}, all refinements provide valid bounds, therefore the method can be stopped at any time. However, we show that it converges to bounds (some of them possibly infinite) that are independent of the arrangement of the refinements.

The ``DRR strict service curve" block in the refinement phase uses as input some burstiness bounds for the aggregate of all flows of all interfering classes at the output of a node. Such a bound could be obtained by using the existing version of $\plpb$ applied to all flows in the aggregate. We improve both the obtained bound and the computing time by using the modification of $\plpb$ described in Section~\ref{sec:plpBack}.

The third phase of PLP is to obtain end-to-end bounds by applying one instance of $\plpd$ to every flow of interest. Here, we use $\plpd$ with one improvement (iPLP), which allows to use non-convex service curves at the expense of adding a few binary variables to the linear program (Section~\ref{sec:iplp}). Such an improvement could also be used in the refinement phase, but we found experimentally that this would have no noticeable effect.

\section{Two improvements to PLP}
\label{sec:2imp}
In this section, we first show how to compute an upper bound of aggregate burstiness of flows and how to include some non-convex service curves in the PLP analysis. Note that our two improvements concern $\plpb_{f,i}$ and $\plpd_{f,i}$ in step 3) of PLP as explained in Section \ref{sec:plp}; specifically, in this section, we assume that we have a collection of trees where  TFA per-node delay bounds and bounds on the burstiness of flows at cuts are already obtained.

Let us first briefly present the linear programs used by PLP, more details can be found Appendix~\ref{app:plp}. PLP considers the arrival and departure time of a bit of interest; it then derives a number of time instants at every node, each of which is represented by a variable in PLP. To every time instant at a node is also associated a variable that represents the value of the cumulative arrival function of the flows at this time instant. Network calculus relations such as arrival curve constraints, FIFO constraints and service curve constraints are translated into linear constraints; 
the objective function to be maximized is the delay or backlog of the flow of interest.

Recall that at this step we use the cut network and thus assume that the graph induced by flows is a collection of non-connected trees, and the analysis is done on every tree (with edges directed towards the root).
%otherwise some artificial cuts are made and $\plpcut$ is used to obtain bounds on the burstiness of flows at cuts; each tree is a directed graph with edges towards the root. 
It follows that each node $v$, except the root, has a unique successor. We let $\suc(v)$ denote the successor of node $v$, and add an artificial node $v_0$ to be the successor of the root. Define the depth of nodes $\depth$ as follows: $\depth(v_0) = 0$
  %$\depth(\pi_{\bar{f}}(l_{\bar{f}})+1) = 1$ 
  and for every $v$, $\depth(v) = \depth\lp \suc\lp v \rp  \rp + 1$. 
  
 {\em  Time Variables:} For every node $v$, define  $\tvar{v}{k}$ with $k \in \{0, \ldots, \depth(v) \}$. 
 %Map to $\mathbf{t}_{(j,k)}$ of \cite{plp} by mapping $j$ to $v$.

 {\em Process Variables:} For every node $v$ and $v'$ and every $f$ at $v$, define $\fvar{v}{f}{v'}{k}$ with $k \in \{0, \ldots, \depth(v') \}$, where  $\fvar{v}{f}{v'}{k}$ is a variable for the cumulative arrival function of flow $f$ at the input of node $v$ at time $\tvar{v'}{k}$. 
 %It can be mapped to $\mathbf{F}^j_i\mathbf{t}_{(j',k)}$ of  \cite{plp} by mapping $j$ to $v$, $i$ to $f$, and $j'$ to $v'$.
%We map our notations to the ones used in \cite{plp} to enable an interested reader to completely implement our methods. 
In the next paragraphs, we only presents the parts of the linear program that are modified. The complete linear programs are presented in Appendix \ref{app:plp}.
%\jylb{Bold face usually represents a vector}

\subsection{PLP to Upper-bound the  Aggregate Burstiness of Flows} 
\label{sec:plpBack}
We show that the same PLP, used to compute a backlog bound for a single flow, can be used to compute a backlog bound for the aggregate with some modifications: Consider a set of flows of interest $F$, whose destination is the root of the tree.
   Modify the PLP used to compute a backlog bound for a single flow as follows: 
%  We use same PLP used to compute a backlog bound for a single flow, and modify it as follows: 
  Let $v_f$ be the first node visited by flow $f$ in the tree for all $f\in F$. 
  \begin{itemize}
    \item  Additional constraints: $\forall f\in F$, $\forall k,~ k \in [0,~\depth(v_f)],~ \fvar{v_f}{f}{v_0}{0} - \fvar{v_f}{f}{v_f}{k} \leq b_f + r_f(\tvar{v_0}{0} - \tvar{v_f}{k})$;
    \item New objective: Maximize $\sum_{f\in F} \lp\fvar{v_f}{f}{v_0}{0} - \fvar{v_0}{f}{v_0}{0} \rp $. 
  \end{itemize}
   %We use the same PLP used to compute a backlog bound for a single flow yet the objective is now for the aggregate, i.e., a backlog bound for aggregate of flows $F$ in the network. With $v_f$ being the first node visited by flow $f$ in the tree,  
   %Consider an edge $e = (v,u)$; $v$ is the root of the tree.  Then, 
   %we add the following constraints for each $f\in F$: %$\forall k \in \{0, \ldots, \depth(\pi_f) \}, \fvar{\pi_f(1)}{f}{v+1}{0} - \fvar{\pi_f(1)}{f}{\pi_f(1)}{k} \leq b_f + r_f(\tvar{v+1}{0} - \tvar{\pi_f(1)}{k})$. 
   %$\forall k \in \{0, \ldots, \depth(v_f) \}, \fvar{v_f}{f}{v_0}{0} - \fvar{v_f}{f}{v_f}{k} \leq b_f + r_f(\tvar{v_0}{0} - \tvar{v_f}{k})$. 
   
   %Then the objective is to maximize $\sum_{f\in F} \lp\fvar{v_f}{f}{v_0}{0} - \fvar{v_0}{f}{v_0}{0} \rp $. 
   %$\sum_f \lp\fvar{\pi_f(1)}{f}{v+1}{0} - \fvar{v+1}{f}{v+1}{0} \rp $. 
 
%\jylb{To which tree do we apply this ? What is the root ?}    
We call $\plpb_{v,c}$ the resulting program, when applied to class $c$ and to the sub-tree of the cut network rooted at some node $v$. Here $F$ is the set of flows that exit node $v$ and the program obtains the aggregate burstiness of flows of class $c$ at the output of node $v$; the results are used by $\drrb_v$ in the refinement phase to compute DRR strict service curves. 
We also apply this program when $e$ is an edge, and also call it $\plpb_{e, c}$. Here, $F$ is the set of flows that use edge $e$ and the sub-tree is rooted at the node that edge $e$ exits. The results are used in the refinement phase by $\pd_{v,c}$ %\jylb{$D$ or perNode Delay} 
to compute per-node delay bounds.
% Theorem~\ref{thm:plpBack} is proved in Section \ref{sec:proof-plpBack}.
%   \anne{Question: backlog at the output of a node or at the edge?}
   
   \begin{theorem}[PLP to Upper-bound the  Aggregate Burstiness of Flows]\label{thm:plpBack}
    The solution of $\plpb_{g,c}$ is a valid bound on aggregate burstiness of flows carried by edge or node $g$.
       %Consider $\plpb_{g,i}$ the PLP constructed above and let $\mathcal{B}$ be what it returns. Then, $\mathcal{B}$ is a valid bound on aggregate  burstiness of flows carried by edge $e$.
   \end{theorem}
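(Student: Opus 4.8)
The plan is to reduce the correctness of $\plpb_{g,c}$ to the already-established correctness of the single-flow backlog program of \cite{plp}, by interpreting the modified objective and the extra constraints as exactly the network-calculus bounds that define ``aggregate burstiness at $g$''. First I would fix the meaning of the quantity being bounded: for $g$ a node $v$, the aggregate of flows $F$ exiting $v$ admits a token-bucket arrival curve with rate $\sum_{f\in F} r_f$, and the burstiness $b^v_c$ we seek is the smallest $b$ such that, over any interval $[s,t]$, the number of bits of $F$ leaving $v$ is at most $b + (\sum_f r_f)(t-s)$; equivalently (by the standard sup-over-intervals characterization of burstiness for a token-bucket-constrained aggregate) it is $\sup_{s\le t} \big(A^{\mathrm{out}}_F(t) - A^{\mathrm{out}}_F(s) - (\sum_f r_f)(t-s)\big)$. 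The program's objective, $\sum_{f\in F}(\fvar{v_f}{f}{v_0}{0} - \fvar{v_0}{f}{v_0}{0})$, is precisely $A^{\mathrm{out}}_F(\tvar{v_0}{0}) - A^{\mathrm{in}}_F(\text{start})$ summed over $f$ once one recalls from the variable mapping that $\fvar{v_0}{f}{v_0}{0}$ is the cumulative output of $f$ at the earliest relevant instant and $\fvar{v_f}{f}{v_0}{0}$ is its cumulative arrival at the root-time; so maximizing it over all feasible trajectories of the LP yields an upper bound on that supremum, provided the LP's feasible set contains every genuine network trajectory restricted to the window between the two time instants.

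The core of the argument is therefore a feasibility/soundness check, in the same style as the proof of the original PLP backlog theorem: (i) every constraint already present in the single-flow backlog LP of \cite{plp} is satisfied by any real trajectory — this is inherited verbatim, since we only added constraints and changed the objective; (ii) the newly added constraints $\fvar{v_f}{f}{v_0}{0} - \fvar{v_f}{f}{v_f}{k} \le b_f + r_f(\tvar{v_0}{0} - \tvar{v_f}{k})$ are satisfied by any real trajectory — this is just the source token-bucket arrival curve $\alpha_f = \gamma_{r_f,b_f}$ applied to flow $f$ between the times $\tvar{v_f}{k}$ and $\tvar{v_0}{0}$ at the input of its first node $v_f$ in the tree, using that the cut does not decrease the constraints available at $v_f$ (the burstiness at a cut edge is an input to this LP and the LP is run on the subtree rooted appropriately, so $v_f$'s input is arrival-curve constrained). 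Hence the real trajectory is LP-feasible, so the LP optimum is $\ge$ the true aggregate burst; and (iii) conversely, every LP-feasible point respects enough network-calculus relations (FIFO, service curve $\beta_c$, per-node delay bound $d_c$, arrival curves, and the new token-bucket constraints) that the value of the objective at that point is a legitimate upper bound on what any conforming network can produce — so the LP optimum is a \emph{valid} bound, which is all the theorem claims. The edge case $g=e$ is identical with $F$ the set of flows traversing $e$ and the subtree rooted at the tail of $e$; nothing in the argument uses that $g$ is a vertex.

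The step I expect to be the main obstacle is making (iii) watertight without re-deriving all of \cite{plp}: I must argue that summing the per-flow objective terms does not lose the ``simultaneity'' that makes an aggregate burst meaningful — i.e., that the single pair of time variables $\tvar{v_0}{0}$ and (implicitly, through the added constraints) $\tvar{v_f}{k}$ correctly encodes a common observation window for the whole aggregate rather than letting each flow pick its own worst interval. The resolution is that all flows of interest share the destination (the root), so they share $\tvar{v_0}{0}$, and the added constraints tie each flow's input history back through the \emph{same} downstream time variables $\tvar{v'}{k}$; thus the LP cannot independently maximize each $f$ over unrelated intervals. A secondary subtlety is confirming that the resulting curve is indeed token-bucket with the claimed rate $\sum_f r_f$ — this follows because summing the token-bucket constraints of the individual source flows already yields a token-bucket constraint for the aggregate at the source, and passing through a node preserves the token-bucket \emph{form} (only the burst grows), which is exactly what $\plpb$ quantifies. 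Once these two points are pinned down, the theorem follows by the same feasibility-implies-validity logic as the original PLP backlog bound, and I would close the proof by citing the structure of \cite[Section 4]{plp} for the unchanged constraints and verifying only the delta.
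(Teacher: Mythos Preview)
Your overall strategy---reduce to the correctness proof of the single-flow backlog LP in \cite{plp} and verify only the delta---is the right one, and matches the paper. But you misread the objective and, as a result, bury the actual key step.

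The objective $\sum_{f\in F}\bigl(\fvar{v_f}{f}{v_0}{0} - \fvar{v_0}{f}{v_0}{0}\bigr)$ is not an output increment over an interval. Both process variables are evaluated at the \emph{same} time $\tvar{v_0}{0}$: $\fvar{v_f}{f}{v_0}{0}$ is the cumulative arrival of $f$ at its entry node at time $\tvar{v_0}{0}$, and $\fvar{v_0}{f}{v_0}{0}$ is its cumulative departure from the root at time $\tvar{v_0}{0}$. Their difference is the per-flow backlog at that instant, and the sum is the aggregate backlog. So what the LP maximizes is a \emph{backlog} bound $\mathcal{B}$ for the aggregate, not the burstiness supremum $\sup_{s\le t}\bigl(D_F(t)-D_F(s)-r(t-s)\bigr)$ you wrote. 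The feasibility argument (your (i)--(ii)) then shows, exactly as in Theorem~5 of \cite{plp}, that any genuine trajectory satisfies all constraints, hence $\mathcal{B}$ is a valid backlog bound. That part of your plan is fine.

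What remains is to pass from ``$\mathcal{B}$ bounds the aggregate backlog'' to ``$\mathcal{B}$ bounds the aggregate output burstiness with rate $r=\sum_f r_f$''. You call this a ``secondary subtlety'' and dispatch it with ``passing through a node preserves the token-bucket form (only the burst grows)''; but this is precisely the content that needs proof, and the paper isolates it as a separate lemma (Lemma~\ref{lem:aggBacklog}). The argument is not automatic: given $s\le t$ and a trajectory, one must show $D_F(t)-D_F(s)\le \mathcal{B}+r(t-s)$, and the paper does this by constructing a modified arrival $A'$ that coincides with $A$ up to time $s$ and then sends a burst $H$ (the slack in the aggregate token bucket at time $s$) followed by rate $r$, arguing $A'$ is admissible, applying the backlog bound $\mathcal{B}$ to $A'$ at $s^+$, and combining with $D(t)\le A(t)\le A'(t)$. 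Without this construction (or a citation to an equivalent result), your proof has a gap at exactly the place you flagged as minor. Once you insert this lemma, your items (i)--(ii) plus the lemma give the theorem; your item (iii) about LP-feasible points being ``legitimate upper bounds'' is not needed and is not how PLP correctness is argued.
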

   
  The proof is in Section \ref{sec:proof-plpBack}, and the key idea of the proof is as follows: We show that a valid backlog bound for an aggregate of some flows is also a valid burstiness bound for the aggregate. This is obtained in Lemma \ref{lem:aggBacklog} where we show that in any acceptable trajectory scenario of the system (i.e., a set of valid input/output processes for all flows), the burstiness of the aggregate never exceeds the backlog bound.%Then, observe that the new objective we use is a backlog bound for the aggregate flow hence the solution is a valid burstiness bound for the aggregate flow.  

   \subsection{iPLP: a PLP that supports non-convex service curves}\label{sec:iplp}
   Our goal here is to modify PLP, used to compute a delay, such that it can handle a non-convex service curve expressed, at node $v$ and class $c$,  as $\max(\beta_c^v, \beta^{\scriptsize\textrm{nc},v})$, where $\beta_c^v$ is  piece-wise linear convex (i.e., $\beta_c^v = \max_p \beta_{R_p^v, T_p^v})$),
   %that can be written as $\max_p \beta_{R_p^v, T_p^v}$, 
   and $\beta_c^{\scriptsize\textrm{nc}, v} = \min(\beta_{R^v, T_1}, q_c^v)$ as described in Section~\ref{sec:drrservice}. 
    Similar to the previous case, we present only the parts of the linear program that are modified, namely the service constraints.
    
  %\textbf{iPLP Construction:}  Recall the notation of Section \ref{sec:plpB}. To construct iPLP, we use the exact same PLP for delay bounds but we add some new constraints to the service curve constraints. In this paper, we only present the service curve constraints of PLP, and the details can be found in Section 4 of \cite{plp}. We map our notation to the ones used in \cite{plp} to enable an interested reader to completely implement iPLP. 
  
  For the sake of concision, we now introduce the variables and constraints $\mathbf{At}_u^v = \sum_{f \in \textrm{In}(v)} \fvar{u}{f}{u}{\scriptsize\depth(u)}$ and  $\mathbf{At}_v^v =\sum_{f \in \textrm{In}(v)} \fvar{v}{f}{v}{\scriptsize\depth(v)}$, where $u = \suc(v)$.
   
   %Consider a node $v$ and class $C_i$. As $\beta^v_i$ is piece-wise linear and convex, it can be written as maximum of number of rate-latency function. Assume $\beta^v_i = \max_p\beta_{R^v_p,T^v_p}$. 
   
   The original service curve constraints of PLP are kept: 
   %can be written as follows: Let $u = \suc(v)$ then
   \begin{align}
        \mathbf{At}_u^v - \mathbf{At}_v^v 
        %\sum_{f \in \textrm{In}(v)}\lp \fvar{u}{f}{u}{\scriptsize\depth(u)} - \fvar{v}{f}{v}{\scriptsize\depth(v)} \rp
        &\geq 0;\label{eq:sc1}\\
        \forall p,~\mathbf{At}_u^v - \mathbf{At}_v^v 
        %\sum_{f \in \textrm{In}(v)}\lp \fvar{u}{f}{u}{\scriptsize\depth(u)} - \fvar{v}{f}{v}{\scriptsize\depth(v)} \rp
        &\geq R^v_p\lp \tvar{u}{\scriptsize\depth(u)} - \tvar{v}{\scriptsize\depth(v)}  - T^v_p  \rp. \label{eq:sc2}
   \end{align}
 % For sake of conciseness, we now introduce the variables and constraints $\mathbf{At}_u = \sum_{f \in \textrm{In}(v)} \fvar{u}{f}{u}{\scriptsize\depth(u)}$ and  $\mathbf{At}_v =\sum_{f \in \textrm{In}(v)} \fvar{v}{f}{v}{\scriptsize\depth(v)}$.

   As $\beta^{\scriptsize\textrm{nc},v}$ is the minimum of a rate-latency function and a constant (see Fig.~\ref{fig:ncService}); the implementation of $\beta^{\scriptsize\textrm{nc},v}$   requires a ``if then else'' structure: If $\tvar{u}{\scriptsize\depth(u)} - \tvar{v}{\scriptsize\depth(v)} \leq T_2$, then we must have $\mathbf{At}_u^v - \mathbf{At}_v^v \geq R^v(\tvar{u}{\scriptsize\depth(u)} - \tvar{v}{\scriptsize\depth(v)}-T_1)$. Else, if $\tvar{u}{\scriptsize\depth(u)} - \tvar{v}{\scriptsize\depth(v)} > T_2$,  we must have $\mathbf{At}_u^v - \mathbf{At}_v^v \geq q_c^v $. This can be  modeled by 
   means of a binary variable in a linear program.  Define $\textbf{b}_v \in \{0,1\}$ and consider a large enough positive $M$, and add the following constraints:
   % Define $\textbf{b}_v \in \{0,1\}$ and consider a large enough positive $M$, and add the following constraints:  
   %Then, we add the followings that accounts for $\beta^{\scriptsize\textrm{nc},v}$ as shown in Fig.~\ref{fig:ncService}.: 
\begin{align}
     \mathbf{At}_u^v - \mathbf{At}_v^v &\geq R^v(\tvar{u}{\scriptsize\depth(u)} - \tvar{v}{\scriptsize\depth(v)}-T_1) - M\textbf{b}_v; \label{eq:nc1}\\
     \mathbf{At}_u^v - \mathbf{At}_v^v &\leq q_c^v + M\textbf{b}_v; \label{eq:nc3}
     \\
     \mathbf{At}_u^v - \mathbf{At}_v^v &\geq q_c^v - M(1-\textbf{b}_v ).\label{eq:nc2}
\end{align}

% \bleu{As $\beta^{nc, v}_c$ is the minimum of a rate-latency function and a constant (see Fig.~\ref{fig:ncService}); the implementation of $\beta^{nc, v}_c$   requires a ``if then else'' structure: This can be modeled by means of a binary variable $\textbf{b}_v$ in a linear program.}

We let  $\iplpd_{f,c}$ denote this Integer, Polynomial-sized Linear Program. Theorem~\ref{thm:iplp} is proved in Section \ref{sec:proof-iplp}.
\begin{theorem}[iPLP: a PLP that supports non-convex service curves] \label{thm:iplp}
    Consider iPLP as constructed above.  Then, 1) iPLP gives a valid delay bound for the flow of interest and 2) the bound is less than or equal to that of PLP.
\end{theorem}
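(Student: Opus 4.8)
The plan is to prove the two claims separately. For claim~1 (validity), the key observation is that the added constraints \eqref{eq:nc1}--\eqref{eq:nc2}, together with the binary variable $\textbf{b}_v$, encode the disjunction ``either $\mathbf{At}_u^v - \mathbf{At}_v^v \geq R^v(\tvar{u}{\scriptsize\depth(u)} - \tvar{v}{\scriptsize\depth(v)}-T_1)$ \emph{or} $\mathbf{At}_u^v - \mathbf{At}_v^v \geq q_c^v$'': when $\textbf{b}_v=0$, \eqref{eq:nc1} enforces the rate-$R^v$ branch while \eqref{eq:nc3} forces $\mathbf{At}_u^v - \mathbf{At}_v^v \leq q_c^v$ and \eqref{eq:nc2} becomes vacuous (for $M$ large); when $\textbf{b}_v=1$, \eqref{eq:nc2} enforces the $q_c^v$ branch while \eqref{eq:nc1}, \eqref{eq:nc3} become vacuous. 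Hence any feasible point of iPLP satisfies $\mathbf{At}_u^v - \mathbf{At}_v^v \geq \min(\beta_{R^v,T_1}(\tau), q_c^v) = \beta_c^{\scriptsize\textrm{nc},v}(\tau)$ with $\tau = \tvar{u}{\scriptsize\depth(u)} - \tvar{v}{\scriptsize\depth(v)}$. I would first argue that this $M$ can indeed be chosen finite and large enough: all time and process variables can be bounded a priori using the per-node delay bounds $d_c$ and the burstiness bounds at the cuts that are already available at this step, so $M$ can be taken larger than the maximal possible value of $|\mathbf{At}_u^v - \mathbf{At}_v^v| + q_c^v$ over the feasible region. Then, since \eqref{eq:sc1}--\eqref{eq:sc2} already encode that $\beta_c^v$ is a strict service curve, every feasible point of iPLP respects the strict service curve $\max(\beta_c^v, \beta_c^{\scriptsize\textrm{nc},v})$, which by Section~\ref{sec:drrservice} is a valid strict service curve for class $c$ at node $v$. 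The remaining constraints of iPLP are exactly those of PLP, whose soundness is established in \cite{plp}; so any actual trajectory of the network (with arrival/FIFO/service relations) corresponds to a feasible point of iPLP with an appropriate choice of $\textbf{b}_v$ (namely $\textbf{b}_v=1$ iff the $q_c^v$ branch is the binding one at node $v$ for that trajectory), and the objective of iPLP therefore upper-bounds the true end-to-end delay of the flow of interest.

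For claim~2 (iPLP is no worse than PLP), I would use the fact that PLP, when run on the convex service curve $\beta_c^v$ alone, is the linear program consisting of \eqref{eq:sc1}--\eqref{eq:sc2} plus the common arrival/FIFO constraints (no $\textbf{b}_v$, no \eqref{eq:nc1}--\eqref{eq:nc2}). Take any feasible point of iPLP; dropping the variables $\textbf{b}_v$ and the constraints \eqref{eq:nc1}--\eqref{eq:nc2} yields a feasible point of PLP with the same objective value. Hence the feasible region of iPLP projects into that of PLP, so $\max_{\text{iPLP}} \leq \max_{\text{PLP}}$, i.e.\ the iPLP delay bound is less than or equal to the PLP delay bound.

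The main obstacle I anticipate is the rigorous treatment of the ``big-$M$'' argument: one must exhibit an explicit finite bound on all decision variables of iPLP so that $M$ is guaranteed to make the intended constraints vacuous, and one must make sure the bounding does not itself depend circularly on quantities iPLP is computing. The cleanest route is to note that in the cut (tree) network the time horizon between a bit's arrival and departure is bounded by the sum of the per-node delay bounds $d_c^v$ along the path (already fixed at this stage), and the cumulative-arrival variables are then bounded via the token-bucket arrival curves and the burstiness bounds at cuts; feeding these a~priori bounds into the right-hand sides of \eqref{eq:nc1}--\eqref{eq:nc2} gives a concrete admissible $M$. A secondary, more bookkeeping-level point is to state precisely the correspondence between a physical network trajectory and the choice of $\textbf{b}_v$, so that the ``soundness'' direction of claim~1 is airtight; this is routine given the trajectory-to-PLP-point construction in \cite{plp} but should be spelled out.
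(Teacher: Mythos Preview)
Your proposal is correct and follows essentially the same route as the paper: a case analysis on $\mathbf{b}_v$ showing that \eqref{eq:nc1}--\eqref{eq:nc2} together with \eqref{eq:sc1}--\eqref{eq:sc2} encode the service curve $\max(\beta_c^v,\beta_c^{\scriptsize\textrm{nc},v})$, and the ``iPLP has more constraints than PLP'' observation for claim~2. You are in fact more explicit than the paper about the trajectory-to-feasible-point direction (the choice of $\mathbf{b}_v$ for a given trajectory) and about justifying a finite big-$M$, both of which the paper leaves implicit.
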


Note that iPLP solves a Mixed-Integer Linear-Program (MILP), and the MILP solver we use does not guarantee that it finds the optimal solution. However, it guarantees that the solution is feasible, and it indicates whether the obtained solution is optimal. In all examples we tested, we always obtained the optimal solution (see Fig.~\ref{fig:numFull}~(d)).

%The proof is in Section \ref{sec:proof-iplp}.

\section{Our Proposed Method: PLP-DRR}
\label{sec:detail}
In this section, we provide the details of our generic method and we present two concrete implementations.

\subsection{Initial phase: TFA-DRR}
\label{sec:initPhase}
 
The goal of the first phase is to provide valid values for $(\beta, d,\zcut, b)$, using TFA. Specifically, we want to analyze the original (uncut) network using TFA (note that TFA itself does not necessarily require cuts~\cite{sync_TFA}). 
In networks with cyclic dependencies, the TFA analysis of a DRR system of~\cite{drr_ton} cannot be directly applied here, as propagated burstiness bounds are needed to compute the DRR strict service curves and vice-versa.
However,  it is possible to first compute DRR strict service curves without assumption of the arrival curves using $\betacdma_c$ for each node $v$ and class $c$: these service curves only depend on the fixed parameters such as assigned quanta, the aggregate strict service curve, and maximum packet sizes of classes at a node, as explained in Section~\ref{sec:drrservice_dm}. From there, one can iterate between the computation of output bursts (used to deduce the arrival curves) and the DRR strict service curves that take into account the arrival curves. 
The method is described in Algorithm~\ref{alg:initPahse}: The local variable $z$ represents the propagated burstiness of all flows at all outputs. First, at line~\ref{line:betaCDM}, initial strict service curves of DRR in degraded operational mode are computed at all nodes for all classes. Then, the algorithm alternates %corresponds to alternating 
between performing a TFA analysis and computing new DRR strict service curves. 
More precisely, at line~\ref{line:tfa}, a TFA analysis (explained in Section \ref{sec:tfa}) is performed for each class, with the previously computed service curves; hence, some bounds on propagated burstiness of flows $z$ and per-node delay bounds $d$ are computed and
%. These burstiness bounds can be 
used to compute arrival curves at each node for each class (line~\ref{line:arrival}). Then, at line~\ref{line:drrService}, DRR strict service curves are improved by taking into account these arrival curves. This procedure continues until we reach a stopping criteria; for example, when each component of vector $d$ decreases insignificantly. Note that computed  bounds are valid at each iteration. % and we do not need to necessarily wait for convergence, and thus any stopping criteria is valid.
At this point, TFA analysis is completed, however, we need to compute bounds on the aggregate burstiness of flows of every class either at each edge (including at the cutset) and at the output of every node, as this is used in the refinement phase. This is performed at lines 7-11. 
The delay and burstiness bounds computed by the TFA analysis at line~\ref{line:tfa} might not be finite. For example, after the first execution of line~\ref{line:tfa}, some classes might provide finite delay bounds (called stable classes) and other infinite delay bounds (called unstable classes). At the first execution of lines~4-6, the DRR strict service curves of the unstable classes are improved using the arrival curves of the stable classes. Then, at  next iteration, more stable classes might be obtained and so on. 
	\begin{algorithm}[tbp]
	
    \SetKwInOut{IV}{Local Variable}
	%\SetKwInOut{Output}{Output}
	\KwResult{Initial values of $\lp \beta, d, \zcut,b\rp$}
	\IV{$z$, collection of bounds on the propagated burstiness of flows}
% 	 \KwResult{$\lp \beta, d, \zcut,b\rp$: Collection of per-node strict service curve, per-node delay bounds,  bounds on propagated burstiness of flows at cuts, and  aggregate burst bound at the output of every node and every edge for all classes}
    \lFor{node $v$ $\leftarrow$ $1$ to $|\mathcal{V}|$}{
    	 $\beta^v  \gets \betacdma$ \label{line:betaCDM}
    }
    % \BlankLi
	\While{Stopping criteria not reached}{
		\lFor{each class $c$}{
	     $\lp d_c ,z_c\rp \gets\tfa_c \lp\beta_c\rp$ \label{line:tfa}
	    }
% 	\BlankLine
	    \For{node $v$ $\leftarrow$ $1$ to $|\mathcal{V}|$}
	    {
	    \lFor{each class $c$}{
	        compute $\alpha^{v}_c$ from $z^{\textrm{In}_c(v)}_c$ \label{line:arrival}
	       % $b_i^v \gets \sum_{e\in\textrm{In}(v)} \lp \sum z_i^e \rp $
	       % $\alpha^{v}_i \gets \sum_{e\in\textrm{In}(v)} \gamma_{r_f, b_i^v}$\;
	        }
    	$\beta^v \gets \drri_v \lp \alpha^{v}\rp$\;\label{line:drrService}
        }
    \label{line:iter}
	}
% 	\BlankLine
	\lFor{each class $c$}{
	$\zcutc \gets z^{E^{\scriptsize\textrm{cut}}_c}_c$
	}
	
	\For{node $v$ $\leftarrow$ $1$ to $|\mathcal{V}|$}{
	    \For{each class $c$ and each $e\in \textrm{In}_c(v)$}{
	        	    $b_c^e \gets \sum z^e_c$\;
	        }
	        $b_c^v \gets \sum z^{\textrm{Out}_c(v)}_c$ \; 
	}
	
    \textbf{return} $(\beta, d, \zcut,b)$
    \caption{Initial Phase: TFA-DRR}
	\label{alg:initPahse}
\end{algorithm}
%\jylb{$z$ is mysterious. It is an internla variable and must be initialized.} 
%\hossein{fixed}
\begin{theorem}[Correctness and Convergence of TFA-DRR] \label{thm:tfa}
Consider a network with DRR scheduling per class, as described in Section \ref{sec:sysmodel}, and consider Algorithm~\ref{alg:initPahse}. Then,  1) $\lp \beta, d, z\rp$ (and the resulting $\zcut \mand b $ obtained from $z$ at lines 7-11)  are valid bounds at every iteration at lines 2-6, and 2) they converge as the number of iterations goes to infinity. Note that some values of $d, z$ (and the resulting $\zcut \mand b $) might be infinite.
%  \begin{enumerate}
%     % \item $(\beta, d, z)$ converges to a fixed-point with possibly infinite values for some components $d$ and $z$.
%     % $\beta$ at each iteration is valid and finite, and it converges to a fixed-point. is guaranteed to be finite and is valid strict service curves.$(\beta, d, z)$ converges to a fixed-point with possibly infinite values for some components $d$ and $z$.
%     \item $(\beta, d, z)$ are valid bounds at every iteration.
%     \item They converge.
%     \item Some values of $d, z$ may be infinite.
%     % \item   $\beta$ at each iteration is valid and finite, and it converges to a fixed-point.
%     % \item  Values of  $d$ and $z$ might or might not be finite; Some values of  $d$ and $z$  converges to a finite  fixed-point and are valid bounds, and some diverge to infinite values.
%     % \anne{In an infinite loop lines (8-13)??? Some values of  $d$ and $z$  converges to a finite value and are valid bounds, and some remain infinite.}
    
%     % \jylb{It seems that we can simplify the statement: (1) the bounds are valid at every iteration (2) they converge (3) some values of $d, z$ may be infinite. }
%  \end{enumerate}
\end{theorem}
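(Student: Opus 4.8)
The plan is to prove the two claims of Theorem~\ref{thm:tfa} — validity at every iteration and convergence — by exhibiting a monotone structure on the iteration map and invoking an induction-plus-monotone-limit argument. First, for validity (claim 1), I would argue by induction on the iteration count of the while-loop at lines 2--6. The base case is the initialization at line~\ref{line:betaCDM}: $\betacdma_c$ is a strict service curve for class $c$ at node $v$ by the degraded-mode result recalled in Section~\ref{sec:drrservice_dm}, which makes no assumption on interfering arrival curves, so $\beta$ is valid. For the inductive step, assume at the start of an iteration that $\beta$ is a valid collection of strict service curves. Then $\tfa_c(\beta_c)$ at line~\ref{line:tfa} returns valid per-node delay bounds $d_c$ and valid propagated-burstiness bounds $z_c$ by the correctness of TFA (Section~\ref{sec:tfa}); note that if TFA diverges for some class these bounds are $+\infty$, which is vacuously valid. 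From valid $z$ we obtain valid token-bucket arrival curves $\alpha^v_c$ at line~\ref{line:arrival} (the rate is the fixed $r_c=\sum r_f$, the burst is $\sum$ of the $z$-components, which is an upper bound). Feeding these into $\drri_v$ at line~\ref{line:drrService} yields, by the non-degraded-mode result of~\cite{drr_ton} (Section~\ref{sec:drrservice_ndm}), a new collection of valid strict service curves; crucially, $\drri_v$ always returns something at least as good as $\betacdma_c$, so validity is preserved. Finally the bounds extracted at lines~7--11 ($\zcutc$ as a restriction of $z$, and $b_c^e$, $b_c^v$ as sums of $z$-components over incident/outgoing edges) are valid because a burstiness bound for an aggregate is obtained by summing individual burstiness bounds; this is a standard network-calculus fact.

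For convergence (claim 2), the key is to set up the iteration as a monotone map on an appropriately ordered complete lattice. I would work in the space of tuples $(\beta, d, z)$ where $\beta$ ranges over collections of strict service curves dominated by the (finite) rate-latency envelopes produced by $\drrb_v$/$\drri_v$, and $d$, $z$ range over $[0,+\infty]$-valued vectors. Order service curves pointwise (larger is better) and order $d, z$ by the reverse of the usual order (smaller delays/bursts are "better"). The claim I need is: one pass of the while-loop is a \emph{monotone} operator $\Phi$ in this order — better service curves in, smaller $(d,z)$ out of TFA (monotonicity of TFA, from~\cite{sync_TFA}), smaller bursts give tighter arrival curves, tighter arrival curves give better service curves out of $\drri_v$ (monotonicity of the non-degraded-mode construction). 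Combined with the observation from the validity argument that the first application of $\Phi$ already improves on the starting point (since $\betacdma$ is the worst case and TFA only makes things no worse), the iterates form a monotone sequence: $\beta$ increases, $d$ and $z$ decrease. A monotone sequence in $[0,+\infty]$ (for $d,z$) converges; for $\beta$, the rates are fixed (they depend only on the $r_c$ and quanta) and the latencies are monotone decreasing and bounded below by $0$, so $\beta$ converges pointwise as well. Hence the whole iteration converges, possibly to a limit with some infinite components (the unstable classes). The refinement about stable/unstable classes in the paragraph before Algorithm~\ref{alg:initPahse} is then just the statement that the set of classes with finite limit is nondecreasing along the iteration, which follows from the same monotonicity.

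The main obstacle I anticipate is establishing the monotonicity of the one-pass operator $\Phi$ cleanly, in particular the monotonicity of $\drri_v$ and $\drrb_v$ as functions of the input arrival curves / output bursts. The non-degraded-mode construction of~\cite{drr_ton} is itself an iterative refinement (as noted in Section~\ref{sec:drrservice_ndm}), so I would need to either cite an explicit monotonicity lemma from~\cite{drr_ton} or reprove that the fixpoint it computes is monotone in its arrival-curve inputs — larger interfering bursts can only shrink the guaranteed service, which is intuitively clear from the DRR mechanism but needs to be pinned down. A secondary subtlety is the interaction between classes that are stable and unstable at an intermediate iteration: one must check that using the (finite) arrival curves of stable classes to improve the service curves — hence potentially the delay bounds — of unstable classes is consistent with the monotone framework, i.e., that mixing finite and infinite components does not break the lattice-theoretic argument. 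This is handled by treating $+\infty$ as the bottom element for the $d,z$ coordinates and noting all the maps involved are continuous (or at least monotone and $\sup$-preserving) with respect to it, so the Knaster--Tarski / Kleene fixpoint machinery still applies. Everything else — the arrival-curve computations at line~\ref{line:arrival}, the aggregation at lines~7--11 — is routine.
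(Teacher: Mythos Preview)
Your proposal is correct and rests on the same core idea as the paper: monotonicity of the one-pass operator ($\tfa$ followed by $\drri$), together with the observation that the DRR service curves can be parametrized by finitely many latencies (the rates being fixed by the quanta and arrival rates), yields a monotone sequence in $(\Reals^+\cup\{+\infty\})^I$ that must converge. The paper packages this differently: rather than arguing directly, it views Algorithm~\ref{alg:initPahse} as a sequential special case of the shared-memory model of Section~\ref{sec:refinePhase} (updates one after the other, $s_{k+1}>u_k$) and invokes Lemma~\ref{lem:fix}, the general convergence lemma for isotone maps that is proved once and reused for Theorem~\ref{thm:genMethod}. Your direct induction-plus-monotone-limit argument is self-contained and avoids the shared-memory abstraction; the paper's route is shorter in context because Lemma~\ref{lem:fix} does the work for both theorems at once. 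Validity (your claim~1) is handled identically in both: correctness of $\tfa$ and $\drri$ on valid inputs is cited from the literature, and the degraded-mode curve $\betacdma$ gives a valid starting point.
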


The proof is in Section \ref{sec:proof-shared-memory}.
% The proof of Theorem \ref{thm:tfa} is straightforward: The correctness is implied by the correctness of DRR strict service curve and TFA analysis; also, it is guaranteed to converge, as for each per-node delay bound, a non-increasing sequence is obtained; as delay bounds are positive and the number of per-node delay bounds are finite it converges. 
% Note that computed  bounds are valid at each iteration (if finite), and we do not need to necessarily wait for convergence. 
At this point we have obtained a value of $(\beta, d, \zcut, b)$ that constitute valid bounds. %The next phase of the method is to improve $(d, \beta, \zcut, b)$ using PLP

\subsection{Refinement phase: PLP and parallelization}
\label{sec:refinePhase}
The next phase of the method is to improve the value of $(d, \beta, \zcut, b)$ using the PLP methodology. 
As mentioned in Section~\ref{sec:general_idea}, the variables $(\beta, d, \zcut , b)$ are interdependent, and can be improved by some refinements that we list here: 

% \begin{itemize}
%     \item $\zcuti \gets \plpcut_c(\beta_c, d_c)$: computes burstiness bounds at cuts for class $c$ (Section~\ref{sec:plp});  

%     \item $b^g_c \gets \plpb_{g,c}(\beta_c,d_c,\zcuti)$: computes burstiness bounds of the aggregate flows of class $c$ at the output of node $g$, if $g\in \mathcal{V}$, or carried by  edge $g$, if $g\in \mathcal{E}_c$ (Section~\ref{sec:plpBack});

%     \item ${\beta}^v \gets \drrb_v \lp \beta^v ,b^v  \rp$: computes the DRR strict service curves at node $v$ given the output burstiness bounds at this node (Section~\ref{sec:drrservice}).

%     \item $d^v_c \gets \pd_{v,c}(\beta^v_c,b_c)$:
% \end{itemize}

$\sbullet[.75]$ $\zcuti \gets \plpcut_c(\beta_c, d_c)$: computes burstiness bounds at cuts for class $c$ (Section~\ref{sec:plp});  

$\sbullet[.75]$ $b^g_c \gets \plpb_{g,c}(\beta_c,d_c,\zcuti)$: computes burstiness bounds of the aggregate flows of class $c$ at the output of node $g$, if $g\in \mathcal{V}$, or carried by  edge $g$, if $g\in \mathcal{E}_c$ (Section~\ref{sec:plpBack});

$\sbullet[.75]$ ${\beta}^v \gets \drrb_v \lp \beta^v ,b^v  \rp$: computes the DRR strict service curves at node $v$ given the output burstiness bounds at this node (Section~\ref{sec:drrservice}).

$\sbullet[.75]$ $d^v_c \gets \pd_{v,c}(\beta^v_c,b_c)$:
% $d^v_c \gets \mathcal{D}_{v,c}(\beta^v_c,b_c)$ $d^v_c \gets \pd_{v,c}(\beta^v_c,b_c)$: 
%\jylb{Why do we call it $\mathcal{D}$ and not something like ``perNodeDelay" as would be consistent with other names of blocks. If we insist on using $\mathcal{D}$, it should be in the notation list. }\hossein{I fix it everywhere} 
computes the per-node delay of node $v$ for class $c$. This is the horizontal distance between the arrival curve and the service curve. For each input edge $e$ of node $v$, in addition to the aggregate burstiness $b_c^e$,  a rate limitation imposed by the link can be used to improve the arrival curve. This is known as line-shaping  %\cite{Ahlem-line-shaping,Grieu-line-shaping, bouillard2020tradeoff}.
    \cite{plp,tfa++,Grieu-line-shaping}.
    Then, we take into account the effect of line shaping and of the packetizer as in Section VI-B of \cite{ludo_cycle}.

All these refinements can be applied in any order, and it is not clear in what order we should use. We avoid the issue by first presenting a generic scheme that uses a distributed computing model with a  shared memory, and we show that the values converge to the same values regardless of the order of the operations under mild assumptions. We also then present two practical, concrete implementations of this scheme with parallel-for loops. 

% \subsubsection{Generic Scheme Using Shared Memory Parallelization}
\subsubsection{Generic Scheme: A Distributed Computing Model with Shared Memory}
 \label{sec:genMethod}

The generic scheme is presented in Fig.~\ref{fig:gen_method}. We consider a distributed system with a shared memory and a finite %of $(\beta, d, \zcut, b)$
number of workers (processes or threads). The shared memory stores the current value of $(\beta, d, \zcut, b)$; it is initialized with the result of the initial phase described in Section \ref{sec:initPhase}. Every worker has read and write access to the shared memory, and whenever a worker is free and decides to work, it performs the following steps: 

% \begin{itemize}
%     \item It chooses a refinement in the list above, let us call it $h$; for example, it may choose $\plpcut_c()$ for some class $c$, or $\plpb_{g,c}()$ for some $c,g$, etc. 

%     \item The worker then makes a read-only operation on the shared memory in order to obtain the value, say $x$, of its argument. For example, if $h$ is $\plpcut_c()$, then the worker reads $x=(\beta_c, d_c)$. We assume that read-only operations are atomic, i.e., the values read by the worker cannot be modified by other workers during the read operation (such that the worker has a valid snapshot).

%     \item The worker computes $y=h(x)$, i.e., a new value of some of the bounds in the shared memory. For example, if $h$ is $\plpcut_c()$, the worker computes $y=\zcuti$.

%     \item The worker asks for a read/write lock on the shared memory. Such a lock prevents other workers from writing into the memory until the lock is released by this worker. Once the lock is obtained, the worker reads the current value $y'$ of the same variables it wants to update from the shared memory, computes the component-wise minimum of $y$ and $y'$, writes the resulting values into the shared memory, and releases the lock. The minimum is computed because some other worker might have improved the same value during the computing time of this worker.

%     \item The worker is now free and might decide to work again.
% \end{itemize}

$\sbullet[.75]$ It chooses a refinement in the list above, let us call it $h$; for example, it may choose $\plpcut_c()$ for some class $c$, or $\plpb_{g,c}()$ for some $c,g$, etc. 

$\sbullet[.75]$ The worker then makes a read-only operation on the shared memory in order to obtain the value, say $x$, of its argument. For example, if $h$ is $\plpcut_c()$, then the worker reads $x=(\beta_c, d_c)$. We assume that read-only operations are atomic, i.e., the values read by the worker cannot be modified by other workers during the read operation (such that the worker has a valid snapshot).

$\sbullet[.75]$ The worker computes $y=h(x)$, i.e., a new value of some of the bounds in the shared memory. For example, if $h$ is $\plpcut_c()$, the worker computes $y=\zcuti$.

$\sbullet[.75]$ The worker asks for a read/write lock on the shared memory. Such a lock prevents other workers from writing into the memory until the lock is released by this worker. Once the lock is obtained, the worker reads the current value $y'$ of the same variables it wants to update from the shared memory, computes the componentwise minimum of $y$ and $y'$, writes the resulting values into the shared memory, and releases the lock. The minimum is computed because some other worker might have improved the same value during the computing time of this worker.

$\sbullet[.75]$ The worker is now free and might decide to work again.

Note that some delay bounds and burstiness bounds might be infinite. For example, initial bounds obtained by TFA might be infinite for a class (but they might become finite after the operations of some workers). 
 
This generic scheme does not prescribe any specific arrangement of how the workers are scheduled. But, for convergence, we assume \textbf{hypothesis (H)}: In a hypothetical execution of infinite duration, for every time $t>0$ and every refinement $h$, there exists a time $s>t$ at which one worker starts working and chooses $h$.

\begin{theorem}[Correctness and Convergence of PLP Refinement Phase of Section \ref{sec:genMethod}] \label{thm:genMethod}
 Consider a network with DRR scheduling per class, as described in Section \ref{sec:sysmodel}, and  consider the generic method described above. 
%  Assume $(\beta,d,\zcut, b)$ are valid per-node and  per-class DRR strict service curves,  per-node and per-class delay bounds, and per-class propagated burstiness at cuts, edges and nodes; some values of $d,\zcut,\mand b$ might possibly be infinite.  Consider the generic method described above where the shared memory is initialized  at time $0$ with \jylb{tautology} $(\beta, d, \zcut, b)$. 
 Let $(\beta^t,d^t,z^{\scriptsize\textrm{cut}, t}, b^t)$ be the value of the shared memory at time $t>0$. Then,
  \begin{enumerate}
    \item $(\beta^t,d^t,z^{\scriptsize\textrm{cut}, t}, b^t)$ are valid bounds; some values of $( d^t,{\zcut}^t, b^t) $ might be infinite.
    \item The limit of $(\beta^t,d^t,z^{\scriptsize\textrm{cut}, t}, b^t)$  as $t\to\infty$ exists (call it  $(\beta^*,d^*,z^{\scriptsize\textrm{cut}, *}, b^*)$). Some components of $d^*,z^{\scriptsize\textrm{cut}, *}, \mand b^*$ might be infinite. 
    \item Given the initial value of the shared memory, the limit $(\beta^*,d^*,z^{\scriptsize\textrm{cut}, *}, b^*)$ is independent of the order and the execution time of every refinement.
 \end{enumerate}

%  \begin{enumerate}
%     \item $\beta^t$ is valid and finite; also, finite components of $(d^t,{\zcut}^t, b^t)$ are valid bounds \jylb{??? Are infinite components invalid bounds ?}.
%     \item The limit of $(\beta^t,d^t,{\zcut}^t, b^t)$  as $t\to\infty$ exists (call it  $(\beta^*,d^*,{\zcut}^*, b^*)$). Some components of $d^*,{\zcut}^*, \mand b^*$ may be infinite. 
%     \item Given the initial value of the shared memory, the limit $(\beta^*,d^*,{\zcut}^*, b^*)$ is independent of the order and the execution time of every refinement.
%  \end{enumerate}
\end{theorem}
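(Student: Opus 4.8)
The plan is to treat the refinement phase as an asynchronous, monotone fixed‑point iteration and to adapt the classical asynchronous‑convergence argument to it. I would equip the space of tuples $(\beta,d,\zcut,b)$ with the partial order $\preceq$ in which $x\preceq x'$ means ``$x$ is at least as good as $x'$'': componentwise smaller delay and burstiness bounds in $d,\zcut,b$, and pointwise‑larger strict service curves in $\beta$ (equivalently, since Section~\ref{sec:drrservice_ndm} fixes the rates, componentwise smaller latency parameters). Two facts drive the whole proof. (i) Every refinement $h\in\{\plpcut_c,\plpb_{g,c},\drrb_v,\pd_{v,c}\}$ is \emph{monotone} for $\preceq$: $\preceq$‑better inputs yield $\preceq$‑better outputs. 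I would isolate this as a lemma proved one refinement at a time: for $\drrb_v$ it is monotonicity of Lemma~2 of~\cite{drr_ton} in the output burstinesses (a smaller $b^v$ gives a smaller latency); for $\pd_{v,c}$ it is monotonicity of the horizontal deviation between an arrival curve and a service curve, together with line shaping; for $\plpcut_c$ and $\plpb_{g,c}$ it follows by observing that $\preceq$‑better inputs only tighten the service‑curve constraints~\eqref{eq:sc1}--\eqref{eq:sc2} and the per‑node delay constraints of~\cite{plp}, so the maximized LP objective cannot increase. (ii) The write step overwrites the affected components by the componentwise best of the freshly computed value and the value currently in memory, so it can only make the shared memory $\preceq$‑smaller.

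\textbf{Validity and convergence (items 1--2).} For item~1 I would induct over the (discrete, linearly ordered) sequence of write events. The initial memory content is valid by Theorem~\ref{thm:tfa}. If the memory is valid up to the $k$-th write, the worker performing it read earlier an atomic, hence valid, snapshot $x$; therefore $h(x)$ is valid by the result attached to the chosen refinement (Theorem~\ref{thm:plpBack} for $\plpb_{g,c}$; Section~\ref{sec:drrservice} and Lemma~2 of~\cite{drr_ton} for $\drrb_v$; the standard network‑calculus delay bound for $\pd_{v,c}$; \cite{plp} for $\plpcut_c$). The overwritten components currently hold valid values by the induction hypothesis, and the componentwise best of two valid objects is again valid (a minimum of two valid upper bounds is a valid upper bound; a pointwise maximum of two valid strict service curves is a valid strict service curve). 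Infinite entries of $d,\zcut,b$ are consistent, as the constraints they feed become vacuous. This closes the induction and proves item~1 for every $t>0$. Item~2 is then immediate from (ii): each scalar component of $(\beta^t,d^t,z^{\scriptsize\textrm{cut},t},b^t)$, read with the $\preceq$‑orientation, is non‑increasing in $t$ and constant between writes, and is bounded below (burstiness and delay bounds are $\ge 0$ and dominate the corresponding true quantities; latencies are $\ge 0$ and the components of $\beta$ stay finite). A monotone bounded trajectory in $[0,+\infty]$ converges, so the limit $(\beta^*,d^*,z^{\scriptsize\textrm{cut},*},b^*)$ exists, with possibly‑infinite entries in $d^*,z^{\scriptsize\textrm{cut},*},b^*$.

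\textbf{Uniqueness (item 3).} Fix a fair execution (assumption (H)) with limit $(\beta^*,\ldots)$. I would first show this limit is a \emph{common fixed point}, i.e.\ no refinement improves it: for a given $h$, by (H) it is executed at arbitrarily large times, and along those times the snapshot it reads, the content it overwrites and the value it writes all converge to the corresponding parts of the limit (using item~2 and that each operation completes in finite time); by continuity of $h$ on $[0,+\infty]$, the identity ``written value $=$ best of $h(\text{snapshot})$ and current value'' passes to the limit and forces $h$ not to push the limit below itself. Now take two fair executions $E_1,E_2$ with the same initial state $x_0$ and limits $x^*,y^*$; both are common fixed points and both are $\preceq x_0$. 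List the write events of $E_1$, producing states $x_0\succeq x_1\succeq x_2\succeq\cdots\to x^*$. I claim $y^*\preceq x_k$ for all $k$, by strong induction: it holds at $k=0$; the $(k{+}1)$-st write applies some monotone $h$ to a snapshot $x_j$ with $j\le k$ and then takes the best of $h(x_j)$ with $x_k$; by the induction hypothesis $y^*\preceq x_j$ and $y^*\preceq x_k$, so monotonicity gives $h(y^*)\preceq h(x_j)$, while ``$y^*$ is a fixed point'' gives that $h$ does not push $y^*$ below itself; combining, no updated component of $x_{k+1}$ falls below $y^*$, hence $y^*\preceq x_{k+1}$. Letting $k\to\infty$ gives $y^*\preceq x^*$, and by symmetry $x^*\preceq y^*$, so $x^*=y^*$. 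Since the only features of an execution used are $x_0$ and (H), the limit depends only on the initial state, which is item~3.

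\textbf{Main obstacle.} The delicate part is the monotonicity lemma (i) for the two linear‑program refinements: one must check that improving the service curves, the per‑node delay bounds and the burstiness bounds at cuts only tightens the PLP constraint systems of~\cite{plp} in the forms modified in Section~\ref{sec:2imp}, so that the maximized objectives cannot increase, and that this is preserved when some delay bounds equal $+\infty$. A secondary, more technical point is the limit/continuity step in the uniqueness argument: one has to handle the $[0,+\infty]$‑valued, possibly degenerate LP value functions and the asynchronous bookkeeping (atomic snapshots, serialized writes, finite completion of each operation) carefully enough for the fixed‑point passage to be rigorous.
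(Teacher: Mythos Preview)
Your outline for items~1 and~2 coincides with the paper's: encode the state as a vector in $(\Reals^+\cup\{+\infty\})^I$ by replacing each convex service curve by its finite set of latency parameters (the rates being fixed), observe that the write step is a componentwise minimum so the trajectory is $\preceq$-non-increasing, and conclude validity by induction over writes and convergence by monotone boundedness. On the monotonicity lemma~(i) you are actually more explicit than the paper, which simply asserts that the four refinements are isotone and invokes its abstract lemma.

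For item~3 your route is genuinely different. You characterize the limit as a common post-fixed point of all refinements (passing to the limit via continuity of each $h$ on $[0,+\infty]$), and then sandwich any other limit $y^*$ below the trajectory $(x_k)$ by strong induction, using both monotonicity and the post-fixed-point property of $y^*$. The paper instead proves a short combinatorial lemma that compares two executions directly without any fixed-point or continuity step: given executions $E$ and $E'$ it builds an increasing map $\varphi$ from the write indices of $E$ into those of $E'$ such that step $\varphi(k)$ of $E'$ uses the same refinement as step $k$ of $E$ and reads only after the write of step $\varphi(k-1)$; a two-line induction using isotonicity alone gives $x(u_k)\ge x'(u'_{\varphi(k)})$, hence $x^*\ge x'{}^*$, and symmetry concludes. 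What the paper's argument buys is that it entirely sidesteps the continuity issue you flag as the ``secondary technical point'' (continuity of the LP value functions at $+\infty$ and under tightening of the service-curve and per-node-delay constraints is plausible but not free). What your argument buys is a slightly stronger structural conclusion---the limit is the $\preceq$-largest common post-fixed point below the initial state---and a template that is closer to the standard asynchronous-iteration literature. Both are sound; the paper's is shorter and avoids the analytic detour, at the cost of a slightly more delicate index-chasing construction of~$\varphi$.
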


Theorem~\ref{thm:genMethod} assumes that each refinement is performed infinitely often (hypothesis \textbf{(H)}). Otherwise, obtained bounds in the limit will be larger than or equal to those obtained by our scheme.

%We now present two implementations of the method, where we explain in what order we can apply the refinements. By Theorem \ref{thm:genMethod}, we know the order does not effect the final result, however, in our implementations we 
\subsubsection{Two Implementations of PLP Refinements}

We presented the generic presentation of this phase. By Theorem \ref{thm:genMethod}, any implementation results in the same final bounds. In this section, we present two concrete implementations.  

Both implementations have two main blocks: computing output burstiness bounds at cuts (with $\plpcut$), and locally improving the per-node delays and DRR strict service curves. The main difference between the two implementations is when to switch between these two blocks of operations.

For each class $c$, after removing edges $\ecutc$ in $\mathcal{G}_c$, we obtain a collection of trees. Let $\mathcal{T}$ be the collection of trees of all classes. The second block, called $\textrm{DRR}^{\textrm{tree}}_{T}$, is executed in parallel on each tree $T\in\mathcal{T}$: it is described in Algorithm~\ref{algo:tree-par}.

% We assume that the cuts are the same for each graph $\mathcal{G}_c$.

% \jylb{Inconsistent. In Section II and Fig2. we spend a lot of effort explaining that the cuts need not be the same for every graph and in all our implementations we impose that they are the same.}
% \hossein{We do not need the cuts to be the same: For each class $c$ we have a set of trees; we collect all trees of all classes. Some nodes appears in several trees but does not cause a problem; they are visited more often.}
% \anne{Having different trees does not cause problems, but the reason of this choice is to minimize the calls an do it in order, to obtain better improvement than in a random order. The decomposition of Fig. 2 follows this scheme.}\jylb{OK, so we may do different cuts but we require all trees to be same. The generic method seems to be valid in all cases, perhaps we should leave this discussion to the journal version}
% In other words, a cut is first computed on the graph induced by all flows ($\mathcal{G} = (\mathcal{V}, \cup_c \mathcal{E}_c)$), and we call $\mathcal{T}$ the set of trees obtained after removing the edges $E^{cut}$. The second block is executed in parallel on each tree $T\in\mathcal{T}$, is called $\mathrm{DRR}^{\mathrm{tree}}_{T}$, and described in Algorithm~\ref{algo:tree-par}.

The algorithm is based on the observation that in a feed-forward topology, when service curves and per-node delay bounds are computed in the topological order, there is no need for iterations. 
Some operations are performed in the topological order of the nodes of the tree (the operation on one node must wait that the operation on its predecessors are finished). 
For each node $v$, there are two steps in sequence. The former, at lines 2-4, computes the DRR strict service curve for all classes. This operation requires the refinement of the output burstiness bounds $b^v_c$ for each class, and can be computed in parallel (lines 2-3). The latter, at lines 5-8,  improves the per-node delays of node $v$ based on the newly computed service curves. Again, this operation requires the refinements of the burstiness bounds at all input edges of the node, that can be computed in parallel (lines 6-7).
% \algblock{ParFor}{EndParFor}
% % customising the new block
% \algnewcommand\algorithmicparfor{\textbf{parfor}}
% \algnewcommand\algorithmicpardo{\textbf{do}}
% \algnewcommand\algorithmicendparfor{\textbf{end\ parfor}}
% \algrenewtext{ParFor}[1]{\algorithmicparfor\ #1\ \algorithmicpardo}
% \algrenewtext{EndParFor}{\algorithmicendparfor}
\begin{algorithm}[tbp]
    \KwData{$T$ a tree component of the network, $(\beta, d, \zcut, b)$}
    \KwResult{Updated values for $(\beta, d, \zcut, b)$}
    \For{each node $v\in T$ in the topological order of $T$}
        {
        \For{each class $c$ in parallel}
            {
            $b_c^v \gets \plpb_{v, c}(\beta_c, d_c, \zcuti, b_c)$\;
            }
        $\beta^v \gets \drrb_v(\beta^v, b^v)$\;
        \For{each class $c$ in parallel}
            {
            \For{$e \in \mathrm{In}_c(v)$ in parallel}
                {
                $b_c^e \gets \plpb_{e, c}(\beta_c, d_c, \zcuti, b_c)$\;
                }
            $d_c^v \gets \pd_{v,c}( \beta^v_c,b_c)$\; 
            }
        }
    \textbf{return} $(\beta, d, \zcut, b)$
    \caption{$\textrm{DRR}^{\textrm{tree}}_{T}$: DRR Analysis of a tree}
    \label{algo:tree-par}
\end{algorithm}

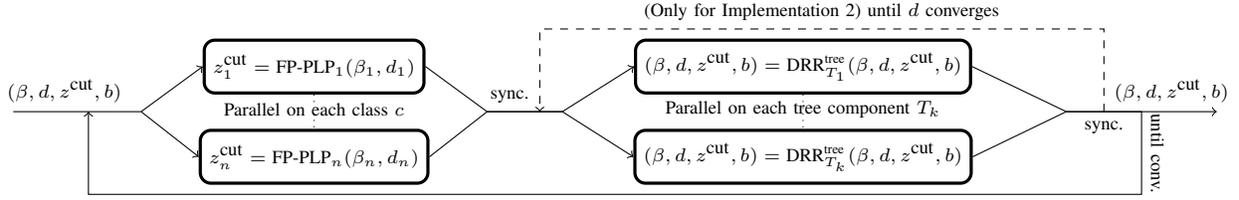
\begin{figure*}[htbp]
	\centering
	%  \title={.}
	%\begin{subfigure}[b]{0.85\textwidth}
	%	\centering
	%	\input{RTSS2022/TON_v1/Figures/version1-tikz}
		% include first image
		%\includegraphics[width=\linewidth]{RTSS2022/TON_v1/Figures/v1_para.pdf}
	%	\caption{Implementation 1: alternating between the $\plpcut$ and $\mathrm{DRR}^{\mathrm{tree}/\!\!/}_{T}$ blocks.}
	%\end{subfigure}
	%\hfill
	%\begin{subfigure}[b]{0.85\textwidth}
	%	\centering
		% include first image
			\tikzset{
fnc/.style={
	rectangle,
	rounded corners,
	draw=black, very thick,
	minimum height=2em,
	text centered,
	fill=white,
	font=\scriptsize}}
\begin{tikzpicture}
    \node[fnc] (fpplp1) at (2, 0.6) {$\zcut_1 = \plpcut_1(\beta_1, d_1)$}; 
    \node[fnc] (fpplpn) at (2, -0.6) {$\zcut_n = \plpcut_n(\beta_n, d_n)$}; 
    \draw[dotted] (fpplp1) -- (fpplpn) node[pos=0.43, font=\scriptsize] {Parallel on each class $c$};
    % \draw[dotted] (fpplp1) -- (fpplpn) node[pos=0.43, font=\scriptsize, right] {Parallel} node[pos=0.57, font=\scriptsize, right]{on each class $c$};
    % \draw[dotted] (fpplp1) -- (fpplpn) node[pos=0.43, font=\scriptsize, right] {Parallel} node[pos=0.57, font=\scriptsize, right]{on each class $c$};
    
    \node[fnc] (tree1) at (8.5, 0.6) {$(\beta, d, \zcut, b) = \textrm{DRR}^{\textrm{tree}}_{T_1} (\beta, d, \zcut, b) $}; 
    \node[fnc] (treek) at (8.5, -0.6) {$(\beta, d, \zcut, b) = \textrm{DRR}^{\textrm{tree}}_{T_k} (\beta, d, \zcut, b)$}; 
        \draw[dotted] (tree1) -- (treek) node[pos=0.43, font=\scriptsize] {Parallel on each tree component $T_k$};
    % \draw[dotted] (tree1) -- (treek) node[pos=0.43, font=\scriptsize, right] {Parallel} node[pos=0.57, font=\scriptsize, right]{on each tree component $T_k$};
    
    \draw[->] (-2, 0) -- (-0.3,0) node[pos=0.4, font=\scriptsize, above] {$(\beta, d, \zcut, b)$}-- (fpplp1.west);
    \draw[->] (-0.3,0) -- (fpplpn.west);
    
    \draw[->] (fpplp1.east) -- (4.3, 0) -- (5.3, 0) node[pos=0.3, font=\scriptsize,, above] {sync.} -- (tree1.west);
    \draw[->] (fpplpn.east) -- (4.3, 0) -- (5.3, 0) -- (treek.west);
    
    \draw[->] (tree1.east) -- (12, 0) -- (13, 0) node[pos=0.5, font=\scriptsize, below] {sync.} -- (13, -1.1) node[pos=0.5, font=\scriptsize, sloped, above] {until conv.} -- (-1, -1.1) -- (-1, 0);
    \draw[->] (treek.east) -- (12, 0) -- (14, 0) node[pos=0.7, font=\scriptsize, above] {$(\beta, d, \zcut, b)$};
    
    \draw[->, dashed] (12.5, 0) -- (12.5, 1.1) -- (5, 1.1) node[pos=0.5, font=\scriptsize, above] {(Only for Implementation 2) until $d$ converges} -- (5, 0);
\end{tikzpicture}
% \begin{tikzpicture}
%     \node[fnc] (fpplp1) at (2, 1) {$\zcut_1 = \plpcut_1(\beta_1, d_1)$}; 
%     \node[fnc] (fpplpn) at (2, -1) {$\zcut_n = \plpcut_n(\beta_n, d_n)$}; 
%     \draw[dotted] (fpplp1) -- (fpplpn) node[pos=0.43, font=\scriptsize, right] {Parallel} node[pos=0.57, font=\scriptsize, right]{on each class $c$};
    
%     \node[fnc] (tree1) at (8.5, 1) {$(\beta, d, \zcut, b) = \textrm{DRR}^{\textrm{tree}}_{T_1} (\beta, d, \zcut, b) $}; 
%     \node[fnc] (treek) at (8.5, -1) {$(\beta, d, \zcut, b) = \textrm{DRR}^{\textrm{tree}}_{T_k} (\beta, d, \zcut, b)$}; 
%     \draw[dotted] (tree1) -- (treek) node[pos=0.43, font=\scriptsize, right] {Parallel} node[pos=0.57, font=\scriptsize, right]{on each tree component $T_k$};
    
%     \draw[->] (-2, 0) -- (-0.3,0) node[pos=0.4, font=\scriptsize, above] {$(\beta, d, \zcut, b)$}-- (fpplp1.west);
%     \draw[->] (-0.3,0) -- (fpplpn.west);
    
%     \draw[->] (fpplp1.east) -- (4.3, 0) -- (5.3, 0) node[pos=0.3, font=\scriptsize,, above] {sync.} -- (tree1.west);
%     \draw[->] (fpplpn.east) -- (4.3, 0) -- (5.3, 0) -- (treek.west);
    
%     \draw[->] (tree1.east) -- (12, 0) -- (13, 0) node[pos=0.5, font=\scriptsize, below] {sync.} -- (13, -2) node[pos=0.5, font=\scriptsize, sloped, above] {until convergence} -- (-1, -2) -- (-1, 0);
%     \draw[->] (treek.east) -- (12, 0) -- (14, 0) node[pos=0.7, font=\scriptsize, above] {$(\beta, d, \zcut, b)$};
    
%     \draw[->, dashed] (12.5, 0) -- (12.5, 2) -- (5, 2) node[pos=0.5, font=\scriptsize, below=-0.1cm] {(Only for Implementation 2) until $d$ converges} -- (5, 0);
    
% \end{tikzpicture}
		%\includegraphics[width=\linewidth]{RTSS2022/TON_v1/Figures/v2_para.pdf}
	%\caption{Implementation 2: convergence of $\mathrm{DRR}^{\mathrm{tree}/\!\!/}_{T}$ before the execution of $\plpcut$.}
	%\end{subfigure}
	\caption{\sffamily \small Two implementations of the refinement phases with parallelization and shared memory. Implementation 1 (without the dashed arrow): alternating between the $\plpcut$ and $\textrm{DRR}^{\textrm{tree}}_{T}$ blocks; Implementation 2 (with the dashed arrow): convergence of $\textrm{DRR}^{\textrm{tree}}_{T}$ before the execution of $\plpcut$.
	%suggested order to apply each refinement. In both version, first bounds on the burstiness of flows at cuts $\zcut$ are updated; then, each node is visited with respect to the topological order of the tree; DRR strict service curve and then delay bounds are updated. For version 2, at each node we iterate until the delay bounds converge. We then start over and update $\zcut$.
	}
	\label{fig:versions}
\end{figure*}
The two implementations are illustrated in Fig.~\ref{fig:versions}. The first implementation (without the dashed arrow) alternates between the two sets of blocks ($\plpcut$ and $\textrm{DRR}^{\textrm{tree}}_{T}$). Each set of blocks is started after synchronizing the previous set of blocks. 
We start by $\plpcut$ blocks because classes considered unstable in the initial phase (the TFA analysis outputs infinite bounds) might become stable after PLP analysis. %, as the stability region computed by PLP is larger than that computed by TFA. 
In contrast, $\textrm{DRR}^{\textrm{tree}}_{T}$ cannot improve the stability region. 

The second implementation tightens as much as possible the DRR service curves and per-node delays before executing again the first $\plpcut$ block (dashed arrow in Fig.~\ref{fig:versions}). The aim of this implementation is to execute the $\plpcut$ block less often as it is more time-consuming. Indeed, it requires solving a much larger LP than in the other block. Therefore, the second block is executed several times until convergence is reached on the delay bounds.

\subsection{Post-process Phase: Computing the End-to-end Delay} 
\label{sec:postPhase}
  When the refinement phase of Section \ref{sec:refinePhase} has converged, we proceed and compute end-to-end delay bound for each flow of interest. Strict service curves $\beta$ obtained are piece-wise linear and convex. As stated in Section~\ref{sec:drrservice}, delays can be  improved when considering the non-convex strict service curve described in  Fig.~\ref{fig:ncService}. Thus, we apply $\iplpd$ to compute end-to-end delay bounds.

	%\input{sot}
	%\input{method}
    
% !TeX root = mainTON.tex
 %\vspace{-0.2cm}
\section{Proofs}\label{sec:proof}

\subsection{Proof of Theorem \ref{thm:plpBack}}
\label{sec:proof-plpBack}
We first prove Lemma~\ref{lem:aggBacklog}: Consider a system $\mathbb{S}$ and a set of flows $F$ that traverses $\mathbb{S}$. Let $A_f$ (resp. $D_f$) denote the cumulative arrival (resp. departure) of flow $f\in F$. Let $A = \sum_{f\in F} A_f$ (resp. $D = \sum_{f \in F} D_f$) denote arrival (resp. departure) of the aggregate of flows of interest. Assume that: (A1) System $\mathbb{S}$ is causal, i.e., $\forall (A,D) \in \mathbb{S},~ A \geq D$ and $D(t)$ only depends on $A(s)_{s\leq t}$; (A2) The departure $D$ of system $\mathbb{S}$ is continuous; (A3) Every flow of interest $f\in F$ has a token-bucket arrival curve $\alpha_f $ with rate $r_f$ and burst $b_f$; also, $\alpha_f$  is the only constraint on the arrival of the flow of interest, i.e., every cumulative arrival $A_f$ constrained by $\alpha_f$ is a possible arrival for this flow; (A4) $\mathcal{B}$ is a backlogged bound for every possible arrival and departure of the aggregate of  flows of interest that belongs to system $\mathbb{S}$, i.e., $\forall (A,D) \in \mathbb{S}$, we have $A - D \leq \mathcal{B}$.

% To prove 1), we prove Lemma~\ref{lem:aggBacklog}: Consider a system $\mathbb{S}$ and a set of flows $F$ that traverses $\mathbb{S}$. Let $A_f$ (resp. $D_f$) denote the cumulative arrival (resp. departure) of flow $f\in F$. Let $A = \sum_{f\in F} A_f$ (resp. $D = \sum_{f \in F} D_f$) denote arrival (resp. departure) of the aggregate of flows of interest. Assume that: (A1) System $\mathbb{S}$ is causal; (A2) The departure $D$ of system $\mathbb{S}$ is continuous; (A3) Every flow of interest $f\in F$ has a token-bucket arrival curve $\alpha_f $ with rate $r_f$ and burst $b_f$, and
% % also, $\alpha$  is the only constraints on the arrival of the flow of interest, i.e.,
% every cumulative arrival $A_f$ constrained by $\alpha_f$ is a possible arrival for this flow; (A4) $\mathcal{B}$ is a backlogged bound for every possible arrival and departure of the aggregate of  flows of interest that belongs to system $\mathbb{S}$, i.e., $\forall (A,D) \in \mathbb{S}$, we have $A - D \leq \mathcal{B}$.
%   \vspace{-0.15cm}
\begin{lemma}\label{lem:aggBacklog}
	Assume the assumptions (A1)-(A4). Then, the departure of the aggregate of the set of flows of interest $F$ is constrained by  a token-bucket arrival curve with rate $r$ and burst $\mathcal{B}$ where $r = \sum_{f\in F} r_f$, i.e., $\gamma_{r,\mathcal{B}}$.   
\end{lemma}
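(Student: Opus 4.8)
The plan is to prove directly that the aggregate departure obeys the token‑bucket constraint: for every $(A,D)\in\mathbb{S}$ and every $t_1\le t_2$, $D(t_2)-D(t_1)\le r(t_2-t_1)+\mathcal{B}$, with $r=\sum_{f\in F}r_f$. I would start from the backlog identity $D(t_2)-D(t_1)=\bigl(A(t_2)-A(t_1)\bigr)+\mathrm{bl}(t_1)-\mathrm{bl}(t_2)$, where $\mathrm{bl}(t)\isdef A(t)-D(t)$; since $\mathrm{bl}(t_2)\ge0$ by (A1), it suffices to show $A(t_2)-A(t_1)+\mathrm{bl}(t_1)\le r(t_2-t_1)+\mathcal{B}$. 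The naive estimates $A(t_2)-A(t_1)\le r(t_2-t_1)+\sum_f b_f$ and $\mathrm{bl}(t_1)\le\mathcal{B}$ are too weak (they leave a spurious $\sum_f b_f$); the real content is that a near‑maximal backlog and a near‑full token bucket cannot coexist at the same instant.

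For each $f$ I would introduce the residual token budget $\theta_f(t_1)\isdef\inf_{s\le t_1}\bigl(b_f+r_f(t_1-s)-(A_f(t_1)-A_f(s))\bigr)\in[0,b_f]$, which is exactly the largest jump that $A_f$ may take at $t_1^+$ without violating $\alpha_f=\gamma_{r_f,b_f}$. From this definition two elementary facts follow: (i) $A_f(t_2)-A_f(t_1)\le r_f(t_2-t_1)+\theta_f(t_1)$ for all $t_2\ge t_1$ (pick an (almost) minimizing $s^\dagger$ and combine $A_f(t_2)-A_f(s^\dagger)\le r_f(t_2-s^\dagger)+b_f$ with $A_f(t_1)-A_f(s^\dagger)=b_f+r_f(t_1-s^\dagger)-\theta_f(t_1)$); and (ii) the arrival $\bar A_f$ that equals $A_f$ on $[0,t_1]$, jumps by $\theta_f(t_1)$ at $t_1$, and stays constant afterwards is $\alpha_f$‑conformant.

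The crux is the trade‑off inequality $\mathrm{bl}(t_1)+\sum_{f\in F}\theta_f(t_1)\le\mathcal{B}$. By (A3), replacing every $A_f$ by $\bar A_f$ yields a realizable arrival, so some $(\bar A,\bar D)\in\mathbb{S}$ has $\bar A=\sum_f\bar A_f$; by causality (A1), $\bar D=D$ on $[0,t_1]$, hence $\bar D(t_1)=D(t_1)$; by continuity (A2), $\lim_{t\downarrow t_1}\bar D(t)=\bar D(t_1)=D(t_1)$. Applying the backlog bound (A4) to $(\bar A,\bar D)$ at times $t>t_1$ and letting $t\downarrow t_1$ gives $\bigl(A(t_1)+\sum_f\theta_f(t_1)\bigr)-D(t_1)\le\mathcal{B}$, i.e., $\mathrm{bl}(t_1)+\sum_f\theta_f(t_1)\le\mathcal{B}$. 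Summing (i) over $f$ and adding $\mathrm{bl}(t_1)$ then gives $A(t_2)-A(t_1)+\mathrm{bl}(t_1)\le r(t_2-t_1)+\sum_f\theta_f(t_1)+\mathrm{bl}(t_1)\le r(t_2-t_1)+\mathcal{B}$, which, combined with the backlog identity and $\mathrm{bl}(t_2)\ge0$, closes the argument.

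I expect the main obstacle to be isolating and proving this trade‑off inequality, which is where all four hypotheses genuinely enter: (A3) to install a maximally‑bursting arrival, (A1) to freeze the departure up to $t_1$, (A2) so that the induced jump in the backlog is not cancelled by a simultaneous jump of the departure, and (A4) to bound the resulting backlog. Minor technicalities I would dispatch along the way: handling the discontinuity of $\bar A_f$ (equivalently, spread each burst over $[t_1,t_1+\varepsilon]$ and let $\varepsilon\to0$), the existence of the minimizer $s^\dagger$ (use an infimizing sequence; since $A_f\equiv0$ before the flow starts, the infimum is effectively over a compact set), and the reduction to the aggregate (the per‑flow constraints are what keep the estimate tight, so every $\bar A_f$ must be perturbed individually rather than perturbing $A$ as a single $\gamma_{r,\sum_f b_f}$‑flow).
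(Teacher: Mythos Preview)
Your proof is correct and follows the same skeleton as the paper's: at the chosen time $t_1$ (the paper's $s$), construct a maximally bursting arrival, invoke causality (A1) to freeze the departure up to $t_1$, invoke continuity (A2) to carry this through the jump, and apply the backlog bound (A4) to obtain the trade-off between the instantaneous backlog and the residual token budget; then close the estimate on $D(t_2)-D(t_1)$.

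The one difference worth noting is the granularity of the perturbation. The paper works at the aggregate level: it sets $H=b-\bar b(s)$ with $\bar b(s)=\sup_{u\le s}\{A(s)-A(u)-r(s-u)\}$, builds a single aggregate $A'$ that jumps by $H$ at $s^+$ and then grows at rate $r$, and appeals to (A3) for admissibility; the final step is simply $D(t)\le A(t)\le A'(t)$. You instead build per-flow perturbations $\bar A_f$ with jumps $\theta_f(t_1)$, verify $\alpha_f$-conformance for each, and compensate for the weaker aggregate jump via your inequality (i). Since (A3) is stated per flow, your construction meets the hypothesis head-on; and, as you flag in your last paragraph, the aggregate residual $H$ can strictly exceed $\sum_f\theta_f(t_1)$ (because $\sup_u\sum_f(\cdot)\le\sum_f\sup_u(\cdot)$ gives $\bar b(s)\le\sum_f\bar b_f(s)$), so the paper's aggregate $A'$ is not self-evidently decomposable into $\alpha_f$-conformant pieces. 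This is a refinement of the same argument rather than a different route.
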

% \anne{This result already appears in~\cite{Bou19}, unproved, and in a slightly more restrictive case.} 
%  \vspace{-0.15cm}
Note that in a specific case where we have only one flow of interest, assumptions (A1)-(A4) are satisfied by those of Theorem 4 of \cite{plp}, and both theorems  give exactly the same result. Also, this result already appears in Corollary 2 of ~\cite{Bou19}, unproved, and in a slightly more restrictive case.
%  \vspace{-0.15cm}
\begin{proof}
	% The proof is similar to a first part of the proof of Theorem 4 of \cite{plp}. 
	Fix $(A_f,D_f)\in \mathbb{S}$ for every $f\in F$ and $s\leq t$. Let $r = \sum_{f\in F} r$ and $b = \sum_{f\in F} b_f$. We prove that $ D(t) - D(s) \leq \gamma_{r,\mathcal{B}}(t-s) = \mathcal{B} + r(t-s).$
	% \begin{equation}\label{eq:1}
	%     D(t) - D(s) \leq \gamma_{r,\mathcal{B}}(t-s) = \mathcal{B} + r(t-s).
	% \end{equation}
	
	We first prove that $ A(s) - D(s) + H \leq \mathcal{B}$ with $H = b - \bar{b}(s)$ and $\bar{b}(s) \isdef \sup_{u\leq s}\{A(s) - A(u) - r(s-u) \}$. Note that $\bar{b}(s)$ is the bucket size of the token-bucket $\gamma_{r,b}$ at time $s$, so $H\geq 0$.
	% \begin{equation}\label{eq:2}
	%     A(s) - D(s) + H \leq \mathcal{B}
	% \end{equation}
	% with
	% \begin{align}
	%     H &= b - \bar{b}(s)\\
	%     \bar{b}(s) &\isdef \sup_{u\leq s}\{A(s) - A(u) - r(s-u) \}
	% \end{align}
	% Note that $\bar{b}(s)$ is the bucket size of the token-bucket $\gamma_{r,b}$ at time $s$.
	Define $A'$ as follows: $A'(u) = A(u)$ for $u \leq s$ and $A'(u) = A(s) + H + r(u-s)$ for $u > s$. Observe that $A'$  is constrained by $\gamma_{r,b}$ and $A' \geq A$. Hence, by (A3), $A'$ is a possible arrival in $\mathbb{S}$: there exists $D'$ such that $(A',D')\in \mathbb{S}$. Hence, by (A4), we have $ A'(s^+) - D'(s^+) \leq \mathcal{B}$
	% \begin{equation}\label{eq:3}
	%     A'(s^+) - D'(s^+) \leq \mathcal{B}
	% \end{equation}
	where $f(x^+) = \lim_{y\rightarrow x, y > x}f(y)$. As $A'(u) = A(u)$ for $u \leq s$ and by (A1), we have $D'(u) = D(u)$ for $u \leq s$; also, by (A2), $D'(s^+) = D'(s)$.  By combining this and  $A'(s^+) = A(s) + H$, we obtain  $A(s) - D(s) + H \leq \mathcal{B}$. 
	
	To conclude, we notice that $D(t)\overset{(A1)}{\leq } A(t) \leq A'(t) = A(s) + H + r(t-s)$, so $D(t) - D(s) \leq A(s) + H + r(t-s) - D(s) \leq \mathcal{B} + r(t-s)$.
	%We now proceed the proof as follows: By A1, $D(t) - D(s) \leq A(t) - D(s)$. As $A' \geq A$, we have $D(t) - D(s) \leq A'(t) - D(s)$.  By plugging in the definition of $A'(t)$, we have $D(t) - D(s) \leq A(s) - D(s) + H + r(t-s)$. Then, suing the inequality proved in the previous paragraph, we have $D(t) - D(s) \leq \mathcal{B} + r(t-s)$. Hence \eqref{eq:1} is shown. 
	% We now proceed to prove \eqref{eq:1} as follows:
	% \begin{align}
	%     D(t) - D(s) &\leq A(t) - D(s) \textrm{ (by A1)}\\
	%                 &\leq A'(t) - D(s) \\
	%                 & =A(s) - D(s) + H + r(t-s) \\
	%                 & \leq \mathcal{B} + r(t-s) \textrm{ (by \eqref{eq:3})}
	% \end{align} 
	% Hence \eqref{eq:1} is shown. 
\end{proof}

% Lemma \ref{lem:aggBacklog} proves 1), i.e., that a backlog bound for an aggregate of some flows is a bound on the output burstiness for the aggregate of these flows. It remains to show 2), i.e., that $\plpb_{g,c}$ computes a backlog bound for an aggregate of of flows in $F$. 
%  %\vspace{-0.2cm}
% Lemma \ref{lem:aggBacklog} proves 1). It remains to show 2). 
%  \vspace{-0.2cm}
Theorem 5 of \cite{plp} proves the correctness of PLP. The proof consists in showing that any trajectory scenario of the network is a feasible solution of the PLP; given a trajectory scenario, the variables of the PLP (time variable and process variables) are extracted from the scenario such that all constraints are satisfied. We only add some arrival curve constraints for flows $f\in F$. In \cite{plp} it is shown that these constraints are correct for a single flow and hence the same proof can be used for each flow $f\in F$. In particular, $\mathbf{Ft}_{v_0, 0}^{v_f, f}$ (resp. $\mathbf{Ft}_{v_0, 0}^{v_0, f}$) represents the arrival process (departure process from the system) of flow $f$ at time $\tvar{v_0}{\scriptsize{0}}$. No other constraint is modified. Only the objective function changes and maximizes the quantity of data of the flows of interests at time $\tvar{v_0}{\scriptsize{0}}$:  $\sum_{f\in F}\mathbf{Ft}_{v_0, 0}^{v_f, f} - \mathbf{Ft}_{v_0, 0}^{v_0, f}$ is a backlog bound the aggregate flows and $\plpb_{g,c}$ computes a backlog bound for an aggregate of flows when $F$ is chosen to be the set of flows of class $c$ traversing node  or edge $g$. By Lemma \ref{lem:aggBacklog} this is a bound on the output burstiness. \qed

% Theorem \ref{thm:backlog} enables us to use PLP and compute a bound on on the burstiness of the aggregate of some flows: We use the same PLP used to compute a backlog bound for a single flow yet the objective is now for the aggregate, i.e., a backlog bound for aggregate of flows in the network. Recall  the notations of Section \ref{sec:iplp}. Consider an edge $E \subseteq \textrm{Out}(v)$; $v$ is the root of the tree.  Then, add the following constraints for each $f$: $\forall k \in \{0, \ldots, \depth(\pi_f(1)) \}, \fvar{\pi_f(1)}{f}{v+1}{0} - \fvar{\pi_f(1)}{f}{\pi_f(1)}{k} \leq b_f + r_f(\tvar{v+1}{0} - \tvar{\pi_f(1)}{k})$. Then the objevtice is to maximize $\sum_f \lp\fvar{\pi_f(1)}{f}{v+1}{0} - \fvar{v+1}{f}{v+1}{0} \rp $.

% Then, by Theorem \ref{thm:backlog}, an output arrival curve an be computed for the aggregate.

% \vspace{-0.3cm}
%  %%\vspace{-0.2cm}
\subsection{Proof of Theorem \ref{thm:iplp}}
\label{sec:proof-iplp}
%%\vspace{-0.1cm}
We proceed as explained in the last paragraph above and prove that the constraints we added are correct. 
% of \cite{plp} proves the correctness of PLP. The proof is consist of showing that any trajectory scenario of the network is a feasible solution of the PLP; given a trajectory scenario, the variables of the PLP (time variable and process variables) are extracted from the scenario such that all constraints are satisfied. As we only add some constraints for the service curves at each node and we do not touch other constraints we only have to check whether the constrains that we added are correct. 
%Specifically,  the four constraints of \eqref{eq:nc1}-\eqref{eq:nc4} corresponds to strict service curve of Fig.~\ref{fig:ncService}.
Specifically, we only need to prove that a solution of the linear problem satisfies (with the notations previously introduced) $\mathbf{At}_u^v - \mathbf{At}_v^v \geq \max(\beta_c^{v}, \beta_c^{nc, v})(\tvar{u}{\scriptsize\depth(u)} - \tvar{v}{\scriptsize\depth(v)}).$
% \begin{equation}
%     \label{eq:iplp}
%     \mathbf{At}_u - \mathbf{At}_v \geq \max(\beta_c^{v}, \beta_c^{nc, v})(\tvar{u}{\scriptsize\depth(u)} - \tvar{v}{\scriptsize\depth(v)}).
% \end{equation}
From~\eqref{eq:sc1} and~\eqref{eq:sc2}, $\mathbf{At}_u^v - \mathbf{At}_v^v \geq \beta_c^{v}(\tvar{u}{\scriptsize\depth(u)} - \tvar{v}{\scriptsize\depth(v)})$ holds, so we just need to focus on $\beta_c^{nc, v}$, and distinguish two cases, depending on the value of $\mathbf{b}_v$: This follows from the cases explained above~\eqref{eq:nc1} and~\eqref{eq:nc3} and the fact that $M$ is large enough.
% if $\mathbf{b}_v = 1$, then ~\eqref{eq:nc1} and~\eqref{eq:nc3} are dummy (because $M$ is large enough) and from ~\eqref{eq:nc2}, $\mathbf{At}_u^v - \mathbf{At}_v^v \geq q_c^v \geq \beta_c^{nc, v} (\tvar{u}{\scriptsize\depth(u)} - \tvar{v}{\scriptsize\depth(v)})$. Otherwise, $\mathbf{b}_v = 0$ and the (non dummy) constraints from ~\eqref{eq:nc1} and~\eqref{eq:nc3} enforce that  $\mathbf{At}_u^v - \mathbf{At}_v^v \leq q_c^v$ and under this constraint, $\mathbf{At}_u^v - \mathbf{At}_v^v \geq R^v(\tvar{u}{\scriptsize\depth(u)} - \tvar{v}{\scriptsize\depth(v)} - T_1) \geq  \beta^{nc, v}_c (\tvar{u}{\scriptsize\depth(u)} - \tvar{v}{\scriptsize\depth(v)})$. 
Hence, it finishes the proof of 1). As iPLP has more constraints than PLP, the output of iPLP is less than or equal to that of PLP, which concludes 2). \qed

%\subsection{Proof of Theorem \ref{thm:tfa}}
%\hossein{Should we combine proof of Theorem \ref{thm:tfa} and \ref{thm:genMethod}? Theorem \ref{thm:tfa}  is a simpler application of the other where we have two functions $h$ and it is sequential.}
%\anne{That was my first plan before I saw the proof of Theorem 3 in the text. If you want I can try a combination}
%  \vspace{-0.2cm}
\subsection{Proof of Theorems~\ref{thm:tfa} and~\ref{thm:genMethod}}
\label{sec:proof-shared-memory}
% We first prove Theorem \ref{thm:genMethod} and the proof of Theorem \ref{thm:tfa} is obtained in a similar way.

% Before proving the theorems, let us first precise the general behavior of a shared memory in this paper.
%\vspace{-0.1cm}
%%%%%%% Shorter version:
We use the following lemma. We assume a finite set of isotone mappings $\mathcal{H}$,  in $\mathcal{C} \to \mathcal{C}$ with $\mathcal{C} \subseteq (\Reals^+\cup\{+\infty\})^I$. An \emph{execution} of $\mathcal{H}$ is $(h_k, s_k, t_k, u_k)_{k\geq 1}$ such that: $\forall k\geq 1$, (C1)  $h_k \in\mathcal{H}$; (C2) $0< s_k\leq t_k < u_k$; (C3)  $u_{k+1} > u_{k}$; %$s_{k+1} > s_{k}$; 
(C4) $\forall k'\neq k$, $(t_k, u_k] \mand (t_{k'}, u_{k'}]$ are disjoint.  We also assume that (C5) Each function $h\in\mathcal{H}$ is executed   infinitely many  times; (C6) $\lim_{k\to\infty} s_k = \infty$. In the description of Section~\ref{sec:refinePhase}, $s_k$, $t_k$ and $u_k$ respectively correspond to the reading time, locking and unlocking times of the write operation in the execution of the $k$-th refinement (by order of completion times). Let $x_0\in\mathcal{C}$ be the state of the memory at time 0. Given an execution $(h_k, s_k, t_k, u_k)_{k\geq 1}$, the state of the memory $x(t)$ evolves with time $t$ as follows: $x(t)$ is piece-wise constant, right-continuous; it is modified at times $u_k$, $k\geq 1$, and $x(u_k) = \min(x(t_k), h_k(x(s_k))$. 
\begin{lemma}\label{lem:fix}
	%Let $x_0\in\mathcal{C}$. Then there exists $x^*$ such that for all executions of $\mathcal{H}$ satisfying (H1) and (H2), 
	%timed sequence satisfying assumption (A), 
	%the memory state from $x_0$ satisfies $\lim_{t\to\infty} x^*$. \hossein{
	There exists $x^*$ such that for all executions of $\mathcal{H}$ as explained above,   $\lim_{t\to\infty} x(t) = x^*$.
	%}
	% \anne{keep?} Moreover, $x^*$ is a fixed point for all function $h\in\mathcal{H}$: $h(x^*) = x^*$. 
\end{lemma}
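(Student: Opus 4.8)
The plan is to exploit the structure of the update rule $x(u_k) = \min(x(t_k), h_k(x(s_k)))$ together with isotonicity of every $h\in\mathcal{H}$ and the order structure of $\mathcal{C}\subseteq(\Reals^+\cup\{+\infty\})^I$. First I would observe that the sequence of memory states is monotone non-increasing coordinatewise in the following sense: every write replaces the current value $x(t_k)$ by $\min(x(t_k), h_k(x(s_k)))$, so $x(t)$ is non-increasing in $t$ (as a function into the product order), since between writes it is constant and at each write it can only decrease. Because $\mathcal{C}$ is a subset of a product of $[0,+\infty]$, which is a complete lattice, the pointwise infimum $x^\infty(t) \isdef \inf_{s\le t}x(s)$ exists; by monotonicity $x(t)=x^\infty(t)$, and $\lim_{t\to\infty}x(t)$ exists in $[0,+\infty]^I$, call it $\bar x$. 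The real content is (i) that $\bar x\in\mathcal{C}$ and is a common fixed point, i.e.\ $h(\bar x)\ge \bar x$ for all $h\in\mathcal{H}$, so that in fact $h(\bar x)=\bar x$ once we also show $h(\bar x)\le\bar x$; and (ii) that $\bar x$ does not depend on the execution, but equals the greatest common ``post-fixed point below $x_0$'', namely $x^* \isdef \sup\{y\in\mathcal{C} : y\le x_0,\ \forall h\in\mathcal{H}\ h(y)\ge y\}$ — wait, since updates decrease values, the right object is rather the \emph{largest} $y\le x_0$ with $y\le h(y)$ for all $h$; I would define $x^*$ this way and show $x(t)\ge x^*$ for all $t$ and all executions, and $\bar x\le x^*$, hence $\bar x = x^*$.

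The key steps, in order: (1) Prove the \emph{lower bound / invariant}: if $y\le x_0$ and $h(y)\ge y$ for every $h\in\mathcal{H}$, then $x(t)\ge y$ for all $t\ge 0$. This is an induction on the writes $k$: assume $x(s)\ge y$ for all $s<u_k$ (in particular $x(s_k)\ge y$ and $x(t_k)\ge y$); then $h_k(x(s_k))\ge h_k(y)\ge y$ by isotonicity and the post-fixed-point property, so $x(u_k)=\min(x(t_k),h_k(x(s_k)))\ge y$. Base case $x(0)=x_0\ge y$. Taking $y=x^*$ gives $x(t)\ge x^*$, hence $\bar x\ge x^*$. (2) Prove $\bar x\le x^*$ by showing $\bar x$ is itself such a $y$: clearly $\bar x\le x_0$; for the post-fixed-point property fix $h\in\mathcal{H}$ and, using (C5), pick a subsequence $k_m\to\infty$ with $h_{k_m}=h$. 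Along this subsequence $x(s_{k_m})\to\bar x$ (by (C6), $s_{k_m}\to\infty$, and $x$ converges to $\bar x$), and $x(u_{k_m})\le h(x(s_{k_m}))$; I would need a continuity/semicontinuity property of $h$ to pass to the limit and conclude $\bar x = \lim x(u_{k_m})\le \liminf h(x(s_{k_m}))$, giving $h(\bar x)\ge\bar x$ when combined with lower semicontinuity of $h$ (or with the monotone-limit identity $h(\inf_m z_m)=\inf_m h(z_m)$ for the decreasing sequence $z_m=x(s_{k_m})$, which holds for the concrete maps here since they are defined by suprema of affine expressions / solutions of LPs that are continuous from the right in their decreasing arguments). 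Then $\bar x\le x^*$ by definition of $x^*$ as the supremum of all such $y$ — actually, since we want $x^*$ to be \emph{attained}, I would instead \emph{define} $x^*$ directly as the limit obtained from one canonical round-robin execution, prove via (1) and (2) that any execution's limit $\bar x$ satisfies $x^*\le\bar x$ and $\bar x\le x^*$, and conclude equality and execution-independence simultaneously. (3) Conclude $\lim_{t\to\infty}x(t)=x^*$ for every execution.

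The main obstacle is step (2), specifically the limit-passing $h(\lim_m z_m)\le \lim_m h(z_m)$ (equivalently, a form of continuity of the refinement operators along decreasing sequences converging to their limit, including the possibility that some coordinates stay at $+\infty$ or drop to finite values in the limit). For a fully general isotone $h$ this can fail, so the lemma as stated must be implicitly relying on the concrete refinements being ``continuous from the right'' / commuting with infima of decreasing chains — TFA fixpoints, $\plpcut$, $\plpb$, $\pd$, and $\drrb$ are each of this form because they are monotone and built from $\inf/\sup$ of continuous (affine, min, max, horizontal-distance, LP-value) functions of their arguments. I would therefore either (a) add the hypothesis that each $h\in\mathcal{H}$ is continuous along decreasing sequences (Scott-continuous for the reverse order), and verify this hypothesis for the five concrete refinements in a separate remark, or (b) argue step (2) more cheaply using (C5) twice: between two consecutive executions of the \emph{same} $h$, every other refinement has also run, and then a squeezing argument shows the state stabilizes so that eventually $x(u_{k_m})=h(x(s_{k_m}))$ exactly with $x(s_{k_m})=\bar x$ on the support that matters — this avoids a limiting argument but needs care with the infinite coordinates. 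I would go with (a) for cleanliness, making the continuity of the concrete operators the technical lemma that does the real work, and keep the lattice-theoretic skeleton (steps 1--3) as the conceptual core.
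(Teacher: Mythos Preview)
Your monotonicity/existence part is fine and matches the paper. The gap is in step~(2): you need the limit $\bar x$ to be a post-fixed point of every $h\in\mathcal{H}$, and you correctly flag that this requires each $h$ to commute with infima of decreasing chains. But the lemma as stated assumes only isotonicity, and under that hypothesis your step~(2) is simply false. Take $I=\{1\}$, $\mathcal{C}=[0,\infty]$, $x_0=3$, and
\[
h_1(x)=\tfrac{x+1}{2},\qquad h_2(x)=\begin{cases}x & x>1,\\ 0 & x\le 1.\end{cases}
\]
Both are isotone. Every execution keeps $x(t)>1$ (both maps preserve $(1,\infty]$ and $\min$ does too), while the infinitely many applications of $h_1$ force $\bar x\le 1$; hence $\bar x=1$ for every execution. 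Yet $h_2(1)=0<1$, so $\bar x$ is not a post-fixed point of $h_2$, and the largest common post-fixed point below $x_0$ is $0\ne \bar x$. Thus your characterization of $x^*$ and your route to uniqueness both break; option~(a) would prove a different (weaker) lemma, and option~(b) cannot be salvaged in general.

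The paper avoids this entirely by a direct comparison argument that uses only isotonicity. Given two executions with memory trajectories $x,x'$, it builds an increasing map $\varphi$ by, for each step $k$ of the first execution, waiting in the second execution until the same refinement $h_k$ is scheduled with a read time no earlier than $u'_{\varphi(k-1)}$. An induction on $k$ then shows $x(u_k)\ge x'(u'_{\varphi(k)})$: from $x(s_{k+1})\ge x(u_k)\ge x'(u'_{\varphi(k)})\ge x'(s'_{\varphi(k+1)})$ and isotonicity of $h_{k+1}=h'_{\varphi(k+1)}$ one gets the bound on the $h$-term, and the $\min$ with the current memory is handled similarly. Passing to the limit gives $x^*\ge x'{}^*$, and symmetry yields equality. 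No continuity is needed; the whole proof is a coupling of the two schedules rather than a fixed-point characterization.
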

%  \vspace{-0.25cm}
\begin{proof}
	Let us first prove that $\lim_{t\to\infty} x(t)$ exists given $(h_k, s_k, t_k, u_k)_{k\geq 1}$. %, $\lim_{t\to\infty} x(t)$ exists. %As $x$ is piece-wise constant with jumps at $(u_k)_{k\geq 1}$, 
	%Let $\varphi: \mathbb{N} \setminus \{0\} \to \mathbb{N} \setminus \{0\} $ be the one-to-one function such that $(u_{\varphi(k)})_{k\geq 1}$ is non-decreasing (it is the ordered sequence of jumps of $x$). 
	From (C2) and (C4),  $u_{k} \leq t_{k+1} < u_{k+1}$ holds, so $x(t_{k+1}) = x(u_{k})$. Then, $x(u_{k+1}) = \min(h_{k+1}(x(s_{k+1})), x(u_{k})) \leq x(u_{k})$, and $(x(u_{k}))_{k\geq 1}$ is a non-increasing sequence in $(\Reals \cup \{+\infty\})^I$, hence converges. As $\lim_{k\to\infty} u_{k} = +\infty$, $x(t)$  converges. 
	
	Second, we prove that the limit of $x$ depends  only on the initial value $x_0$.
	%does not depend on the execution of $\mathcal{H}$  given the initial value $x_0$. 
	Consider two executions $(h_k, s_k, t_k, u_k)_{k\geq 1}$, and $(h'_k, s'_k, t'_k, u'_k)_{k\geq 1}$, and the state of their respective shared memory $x$ and $x'$, with respective limits $x^*$ and $x'{}^*$. %Consider function $\varphi$ as defined in the previous paragraph, with $\varphi(0) = 0$, and 
	Set $u_0 = 0$ and define $\varphi$ the following way: Set $u'_0 = 0$ and $\varphi(0) = 0$. For all $k\geq 1$, define $\varphi(k) = \min\{k'\geq 1,~h'_{k'} =h_{k} \text{ and } s'_{k'} \geq u'_{\varphi(k-1)}\}$. In particular, by (C2) $u'_{\varphi(k-1)} \leq s'_{\varphi(k)} < u'_{\varphi(k)}$, so by (C3) $(\varphi(k))_{k\geq 1}$ is (strictly) increasing. 
	
	Let us now show by induction that $x(u_{k}) \geq x'(u'_{\varphi(k)})$ for all $k\geq 0$. The base case holds since $x(0) = x'(0) = x_0$. Let us now assume that $x(u_{k}) \geq x'(u'_{\varphi(k)})$. 
	
	On the one hand, $x(u_{k+1}) = \min(h_{k+1}(x(s_{k+1})), x(u_{k}))$. By induction hypothesis,  $x(u_{k}) \geq  x'(u'_{\varphi(k)})$. By construction $h_{k+1} = h'_{\varphi(k+1)}$ and by (C2) $s_{k+1} < u_{k+1}$, so $x(s_{k+1}) \geq x(u_{k})\geq x'(u'_{\varphi(k)})$.
	Then, it follows that $x(u_{k+1}) \geq \min(h'_{\varphi(k+1)}(x'(u'_{\varphi(k)})), x'(u'_{\varphi(k)}))$.
	%= h'_{\varphi'(k+1)}(x'_{u_{\varphi'(k)}})$.
	%
	
	On the other hand, $x'(u'_{\varphi(k+1)}) = \min (h'_{\varphi(k+1)}(x'(s'_{\varphi(k+1)})), x'(t'_{\varphi(k+1)}))$. 
	By construction of $\varphi$,  $t'_{\varphi(k+1)} \geq s'_{\varphi(k+1)} \geq u'_{\varphi(k)}$, so $x'(t'_{\varphi(k+1)}) \leq x'(u'_{\varphi(k)})$. 
	%As $x'$ is non-increasing, 
	We finally obtain $x'(u'_{\varphi(k+1)}) \leq  \min(h'_{\varphi(k+1)}(x'(u'_{\varphi(k)})), x'(u'_{\varphi(k)}))$. %= %h'_{\varphi'(k+1)}(x'_{u'_{\varphi'(k)}})$.
	Hence,  $x(u_{k+1}) \geq x'(u'_{\varphi(k+1)})$, which proves the induction step. 
	
	Therefore, $x^* = \lim_{t\to\infty}x(t) = \lim_{k\to\infty} x(u_{k}) \geq \lim_{k\to\infty} x'(u'_{\varphi(k)}) = x'{}^*$.
	Inverting the roles of the two executions finishes the proof.
	%        Inverting the roles of $(h_k, s_k, t_k, u_k)_{k\geq 1}$ and $(h'_k, s'_k, t'_k, u'_k)_{k\geq 1}$ leads to $x^* \leq x'{}^*$ and hence $x^*=x'{}^*$. 
	%
	% We now prove that for each function $h\in\mathcal{H}$,  $x^* = h(x^*)$. First, we know by definition of $h$, that $x^* \geq h(x^*)$. Second, fix an execution $(h_k, s_k, t_k, u_k)$.  Since $h$ appears infinitely often in this execution, one can define $\psi$ the sequence of the occurrences of $h$: $h_{\psi(k)}$ is the $k$-th occurrence of $h$ in the execution. Then $x(u_{\psi(k)}) = \min(x(t_{\psi(k)}), h(x(s_{\psi(k)}))) \leq  h(x(s_{\psi(k)}))$. Since $(s_k)$, $(t_k)$ and $(u_k)$ all goes to $\infty$ as $k$ increases, we have  
	% $x^* \leq h(x^*)$ (\anne{We should have a continuity hypothesis somewhere to conclude that $\lim h(x_k) = h(x^*)$. Is this used somewhere?})
\end{proof}
\begin{proof}[Proof of Theorem~\ref{thm:genMethod}]
	The shared memory contains  non-negative numbers (burstiness bounds and per-node delays) and piece-wise linear convex functions (the DRR strict service curves). First, observe that the piece-wise linear convex functions we deal with can be described by a finite set of elements of $\Reals^+ \cup \{+\infty\}$.
	As explained in Section~\ref{sec:drr}, the result of $\drrb(\beta^v, b^v)$ is the maximum of a finite number of rate-latency functions, whose rates are in a finite set, depending on fixed parameters (the arrival rates of flows) and the latencies are linearly decreasing with the output burstiness bounds $b^v$. 
	Hence, the shared memory can be expressed as a family of $x = (T, d, \zcut, b)$.
	
	%First, we observe the following: Consider the DRR strict service curve refinement of Corollary \ref{cor:serviceBurst} ${\beta^v}' = \drrb_v \lp b^* ,\beta^v \rp$:  ${\beta^v}'$ is equal to maximum of $\beta^v $ and, as explained after Corollary \ref{cor:serviceBurst},  finite number of rate-latency functions, where the latency part is a linear, increasing function of $b^*$, the output burstiness. This means each time $\drrb_v$ is visited, only finite number of latencies are improved. Hence, what we improve in phase 1 are triple of $\lp T,d,z \rp $, where $T$ is the collection of latencies. Then, applying each refinement to $\lp T,d,z \rp $ result in $\lp T',d',z' \rp $, where $T' \leq T$, $d' \leq d$, and $z' \leq z$.

	\subsubsection{Validity} Assuming valid initial bounds, the four types of functions used, $\plpb$, $\plpcut$, $\drrb$ and $\pd$ (defining set $\mathcal{H}$), provide valid bounds, proved in the literature.
	%The initial value is the output of the initial phase proved to be valid. Then four types of functions are used: $\plpb$, $\plpcut$, $\drrb$ and $\mathcal{D}$ (defining set $\mathcal{H}$). For all these operations have been proved valid in the literature if they are used on valid input. 
	Moreover, if $(T, d, \zcut, b)$ and $(T', d', z'{}^{\textrm{cut}}, b')$ are valid bounds, then $(T\land T', d\land d', \zcut\land z'{}^{\textrm{cut}}, b\land b')$ are also valid bounds (where $\land$ is the minimum operation). This is straightforward for the per-node delays and burstiness bounds. For the service curves, it is enough to notice that the maximum of two strict service curves for a  node is also a strict service curve for that node. So if $x(t_k)$ and $h_k(x(s_k))$ represent valid bounds, $\min(x(t_k), h_k(x(s_k)))$ are also valid bounds, so $x(t)$ always represents valid bounds.

	\subsubsection{Convergence} We then set $\mathcal{C}$ as the set of valid parameters for the problem
	and apply Lemma~\ref{lem:fix} where
	$s_k$, $t_k$ and $u_k$ respectively correspond to the reading time, locking and unlocking times of the write operation. (C1)--(C3) hold by definition and (C4) because of the lock operation. (C5) follows from (H). Furthermore, since there is a finite number of workers, (H) also implies that every execution completes except for at most a finite number, which implies (C6).
	%\jylb{say why (C4)-(C6) hold}. %whatever the order of the operations, provided each refinement is applied infinitely often, the shared memory converges to a values independent of the order of operations. 
\end{proof}
% \vspace{-0.25cm}
%
\begin{proof}[Proof of Theorem \ref{thm:tfa}]\label{sec:proof-tfa}
	Consider lines 2-6 of Algorithm~\ref{alg:initPahse}, and assume an infinite loop. The algorithm is sequential and hence is %can be seen as 
	a specific case of the shared memory computing, where updates are one after the other, i.e., $\forall k, s_{k+1} > u_{k}$. The variables $x$ is the collection $(T, d, z)$ and the initial value is $+\infty$ at each coordinate except two latencies per node and class (from line~\ref{line:betaCDM}). 
	Two types of functions are used: $\tfa$ and  $\drri$ (defining set $\mathcal{H}$). Note that decreasing delays and bursts decreases the latencies involved in the DRR service curves and conversely, so at each step $x(t_k)=x(s_k) \geq h_k(x(s_k))$ and $x(u_k) = h_k(x(s_k))$, which is exactly how the algorithm is updated. As a consequence, Lemma~\ref{lem:fix} can be applied: $x(t)$ converges in $\Reals^+ \cup\{+\infty\}$. 
	%In particular, every coordinate either becomes finite and converges in $\Reals^+$, either remains $+\infty$. 
	%          Hence, the proof given above holds for this theorem as well; therefore, a the limit exists and bounds are valid (if finite).
\end{proof}

\section{Numerical Evaluation} \label{sec:numEval}

\begin{figure}[t]
	\centering
	%	\title={.}
	\includegraphics[width=0.65\linewidth]{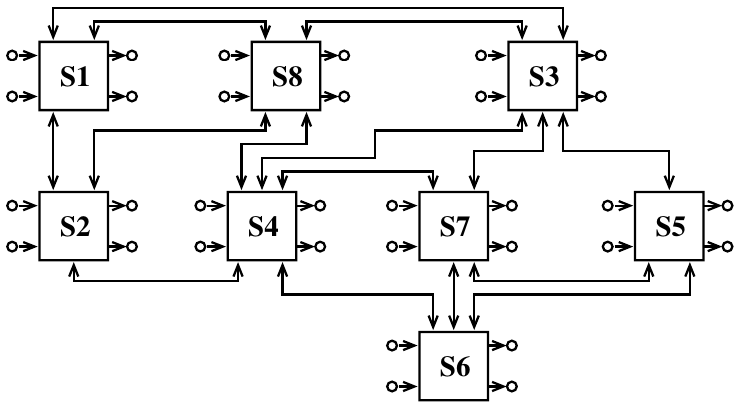}
	\caption{\sffamily \small Industrial-sized network topology. Figure from \cite{1647738}.}
	\label{fig:indusNet}
\end{figure}

\begin{figure}[b]
    \centering
    \includegraphics[width=\columnwidth]{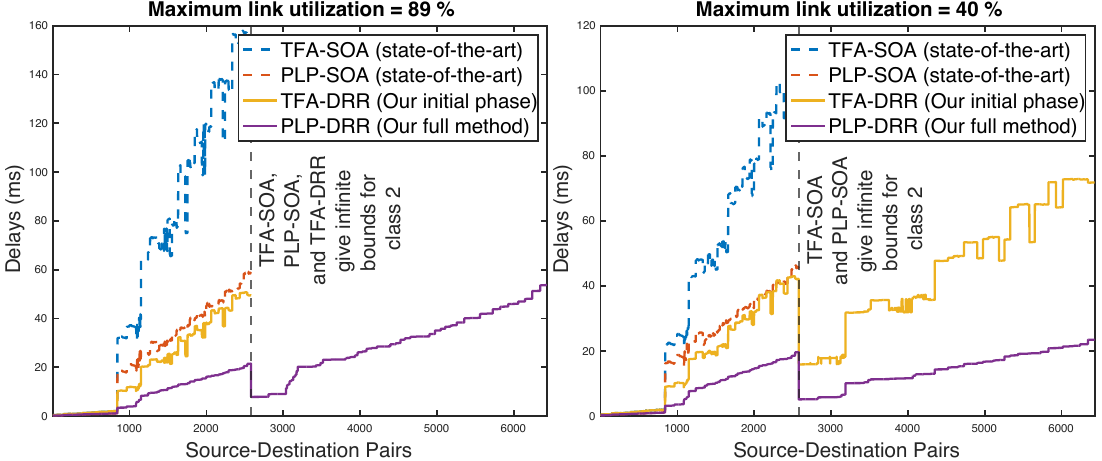} 
    \caption{Delay bounds obtained by our methods, TFA-DRR and PLP-DRR (plain plots), and the state-of-the-art, TFA-SOA and PLP-SOA (dashed plots). TFA-SOA and PLP-SOA use DRR strict service curves in the degraded operational mode. TFA-SOA and PLP-SOA both provide infinite bounds for class $2$ even when the maximum link utilization is $40 \%$. Source-destination paths are ordered by values of our full method, and finite delays for other methods are shown first.}%
    \label{fig:sot}%
\end{figure}

% \begin{figure}[htbp]
%     \centering
%     \subfloat{{\includegraphics[width=0.4\columnwidth]{TON_v1/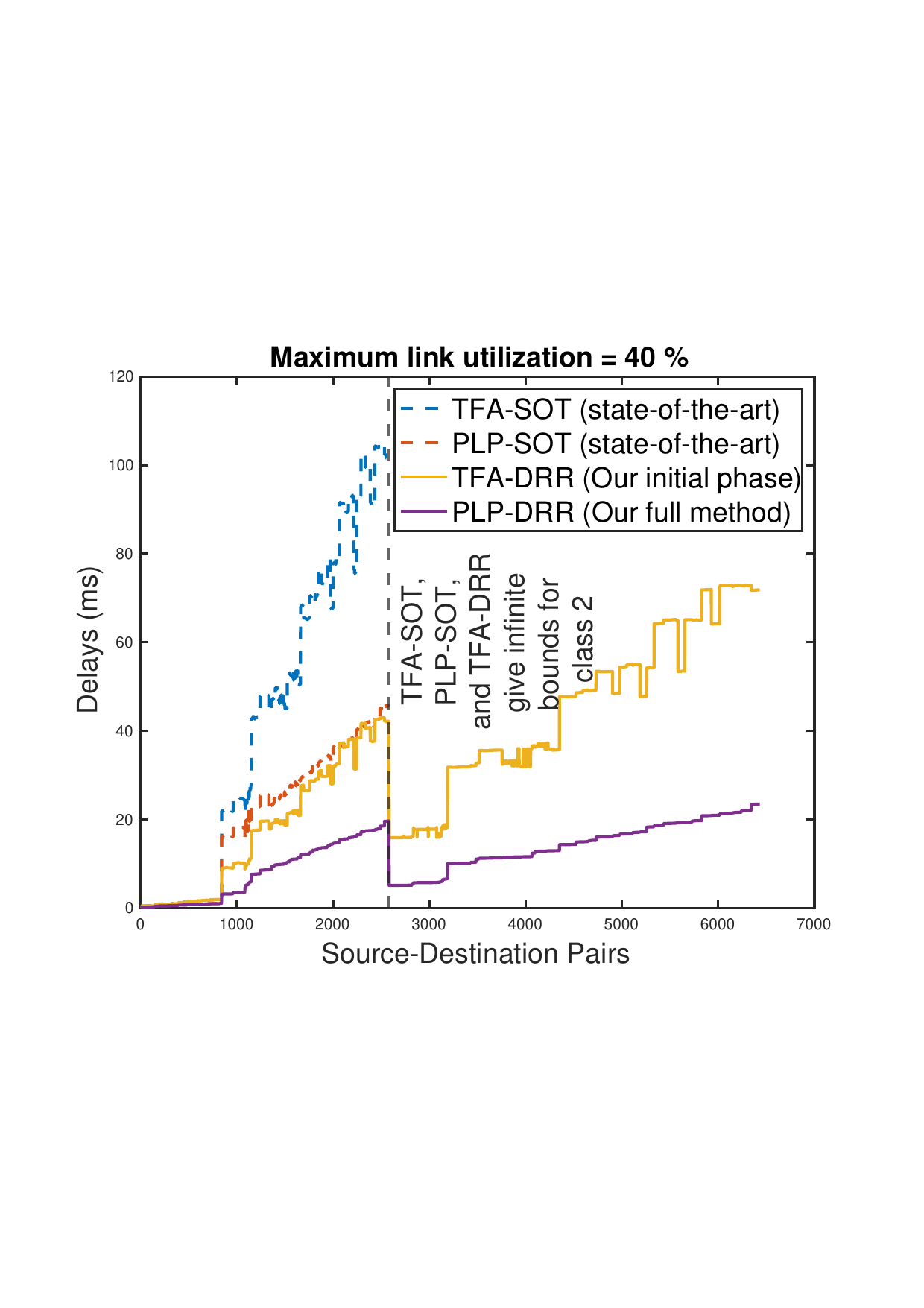} }}%
%     \qquad
%     \subfloat{{\includegraphics[width=0.4\columnwidth]{TON_v1/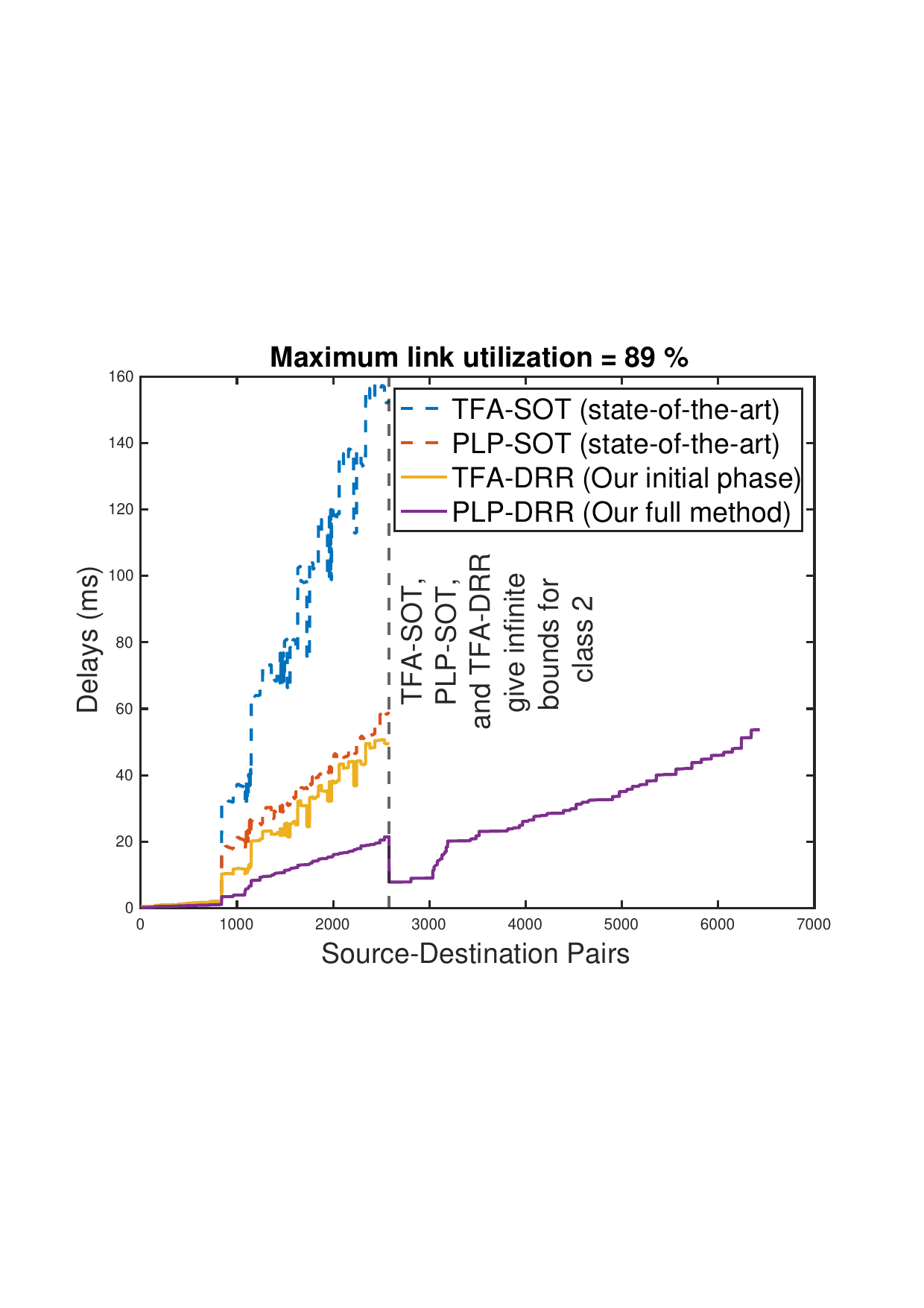} }}%
%     \caption{Compression of bounds obtained by our methods, TFA-DRR and PLP-DRR (plain plots), to the state-of-the-art, TFA-SOT and PLP-SOT (dashed plots). TFA-SOT and PLP-SOT use DRR strict service curves in the degraded operational mode (i.e., service curves that only depend on the assigned quentua and maximum packet sizes and have no assumption on the interfering traffic. Our methods significantly improve the obtained bounds and give a larger stability region:  TFA-SOT and PLP-SOT both provide infinite bounds for class $2$ even when the maximum link utilization is $40 \%$.)
%     }%
%     \label{fig:sot}%
% \end{figure}

\begin{figure*}[htbp]
	\centering
	%	\title={.}
	\includegraphics[width=\linewidth]{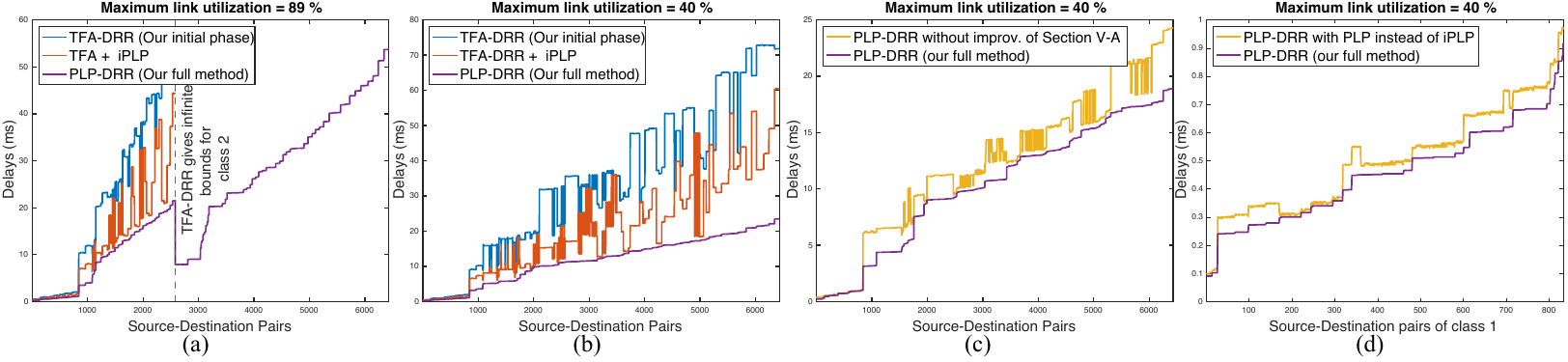}
	\caption{\sffamily \small 
	Delay bounds of our method compared to alternative methods. (a-b): comparison with 1)-2). At link utilization $89 \%$, TFA-DRR analysis gives infinite delays for class 2. (c-d): effect of the two PLP improvements of Section~\ref{sec:2imp}. Source-destination paths are ordered by values of our full method, except in (a) where finite delays for TFA-DRR are shown first.
	%(a),(b): Delay bounds of our method compared to alternatives that uses some elements of our method. When the maximum link utilization reaches $89 \%$, TFA analysis of class $2$ gives infinite bounds. \anne{However, our method gives finite bounds for any maximum link utilization below $100 \%$. this is already written in the main text and has nothing to do with the figure} (c),(d): Showing the effect of the two improvements to PLP in Section~\ref{sec:2imp}. At higher link utilizations the improvement are larger than on the figure. Source-destination paths are ordered by values of our full method \hossein{(for (a) paths with infinite TFA bounds are separately ordered.)} \jylb{Also for panel (a)?
	}%. This is a comparison between our full method and our full method without out PLP improvement of per-aggregate bound on burstiness (Theorem \ref{thm:plpBack}). At higher link utilization we have a better improvements.  (b) When delay bounds are small, non-convex part of DRR strict service curves bring an improvement. This is captured by our iPLP. Source-destination paths are ordered by values of our full method. Source-destination paths are ordered by values of our full method.
	\label{fig:numFull}
\end{figure*}

We use the network of Fig.~\ref{fig:indusNet}, a test configuration provided by Airbus in \cite{Grieu-line-shaping}. The industrial-sized case study of \cite{drr_rtas} is based on this network in \cite{1647738}.  It includes 96 end-systems, 8 switches, 984 flows, and 6412 possible paths. The rate of the links are equal to $R =1$ Gb/s, and every switch $S_i$ has a switching latency equal to $16 \mu$s. Every switch has 6 input and 6 output end-systems. There are three classes of flows: (1) critical, (2) multimedia and (3) best-effort. Worst-case delay bounds are required for classes 1 and 2 only. There is one DRR scheduler at every switch output port with $n=3$ classes (see Table~\ref{tab:networkConfig} for details).
% At every DRR scheduler, the quanta are 3070 bytes for the critical class, 1535~bytes for the multimedia and best-effort classes.
% %class, and 1535~bytes for the best-effort class.
% 128 multicast flows, with 834 destinations, are critical; they have a maximum packet-size of 150 bytes. 500 multicast flows, with 3845 destinations, are multimedia;  %and their class has a quantum equal to 1535 bytes
% they have a maximum packet-size of 500 bytes. 266 multicast flows, with 1733 destinations, are best-effort; they have a maximum packet-size of 1535 bytes. 
For every flow, the path from the source to a destination can traverse at most $4$ switches. In \cite{drr_rtas}, as their method only applies to feed-forward networks, flow paths are chosen randomly with the constraint that graphs induced by flows are feed-forward. In this paper, we removed this restriction and, as the network has much redundancy, this automatically generates induced flow graphs with cyclic dependencies, and is more representative of a realistic deployment. We obtain different network utilization factors by varying the minimum packet inter-arrival times. We consider two modes: when the network is lightly loaded with a maximum link utilization of 40\%, and when the network is highly loaded with a maximum link utilization close to 100\%.

% \bleu{(it is ....  and for the maximum link utilization respectively 40\% and 89\%)}.

\begin{table}[htbp]
	\centering
	\caption{\sffamily \small Traffic Characterization}
	\scalebox{0.73}{
		\begin{tabular} {|c|| c c c |}
		\hline		
		Traffic Classes & Number of Flows &  Assigned Quantum (bytes) &  Maximum Packet size (bytes)\\
		\hline
        Critical &$834$ &$3070$ &$150$   \\
		\hline
		Multimedia  &$3845$& $1535$ & $500$  \\
		\hline
  	Best Efforts &$1733$& $1535$ & $1535$  \\
		\hline
	\end{tabular}
	}
	\label{tab:networkConfig}
\end{table}

As of today, the only methods that compute bounds on the worst-case delay of DRR networks with cyclic dependencies are TFA-SOA (see Section~\ref{sec:tfa}) and PLP-SOA (see Section~\ref{sec:plp}) where they use DRR strict service curves in degraded operational mode (see Section~\ref{sec:drrservice_dm}). As illustrated in Fig.~\ref{fig:sot}, our methods TFA-DRR and PLP-DRR significantly improve delay bounds and provide larger stability regions.

 As our method, PLP-DRR, contains a number of improvements, we perform a numerical analysis to evaluate whether each of our improvement is useful or not. We compare our full method, PLP-DRR, to the following four alternatives:

1) TFA-DRR: we apply the initial phase only; this is the best that can be obtained with TFA and DRR service curves.

2) TFA-DRR + iPLP: we apply the initial phase and the post-process phase but do not apply the refinement phase; this shows the effect of the refinement phase.

3) Full method but %with one difference:  
we do not use our improvement in Section~\ref{sec:plpBack} to compute per-aggregate bounds; instead, we sum the burstiness bounds of every individual flow obtained using PLP; this shows the effect of this PLP improvement. 

4) Full method but with PLP instead of iPLP in the post-process phase; this shows the effect of this PLP improvement. 

% \begin{table}[htbp]
% 	\centering
% 	\caption{\sffamily \small \bleu{Run-times of initial and Refinement Phase}}
% 	\scalebox{0.7}{
% 		\begin{tabular} {|c|| c c c |}
% 		\hline		
% 		Maximum Link Utilization &Initial Phase (TFA) &  Refinement Phase Version 1 &  Refinement Phase Version 2\\
% 		\hline
%         $40 \%$ (minutes) &$[0.5,~0.6]$ &$[5.8,~5.9]$ &$[4.6,~4.7]$   \\
% 		\hline
% 		$89 \%$ (minutes) &$[3,~4]$& $[9,~10]$ & $[5.5,~6.5]$  \\
% 		\hline
% 	\end{tabular}
% 	}
% 	\label{tab:runtimes}
% \end{table}

% Please add the following required packages to your document preamble:
% \usepackage{multirow}
% \usepackage{graphicx}
\begin{table*}[htbp]
\centering
\caption{Run-times}
\label{tab:runtimes}
\resizebox{\textwidth}{!}{%
\begin{tabular}{|c|c|c|c|c|c|}
\hline
Maximum Link Utilization &
  Initial Phase (TFA-DRR) &
  Refinement Phase Version 1 &
  Refinement Phase Version 2 &
  Post-Process (iPLP) &
  Post-Process (PLP) \\ \hline
40 \% &
  {[}0.5,0.6{]} (minutes) &
  {[}5.8,5.9{]} (minutes) &
  {[}4.6,4.7{]} (minutes) &
    {\begin{tabular}[c]{@{}c@{}}{[}6.4,6.5{]} (seconds)\\ \end{tabular}} &
  {\begin{tabular}[c]{@{}c@{}}{[}5.7,5.75{]} (seconds)\\ \end{tabular}} \\ \cline{1-4}
  % \multirow{2}{*}{\begin{tabular}[c]{@{}c@{}}{[}6.4,6.5{]}\\ (s)\end{tabular}} &
  % \multirow{2}{*}{\begin{tabular}[c]{@{}c@{}}{[}5.7,5.75{]}\\ (s)\end{tabular}} \\ \cline{1-4}
89 \% &
  {[}3,4{]} (minutes) &
  {[}9,10{]} (minutes) &
  {[}5.5,6.5{]} (minutes) &
   &
   \\ \hline
\end{tabular}%
}
\end{table*}

In Fig.~\ref{fig:numFull}~(a),(b) we compare delay bounds of 1) and 2) to our full method; TFA diverges for class $2$ at link utilization of $89 \%$ whereas our full method remains stable at all link utilizations below $100 \%$. In Fig.~\ref{fig:numFull}~(c), we compare 3) to our full method. Our PLP improvements reduce the delay bounds as well as the run-times: when we use PLP per-aggregate, we solve fewer PLPs. In Fig.~\ref{fig:numFull}~(d), we compare 4) to our full method. We show numerically that when delay bounds are small, iPLP captures a non-convex part of DRR strict service curves and brings an improvement compared to PLP. Experimentally, the improvements increase with the link utilization.

Note that infinite bounds might only be obtained in networks with cyclic dependencies that are highly loaded (as we assume local stability), e.g., in the industrial network we tested, when the maximum link utilization reaches $89 \%$, TFA-DRR provides infinite bounds (see Fig.~\ref{fig:numFull}). Obtaining finite bounds implies that the network is truly stable; whereas, obtaining infinite bounds does not necessarily imply that the network is unstable, and  the network might or might not be truly stable.

We use MATLAB on a $2.6$ GHz $6$-Core Intel Core i$7$ computer; thus, we have $6$ workers to implement our two parallel versions of Section \ref{sec:refinePhase}.  We provide the $95\%$ confidence interval for the run-times of Initial phase (TFA) and  two parallel implementations of the refinement phase at two different maximum link utilization; we run the program $10$ times. When the maximum link utilization is $40 \%$ and $89 \%$, $95\%$ confidence intervals of run-times are provided in Table~\ref{tab:runtimes}. Regarding the post-process phase, we compare run-times of PLP and iPLP for the flows with the longest path. We run the program $100$ time and give the $95\%$ confidence intervals  in Table~\ref{tab:runtimes}.

Our full method, PLP-DRR, clearly dominates the bounds and the stability region compared to our initial phase, TFA-DRR (see Fig.~\ref{fig:numFull}), but, comes with longer run times. As observed in this industrial-sized network, PLP-DRR converges relatively fast, and  bounds are obtained in several minutes; also, we experimentally observe that even in a very large network (a ring-shaped topology with $20$ nodes and $3$ DRR classes), our method PLP-DRR converges in $6$ hours; note that these run-times are obtained using  a 2.6 GHz 6-Core Intel Core i7 laptop, and one should expect a fraction of these run-times in a server or a more powerful computing setup. Nevertheless, as long as PLP-DRR is feasible and has a finite run-time, say hours, the extended running time of PLP-DRR compared to TFA-DRR is not a  concern. This is because typically finding delay bounds in a network setting is an offline problem, hence run times that are in the order of several hours are acceptable. Moreover, our bounds in the refinement phase are always valid even before convergence, hence, one can always stop iterating in the refinement phase with respect to a run-time budget and obtain delay bounds that are at least as good as those of the initial phase. Lastly,  we experimentally observe that iPLP comes at a negligible cost compared to PLP (see Table~\ref{tab:runtimes}) even when the network becomes very large.

\section{Conclusion}
\label{sec:conc}
% \vspace{-0.2cm}
We solved the problem of how to combine DRR strict service curves and the network analysis of PLP in order to obtain worst-case delay bounds in time-sensitive networks. Our method is guaranteed to find delay bounds that are at least as good as the state-of-the-art, and we found very significant improvements on the industrial network under study. It is based on a generic shared memory execution model, implementations of which can differ by the scheduling of the individual operations in the refinement phase. We proved that all implementations produce the same bounds. We proposed two concrete implementations and found that the latter performs faster. It will be interesting to study other concrete implementations that aim at reducing computing time.

\bibliographystyle{IEEEtran}
\vspace{-0.05in}
\bibliography{ref2}

% Generated by IEEEtran.bst, version: 1.14 (2015/08/26)
\begin{thebibliography}{10}
\providecommand{\url}[1]{#1}
\csname url@samestyle\endcsname
\providecommand{\newblock}{\relax}
\providecommand{\bibinfo}[2]{#2}
\providecommand{\BIBentrySTDinterwordspacing}{\spaceskip=0pt\relax}
\providecommand{\BIBentryALTinterwordstretchfactor}{4}
\providecommand{\BIBentryALTinterwordspacing}{\spaceskip=\fontdimen2\font plus
\BIBentryALTinterwordstretchfactor\fontdimen3\font minus
  \fontdimen4\font\relax}
\providecommand{\BIBforeignlanguage}[2]{{%
\expandafter\ifx\csname l@#1\endcsname\relax
\typeout{** WARNING: IEEEtran.bst: No hyphenation pattern has been}%
\typeout{** loaded for the language `#1'. Using the pattern for}%
\typeout{** the default language instead.}%
\else
\language=\csname l@#1\endcsname
\fi
#2}}
\providecommand{\BIBdecl}{\relax}
\BIBdecl

\bibitem{DRR}
M.~{Shreedhar} and G.~{Varghese}, ``Efficient fair queuing using deficit
  round-robin,'' \emph{IEEE/ACM Transactions on Networking}, vol.~4, no.~3, pp.
  375--385, 1996.

\bibitem{Aliquem}
L.~{Lenzini}, E.~{Mingozzi}, and G.~{Stea}, ``Aliquem: a novel {DRR}
  implementation to achieve better latency and fairness at {$O(1)$}
  complexity,'' in \emph{IEEE 2002 Tenth IEEE International Workshop on Quality
  of Service (Cat. No.02EX564)}, 2002, pp. 77--86.

\bibitem{drr_rtas}
S.~M. Tabatabaee and J.-Y.~L. Boudec, ``Deficit round-robin: A second network
  calculus analysis,'' in \emph{2021 IEEE 27th Real-Time and Embedded
  Technology and Applications Symposium (RTAS)}, 2021, pp. 171--183.

\bibitem{1043123}
S.~S. {Kanhere} and H.~{Sethu}, ``On the latency bound of deficit round
  robin,'' in \emph{Proceedings. Eleventh International Conference on Computer
  Communications and Networks}, 2002, pp. 548--553.

\bibitem{10.5555/923589}
D.~Stiliadis, ``Traffic scheduling in packet-switched networks: Analysis,
  design, and implementation,'' Ph.D. dissertation, 1996, aAI9637506.

\bibitem{Lenzini_fullexploitation}
L.~Lenzini, E.~Mingozzi, and G.~Stea, ``Full exploitation of the deficit round
  robin capabilities by efficient implementation and parameter tuning.''

\bibitem{boyer_NC_DRR}
M.~{Boyer}, G.~{Stea}, and W.~M. {Sofack}, ``Deficit round robin with network
  calculus,'' in \emph{6th International ICST Conference on Performance
  Evaluation Methodologies and Tools}, 2012, pp. 138--147.

\bibitem{anne_drr}
A.~Bouillard, ``Individual service curves for bandwidth-sharing policies using
  network calculus,'' \emph{IEEE Networking Letters}, vol.~3, no.~2, pp.
  80--83, 2021.

\bibitem{bouillard_deterministic_2018}
A.~Bouillard, M.~Boyer, and E.~Le~Corronc, \emph{Deterministic Network
  Calculus: From Theory to Practical Implementation}.\hskip 1em plus 0.5em
  minus 0.4em\relax Wiley-{ISTE}.

\bibitem{ludo_cycle}
L.~Thomas, J.-Y. Le~Boudec, and A.~Mifdaoui, ``On cyclic dependencies and
  regulators in time-sensitive networks,'' in \emph{2019 IEEE Real-Time Systems
  Symposium (RTSS)}, 2019, pp. 299--311.

\bibitem{sync_TFA}
\BIBentryALTinterwordspacing
S.~Plassart and J.-Y. Le~Boudec, ``Equivalent versions of total flow
  analysis,'' \emph{CoRR}, vol. abs/2111.01827, 2021. [Online]. Available:
  \url{https://arxiv.org/abs/2111.01827}
\BIBentrySTDinterwordspacing

\bibitem{plp}
\BIBentryALTinterwordspacing
A.~Bouillard, ``Trade-off between accuracy and tractability of network calculus
  in {FIFO} networks,'' \emph{Perform. Eval.}, vol. 153, no.~C, feb 2022.
  [Online]. Available: \url{https://doi.org/10.1016/j.peva.2021.102250}
\BIBentrySTDinterwordspacing

\bibitem{ludb}
A.~Scheffler and S.~Bondorf, ``Network calculus for bounding delays in
  feedforward networks of fifo queueing systems,'' in \emph{Quantitative
  Evaluation of Systems}, A.~Abate and A.~Marin, Eds.\hskip 1em plus 0.5em
  minus 0.4em\relax Cham: Springer International Publishing, 2021, pp.
  149--167.

\bibitem{fp}
F.~Geyer, A.~Scheffler, and S.~Bondorf, ``Tightening network calculus delay
  bounds by predicting flow prolongations in the fifo analysis,'' in \emph{2021
  IEEE 27th Real-Time and Embedded Technology and Applications Symposium
  (RTAS)}, 2021, pp. 157--170.

\bibitem{4617308}
D.~B. {Chokshi} and P.~{Bhaduri}, ``Modeling fixed priority non-preemptive
  scheduling with real-time calculus,'' in \emph{2008 14th IEEE International
  Conference on Embedded and Real-Time Computing Systems and Applications}, Aug
  2008, pp. 387--392.

\bibitem{858698}
L.~{Thiele}, S.~{Chakraborty}, and M.~{Naedele}, ``Real-time calculus for
  scheduling hard real-time systems,'' in \emph{2000 IEEE International
  Symposium on Circuits and Systems (ISCAS)}, vol.~4, May 2000, pp. 101--104
  vol.4.

\bibitem{cormen}
T.~H. Cormen, C.~E. Leiserson, R.~L. Rivest, and C.~Stein, \emph{Introduction
  to Algorithms, Third Edition}, 3rd~ed.\hskip 1em plus 0.5em minus 0.4em\relax
  The MIT Press, 2009.

\bibitem{drr_ton}
S.~M. Tabatabaee and J.-Y. Le~Boudec, ``Deficit round-robin: A second network
  calculus analysis,'' \emph{IEEE/ACM Transactions on Networking}, pp. 1--15,
  2022.

\bibitem{tfa}
\BIBentryALTinterwordspacing
J.~B. Schmitt and F.~A. Zdarsky, ``The disco network calculator: A toolbox for
  worst case analysis,'' in \emph{Proceedings of the 1st International
  Conference on Performance Evaluation Methodolgies and Tools}, ser. valuetools
  '06.\hskip 1em plus 0.5em minus 0.4em\relax New York, NY, USA: Association
  for Computing Machinery, 2006, p. 8–es. [Online]. Available:
  \url{https://doi.org/10.1145/1190095.1190105}
\BIBentrySTDinterwordspacing

\bibitem{tfa++}
\BIBentryALTinterwordspacing
A.~Mifadoui and T.~Leydier, ``{Beyond the Accuracy-Complexity Tradeoffs of
  CompositionalAnalyses using Network Calculus for Complex Networks},'' in
  \emph{{10th International Workshop on Compositional Theory and Technology for
  Real-Time Embedded Systems (co-located with RTSS 2017)}}, Paris, France, Dec.
  2017, pp. pp. 1--8. [Online]. Available:
  \url{https://hal.archives-ouvertes.fr/hal-01690096}
\BIBentrySTDinterwordspacing

\bibitem{Grieu-line-shaping}
\BIBentryALTinterwordspacing
J.~Grieu, ``Analyse et {\'e}valuation de techniques de commutation ethernet
  pour l'interconnexion des syst{\`e}mes avioniques,'' September 2004.
  [Online]. Available: \url{https://oatao.univ-toulouse.fr/7385/}
\BIBentrySTDinterwordspacing

\bibitem{Bou19}
\BIBentryALTinterwordspacing
A.~Bouillard, ``Stability and performance bounds in cyclic networks using
  network calculus,'' in \emph{Formal Modeling and Analysis of Timed Systems -
  17th International Conference, {FORMATS} 2019, Amsterdam, The Netherlands,
  August 27-29, 2019, Proceedings}, ser. Lecture Notes in Computer Science,
  {\'{E}}.~Andr{\'{e}} and M.~Stoelinga, Eds., vol. 11750.\hskip 1em plus 0.5em
  minus 0.4em\relax Springer, 2019, pp. 96--113. [Online]. Available:
  \url{https://doi.org/10.1007/978-3-030-29662-9\_6}
\BIBentrySTDinterwordspacing

\bibitem{1647738}
H.~{Charara}, J.~. {Scharbarg}, J.~{Ermont}, and C.~{Fraboul}, ``Methods for
  bounding end-to-end delays on an afdx network,'' in \emph{18th Euromicro
  Conference on Real-Time Systems (ECRTS'06)}, 2006, pp. 10 pp.--202.

\bibitem{le_boudec_network_2001}
J.-Y. Le~Boudec and P.~Thiran, \emph{Network Calculus: A Theory of
  Deterministic Queuing Systems for the Internet}.\hskip 1em plus 0.5em minus
  0.4em\relax Springer Science \& Business Media, 2001, vol. 2050.

\bibitem{Changbook}
C.~S. Chang, \emph{Performance Guarantees in Communication Networks}.\hskip 1em
  plus 0.5em minus 0.4em\relax New York: Springer-Verlag, 2000.

\end{thebibliography}
% \bibliography{ref}
%\appendix
%\input{nc}
\begin{appendices}
% !TeX root = mainTON.tex
\section{Detailed Background on DRR Strict Service Curves \cite{drr_rtas,drr_ton}} \label{app:drr}
Here we present more background on DRR strict service curves of \cite{drr_ton}, using our notations, that enables a reader to implement what we use in the paper.
\vspace{-0.2cm}
\subsection{Degraded Operational Mode} \label{app:drrDM}
\vspace{-0.1cm}

Here we present Corollary 2 of \cite{drr_ton} that presents a convex strict service curve for DRR, in  degraded operational mode:

Let $v$ be a node that, shared by $n$ classes,  uses DRR, as explained in Section \ref{sec:drr}, with quantum $Q_c^v$ for class $c$. The node offers a strict service curve $B^v$ to the aggregate of the $n$ classes. For any class $c$, $\mdelta_c$ is the maximum residual deficit  defined by $\mdelta_c = \lmax_c - \epsilon$  where $\lmax_c$ is an upper bound on the packet size of flows of class $c$ at node $v$ and $\epsilon$ is the smallest unit of information seen by the scheduler (e.g., one bit, one byte, one 32-bit word, ...).
    
Then, for every $c$, $v$ offers to class $c$ a strict service curve $\betacdm$ given by $\betacdm(t)=\gammacon \lp B^v(t) \rp$ with $\gammacon = 	 \max\lp\beta_{\rmax_c,\tmax_c}, \beta_{\rmin_c,\tmin_c}\rp$,
 %    \begin{align}
	% 	\label{eq:gammacon}
	% 	\gammacon &= 	 \max\lp\beta_{\rmax_c,\tmax_c}, \beta_{\rmin_c,\tmin_c}\rp
	% \end{align}
 % where 
 $\rmax_c = \frac{Q^v_c}{Q^v_\tot}$, $\tmax_c = \sum_{c',c' \neq c}\lp Q^v_{c'} + \mdelta_{c'}  +\frac{Q^v_{c'}}{Q^v_c} \mdelta_c  \rp$, $\rmin_c = \frac{Q^v_c - \mdelta_c}{Q^v_\tot - \mdelta_c}$, $\tmin_c = \sum_{c',c' \neq c}\lp Q^v_{c'} + \mdelta_{c'} \rp$, and $Q^v_{\tot} =\sum_{c} Q^v_c$.
 %    \begin{align}
	% 	\label{eq:gammacon}
	% 	\gammacon &= 	 \max\lp\beta_{\rmax_c,\tmax_c}, \beta_{\rmin_c,\tmin_c}\rp
	% 	\\
	% 	\rmax_c &= \frac{Q^v_c}{Q^v_\tot},~  \tmax_c = \sum_{c',c' \neq c}\lp Q^v_{c'} + \mdelta_{c'}  +\frac{Q^v_{c'}}{Q^v_c} \mdelta_c  \rp  \\
	% 	\rmin_c &= \frac{Q^v_c - \mdelta_c}{Q^v_\tot - \mdelta_c},~  \tmin_c = \sum_{c',c' \neq c}\lp Q^v_{c'} + \mdelta_{c'} \rp
	% 	% \\
	% 	% 	\label{eq:Qtot}
	% 	% Q^v_{\tot} &=\sum_{c} Q^v_c
	% \end{align}
 % and $Q^v_{\tot} =\sum_{c} Q^v_c$.
 % In \eqref{eq:gammacon}, $\beta_{R,T}$ is a rate-latency function defined in Table~\ref{tab:not}.

For the non-convex strict service curve, $\beta_c^{\scriptsize\textrm{nc},v}$, we have
\begin{align}
    q_c^v = Q^v_c - \mdelta_c \mand T_1 = \tmin_c \label{eq:t1} 
\end{align}
% \begin{align}
%     q_c^v &= Q^v_c - \mdelta_c \label{eq:qc} \\
%     T_1 &= \tmin_c \label{eq:t1}
% \end{align}

\subsection{Non-Degraded Operational Mode} \label{app:drrNDM}

Here we present a new formulation of Corollary~4 of \cite{drr_ton}. We slightly generalize Corollary~4 of \cite{drr_ton}, using Lemma 2 of \cite{drr_ton}, such that it enables us to take into account any available output arrival curves. Specifically, Corollary~4 of \cite{drr_ton} is an application of this new formulation where we replace  $\alpha^*_{c'}$ with $\alpha_{c'} \oslash \oldservice_{c'}$, which is an output arrival curve for class $c'$,  in \eqref{eq:betamJ}. Note that $\oslash$ is the min-plus deconvolution defined by $(y \oslash y')(t) = \sup_{s \geq 0} \{ y(t+s) - y'(s)\}$ for two non-decreasing functions $y$ and $y'$ 
%, and max-plus convolution   is defined by $(f \bar{\otimes} g)(t) = \sup_{0 \leq s \leq t}  \{ f(t-s) + g(s)\}$
\cite{le_boudec_network_2001,Changbook,bouillard_deterministic_2018}.

% \jylb{Need to explain why we giev a new formulation. Was there a bug in [18] ? Give more hints on what we do (we slightly generalize [18] by allowing the output arrival curve to be anything. }
% To obtain this, we combine Corollary~4 of \cite{drr_ton} and Lemma 2 of \cite{drr_ton}.  Also, we present Algorithm 2 of \cite{drr_ton}, adopted with the new formulation: 

Let $v$ be a node with the assumptions in Section \ref{app:drrDM}. Also, assume that every class $c$  has an output arrival curve $\alpha^*_c $ and a strict service curve $\oldservice_c $, and let $N_c = \{c_1,c_2,\ldots,c_n\}\setminus\{c\}$, and for any $J \subseteq N_c$, let $\bar{J} =  N_c \setminus J$.
% 	\anne{Why does $N_i$ not contain $i$?}
	Then, for every class $c$, a new strict service curve $\newservice_c$ is given by
	\begin{equation}
		\label{eq:betamJ}
		\newservice_c = \max \lp \oldservice_c, \max_{J \subseteq N_c}{\gammacon}^J \circ \lb B^v - \sum_{c' \in \bar{J}} \alpha^*_{c'}  \rb_{\uparrow}^+  \rp
	\end{equation}
	with  ${\gammacon}^J = 	 \max\lp\beta_{{\rmax_c }^J,{\tmax_c }^J} , \beta_{{\rmin_c}^J,{\tmin_c}^J}\rp$, 		
	% \begin{align}
	% 	\label{eq:gammaMaxrateJ}
	% 			{\gammacon}^J &= 	 \max\lp\beta_{{\rmax_c }^J,{\tmax_c }^J} , \beta_{{\rmin_c}^J,{\tmin_c}^J}\rp\\
	% {\rmax_c }^J&= \frac{Q^v_c}{Q^{J,c}_\tot},~ {\tmax_c }^J= \sum_{c' \in J}\lp Q^v_{c'} + \mdelta_{c'}   +\frac{Q^v_{c'}}{Q^v_c} \mdelta_c  \rp \label{eq:rateMax}\\
	% {\rmin_c}^J &= \frac{Q^v_c - \mdelta_c}{Q^{J,c}_\tot - \mdelta_c},~  {\tmin_c }^J= \sum_{c' \in J}\lp Q^v_{c'}  + \mdelta_{c'}  \rp	\label{eq:rateMin}	\\
	% 		\label{eq:QtotJ}
	% 	Q_{\tot}^{J,c} &=Q_c  + \sum_{c'\in J} Q_{c'} 
	% \end{align}
 % 	\begin{align}
	% 	\label{eq:gammaMaxrateJ}
	% 			{\gammacon}^J &= 	 \max\lp\beta_{{\rmax_c }^J,{\tmax_c }^J} , \beta_{{\rmin_c}^J,{\tmin_c}^J}\rp
	% \end{align}
  $ {\rmax_c }^J = \frac{Q^v_c}{Q^{J,c}_\tot}$, ${\tmax_c }^J= \sum_{c' \in J}\lp Q^v_{c'} + \mdelta_{c'}   +\frac{Q^v_{c'}}{Q^v_c} \mdelta_c  \rp$, $ {\rmin_c}^J = \frac{Q^v_c - \mdelta_c}{Q^{J,c}_\tot - \mdelta_c}$, ${\tmin_c }^J= \sum_{c' \in J}\lp Q^v_{c'}  + \mdelta_{c'}  \rp$, $Q_{\tot}^{J,c} =Q_c  + \sum_{c'\in J} Q_{c'} $.
% 	\anne{More precisely, shouldn't we have $Q^J_{tot} + Q_i$ instead of $Q^J_{tot}$?} \hossein{Correct and Fixed \eqref{eq:QtotJ}} \anne{But now $Q^J_{tot}$ depends on $i$... it could be better to have $i$ in $J$ and define $T^{max, J}_i = \sum_{j\in J-\{i\}} ...$ or $\phi_{i, j}(x) = x$.}
	
In \eqref{eq:betamJ}, $\lb. \rb ^+_{\uparrow}$ is the non-decreasing and non-negative  closure: The non-decreasing and non-negative closure  $\lb y \rb^+_{\uparrow}$ of a function $y:\mathbb{R}^+ \to \mathbb{R^+} \cup \{+\infty\}$ is the smallest non-negative, non-decreasing  function that upper bounds $y$. Also, $\circ$ is the composition of functions.  $\beta_{R,T}$ is a rate-latency function defined in Table~\ref{tab:not};  note that a rate-latency function, as defined in Table~\ref{tab:not}, has a rate expressed in bit/s, and  a latency expressed in seconds, however, rate and latencies defined above are respectively unitless and in bits. This is because $\beta_{{\rmax_c }^J,{\tmax_c }^J}$ and $\beta_{{\rmin_c}^J,{\tmin_c}^J}$ are later composed by a function expressed in bit/s, hence the final results also  are in bit/s and seconds.
% \bleu{ note that  rate and latencies defined in \eqref{eq:rateMax} and \eqref{eq:rateMin} are expressed with no unit and in bit, respectively, which differ from rate-latency function, as defined in Table~\ref{tab:not}.}

% \bleu{note that rate-latency function, as defined in Table~\ref{tab:not}, has a rate expressed in say bit/s, and  a latency expressed in say seconds, however, rate and latencies defined in \eqref{eq:rateMax} and \eqref{eq:rateMin} are expressed with no unit and in bit, respectively.}

% Note that Corollary~4 of \cite{drr_ton} is an application of this new formulation where we replace  $\alpha^*_{c'}$ with $\alpha_{c'} \oslash \oldservice_{c'}$ in \eqref{eq:betamJ}.

% We use the following in the theorem: The non-decreasing and non-negative closure  $\lb y \rb^+_{\uparrow}$ of a function $y:\mathbb{R}^+ \to \mathbb{R^+} \cup \{+\infty\}$ is the smallest non-negative, non-decreasing  function that upper bounds $y$.

The essence of \eqref{eq:betamJ} is as follows. Equation~\eqref{eq:betamJ} gives new strict service curves $\newservice_c$ for every flow $c$; they are derived from already available strict service curves $\oldservice_c $ and from output arrival curves of classes $\alpha^*_c$; this enables us to improve any collection of  strict service curves that are already obtained. 

Let $\Pi^{\scriptsize\textrm{convex}}_v : \mathscr{F}^{2n}\to \mathscr{F}^n$ be the mapping at server $v$ that maps $\lp \oldservice_1, \oldservice_2, \ldots, \oldservice_n \rp$ using $\lp \alpha^*_1, \alpha^*_2, \ldots, \alpha^*_n \rp$  to $\lp \newservice_1, \newservice_2, \ldots, \newservice_n \rp$ as in  \eqref{eq:betamJ}. Then, an iterative scheme can be defined as in Algorithm \ref{alg:iterScheme}. Note that Algorithm 3 is exactly the same as Algorithm 2 of \cite{drr_ton}, adopted with the new formulation that is presented in \eqref{eq:betamJ}.
% \jylb{Same comment: what is the nature of Algo 3 ? Is it new ? is it the same as Algo 2 of [18] ? Is it a more compact formulation ?}
% \vspace{-0.4cm}
\begin{algorithm}[htbp]
    \SetKwInOut{IV}{Local Variables}

    \KwResult{Collection of strict service curves $\lp \beta^v_1, \ldots, \beta^v_n\rp$}
    \IV{Collections of strict service curves $\lp\oldservice_1, ..., \oldservice_n\rp$ and $\lp\newservice_1, ..., \newservice_n\rp$}

	\For{$c \leftarrow 1$ to $n$}
	    {
	    $\oldservice_c \gets \betacdm$ \;
	    }
	
	\While{Stopping criteria not reached}
	{
	    \For{$c \gets 1$ to $n$}
	    {
	        $\alpha^*_c \gets \alpha_c\oslash \oldservice_c$ \;
	    }
	    $\newservice \gets \Pi^{\scriptsize\textrm{convex}}_v \lp \alpha^*,  \oldservice \rp $\;
	}
	
	$\beta^v \gets \newservice$\;
    \textbf{return} $\lp \beta^v_1, \ldots, \beta^v_n\rp$
    \caption{$\drri_v(\alpha_1, \ldots, \alpha_n)$}
    \label{alg:iterScheme}
\end{algorithm}

	% \vspace{-0.4cm}
It is shown in \cite{drr_ton} that  for every class $c$ , $\oldservice_c $ and $\newservice_c $ are strict service curves for class $c$ and $\oldservice_c \leq \newservice_c$, i.e., an increasing sequence of strict service curves is obtained for every class. Also, this sequence is a guaranteed simple convergence, starting from any valid initial  strict service curves. Note that the computed strict service curves at each iteration are valid hence can be used to derive valid delay bounds; this means the iterative scheme can be stopped at any iteration. For example, the iterative scheme can be stopped when the delay bounds of all classes decrease insignificantly. The scheme requires being initialized by strict service curves. We use $\betacdm$, obtained in Section \ref{app:drrDM} for the initial strict service curves at lines 1-2.

When every class $c$ has a token-bucket arrival curve at the output, say $\gamma_{r_c,b_c^v}$, and a known strict service curve $\beta^v_c$, $\drrb_v \lp \beta^v ,b^v\rp$ is the function that implements \eqref{eq:betamJ} (i.e., $\Pi^{\scriptsize\textrm{convex}}_v$ defined above) and returns an improved collection of strict service curves for all classes at node $v$.

\section{Detailed Background on PLP \cite{plp}} \label{app:plp}
Here we present more background on PLP of \cite{plp}, using our notations, which enables a reader to implement what we use in the paper. Specifically, we summarize linear programs used by PLP. For the rest of the section, a reader is invited to recall the definitions of Section~\ref{sec:2imp}. Readers who know reference \cite{plp} can map our notations to those of \cite{plp} 
and vice-versa as follows: For time variables,  map $\mathbf{t}_{(j,k)}$ of \cite{plp} to $\tvar{v}{k}$  by mapping $j$ to $v$. For process variables,  map $\mathbf{F}^j_i\mathbf{t}_{(j',k)}$ of  \cite{plp} to $\fvar{v}{f}{v'}{k}$ by mapping $j$ to $v$, $i$ to $f$, and $j'$ to $v'$.
% More details can be found in \cite[Sections 4 and 6]{plp}: Map $\tvar{v}{k}$ to $\mathbf{t}_{(j,k)}$ of \cite{plp} by mapping $j$ to $v$, and map $\fvar{v}{f}{u}{k}$ to $\mathbf{F}^j_i\mathbf{t}_{(h,k)}$ of  \cite{plp} by mapping $j$ to $v$, $u $ to $h$, and $i$ to $f$.

Note that  linear programs of \cite{plp} contains some constraints obtained from the Single Flow Analysis (SFA) delay bounds \cite{bouillard_deterministic_2018}. However, in practice, such constraints have no or negligible effects as SFA bounds are often dominated by those of TFA. Hence, in this paper, we do not use SFA constraints.

\subsection{$\plpd_{f, c}$: For an End-to-end Delay Bound of a Flow} \label{app:plpd}

We summarize linear programs used by PLP (see  \cite[Section 4]{plp} for more details). The goal of $\plpd_{f, c}( \beta_c, d_c,\zcutc)$ is to find a valid end-to-end delay bound for a flow of interest $f$ that belongs to a class $c$. We assume a sub-tree of the cut network where the root is the sink server of the flow of interest $f$ (i.e., the last server is traversed by  flow $f$). Recall that we add an artificial node, node $v_0$, that is the successor of the root. We call $\cV_c^f$ the set of output ports in this sub-tree. We assume that the burstiness of flows at cuts is given, i.e., $\zcutc$. Also, for each node $v$, a convex, piece-wise linear service curve (i.e., $\beta_c$) and a per-node delay bounds (i.e., $d_c$) are provided.

% In the rest of this section, we assume a sub-tree of the cut network where the root is the sink server of the flow of interest $f$ (i.e., the last server is traversed by  flow $f$). We call $\cV_c^f$ the set of output ports in this sub-tree.

\subsubsection{Constraints}

In the constraints we define below, let server $u = \suc \lp v \rp$ be the successor of server $v$. Also, we denote flows by $g$,  not to be confused by the flow of interest $f$. 
%For every flow $g$, recall that $v_g$ is the source server of flow $g$.

$\sbullet[.75]$ Time Constraints:
\begin{enumerate}
    \item[-] $\forall v \in \cV^f_c,~ \forall k \in [0,~\depth\lp v\rp - 1], \tvar{v}{k} \geq \tvar{v}{k+1} $;
    \item[-] $\forall v \in \cV^f_c,~ \forall k \in [0,~\depth\lp v\rp], \tvar{v}{k} \leq \tvar{u}{k} $.
\end{enumerate}

$\sbullet[.75]$ FIFO Constraints:
\begin{enumerate}
    \item[-] $\forall v \in \cV^f_c,~ \forall k \in [0,~\depth\lp u\rp], \forall g \in \textrm{In}_c(v), \fvar{v}{g}{v}{k} = \fvar{u}{g}{u}{k} $.
\end{enumerate}

$\sbullet[.75]$ Service Curve Constraints: Recall that $\beta_c^v$ is  piece-wise linear convex (i.e., $\beta_c^v = \max_p \beta_{R_p^v, T_p^v})$), and also recall $\mathbf{At}_u^v = \sum_{g \in \textrm{In}(v)} \fvar{u}{g}{u}{\scriptsize\depth(u)}$ and  $\mathbf{At}_v^v =\sum_{g \in \textrm{In}(v)} \fvar{v}{g}{v}{\scriptsize\depth(v)}$, where $u = \suc(v)$. Then, $\forall v \in \cV^f_c$,
\begin{enumerate}
    \item[-] $\mathbf{At}_u^v - \mathbf{At}_v^v 
        %\sum_{f \in \textrm{In}(v)}\lp \fvar{u}{f}{u}{\scriptsize\depth(u)} - \fvar{v}{f}{v}{\scriptsize\depth(v)} \rp
        \geq 0$;
    \item[-] $\forall p,~\mathbf{At}_u^v - \mathbf{At}_v^v 
        %\sum_{f \in \textrm{In}(v)}\lp \fvar{u}{f}{u}{\scriptsize\depth(u)} - \fvar{v}{f}{v}{\scriptsize\depth(v)} \rp
        \geq R^v_p\lp \tvar{u}{\scriptsize\depth(u)} - \tvar{v}{\scriptsize\depth(v)}  - T^v_p  \rp$.
\end{enumerate}

$\sbullet[.75]$ Per-node Delay Bound Constraints:
\begin{enumerate}
    \item[-] $\forall v \in \cV^f_c,~ \forall k \in [0,~\depth\lp u\rp],~ \tvar{u}{k} - \tvar{v}{k} \leq d_c^v$.
\end{enumerate}

$\sbullet[.75]$ Arrival Curve Constraints: For each flow $g$ of class $c$, recall that $v_g$ is the source server of flow $g$. Define $\bar{b}_g$, the burstiness bound of flow $g$, as follows: If flow $g$ is a fresh flow, its arrival curve is a token-bucket arrival curve $\gamma_{r_g,b_g}$, as defined in Section \ref{sec:sysmodel}, and thus $\bar{b}_g = b_g$. Else, flow $g$ is a cut flow, its arrival curve is a token-bucket arrival curve $\gamma_{r_g,\bar{b}_g }$ where $\bar{b}_g$ is obtained from $\zcutc$.

% or a cut flow. For the former, its arrival curve is defined by a token-bucket arrival curve $\gamma_{r_g,b_g}$ and $\bar{b}_g = b_g$ and for the latter, the burstiness of its arrival curve, $\bar{b}_g$, is obtained from $\zcutc$. 
% For every flow $g$,
\begin{enumerate}
    \item[-] For every flow $g$, $\forall 0 \leq k < k' \leq \depth(v_g),~ \fvar{v_g}{g}{v_g}{k} - \fvar{v_g}{g}{v_g}{k'} \leq \bar{b}_g + r_g\lp \tvar{v_g}{k} - \tvar{v_g}{k'} \rp$.
\end{enumerate}

$\sbullet[.75]$ Shaping Constraints:
% Recall that $\lmax_c$ is a upper bound on the packet sizes of flows of class $c$ at node $v$. 
For every $v \in \cV_c^f$ and every edge $e = (v,u)$, let $F_{v,u}$ be the set of flows of class $c$, carried by the edge $e = (v,u)$. Then,
\begin{enumerate}
    \item[-] $\forall 0 \leq k < k' \leq \depth(u),~ \sum_{g \in F_{v,u}} \lp \fvar{u}{g}{u}{k} - \fvar{u}{g}{u}{k'} \rp  \leq \lmax_c + R^v\lp \tvar{u}{k} - \tvar{u}{k'} \rp$.
\end{enumerate}

$\sbullet[.75]$ Monotonicity Constraints:

Recall that for each flow $g$ of class $c$, $v_g$ is the source server of flow $g$. Then, for every flow $g$,
\begin{enumerate}
    \item[-] $ \forall k \in [0, \depth\lp v_g\rp- 1],~ \fvar{v_g}{g}{v_g}{k} \geq  \fvar{v_g}{g}{v_g}{k+1}$.
\end{enumerate}

\subsubsection{Objective} The Objective is $\max \lp \tvar{v_0}{0} - \tvar{v_f}{0} \rp  $.
% $\sbullet[.75]$ The Objective is $\max \lp \tvar{v_0}{0} - \tvar{v_f}{0} \rp  $
% \begin{enumerate}
%     \item[-] $\max \lp \tvar{v_0}{0} - \tvar{v_f}{0} \rp  $
% \end{enumerate}

\subsection{$\plpb_{f, c}$: For a Backlog Bound of a  Flow}\label{app:plpb}

The goal of $\plpb_{f, c}( \beta_c, d_c,\zcutc)$ is to find a valid backlog bound for a flow of interest $f$ that belongs to a class $c$.  By \cite[Theorem 4]{plp}, the objective of this linear program, is a bound on the burstiness of flow $f$ at the output of node $v$. 

\subsubsection{Constraints}
It contains all the constraints  of $\plpd_{f, c}$ (see Section~\ref{app:plpd}), with the following changes:

First, for shaping constraints, we should remove the flow of interest $f$ from $F_{v,u}$. Specifically, we should replace  $F_{v,u}$ by $F_{v,u} \setminus \{f\}$, i.e.,
the set of flows of class $c$, carried by the edge $e = (v,u)$, excluding flow of interest $f$.

Second, we add the following constraints: Recall that $v_f$ is the source of flow of interest $f$.
\begin{enumerate}
    \item[-] $\forall k \in [0, \depth \lp v_f \rp ],~ \fvar{v_f}{f}{v_0}{0} - \fvar{v_f}{f}{v_f}{k} \leq b_f + r_f\lp \tvar{v_0}{0} - \tvar{v_f}{k} \rp$.
\end{enumerate}

\subsubsection{Objective} The Objective is $\max \lp \fvar{v_f}{f}{v_0}{0} -  \fvar{v_0}{f}{v_0}{0} \rp  $.
% Third, we use the following objective:

% $\sbullet[.75]$ The Objective:
% \begin{enumerate}
%     \item[-] $\max \lp \fvar{v_f}{f}{v_0}{0} -  \fvar{v_0}{f}{v_0}{0} \rp  $
% \end{enumerate}

% $\sbullet[.75]$ Service Curve Constraints:
% \begin{enumerate}
%     \item[-] $\forall v \in \cV_c,~ 
%     \sum_{g \in \textrm{In}_c(v)}\fvar{u}{g}{u}{\scriptsize\depth(u)} \geq  \sum_{g \in \textrm{In}_c(v)} \fvar{v}{g}{v}{\scriptsize \depth(v)}$
%     \item[-] $\forall v \in \cV_c,~ 
%     \sum_{g \in \textrm{In}_c(v)}\fvar{u}{g}{u}{\scriptsize\depth(u)} \geq  \sum_{g \in \textrm{In}_c(v)} \fvar{v}{g}{v}{\scriptsize \depth(v)}$    
    
% \end{enumerate}

% For every node $v$ and every $f$ at $v$, define $\fvar{v}{f}{v}{k}$ with $k \in \{0, \ldots, \depth(v) \}$, where  $\fvar{v}{f}{v}{k}$ is a variable for the cumulative arrival function of flow $f$ at the input of node $v$ at time $\tvar{v}{k}$.  It can be mapped to $\mathbf{F}^j_i\mathbf{t}_{(j,k)}$ of  \cite{plp} by mapping $j$ to $v$, $i$ to $f$.

\subsection{$\plpcut_c$: For Bounds on The Burstiness of Flows at Cuts}\label{app:plpcut}

$\plpcut_c(\beta_c, d_c)$ finds  valid bounds on the burstiness for flows of class $c$ at cuts (i.e., $\zcutc$). We assume for each node $v$, a convex, piece-wise linear service curve (i.e., $\beta_c$) and a per-node delay bounds (i.e., $d_c$) are provided. 

Let $\fcutc$ be the set of cut flows of class $c$. For each flow $f \in \fcutc$, we define a variable $\xvar{f}$ that represents the burstiness of the arrival curve of flow $f$ at its source. $\plpcut_c$ is constructed as follows: For each cut flow $f \in \fcutc$, a fresh set of time and process variables is defined, and all constraints of $\plpb_{f,c}$ (see Section \ref{app:plpb}) are added to the set of constraints of $\plpcut_c$; The common variables between constraints of different cut flows are only $\xvar{f}$ variables. Then, $\plpcut_c$ maximizes the sum of all $\xvar{f}$ variables; \cite[Theorems~7,8]{plp} shows that if the solution is bounded, the values of $\xvar{f}$ in the solution, are valid bounds on the burstiness of cut flows.  

\subsubsection{Constraints}
% We  define the constraints of $\plpcut_c$ as follows:
$\sbullet[.75]$ For each fresh flow $f \in \fcutc$ (i.e., flow $f$ has an arrival curve $\gamma_{b_f,r_f}$), we add $\xvar{f} \leq b_f  $;
% \begin{enumerate}
%     \item[-] $\xvar{f} \leq b_f  $
% \end{enumerate}

$\sbullet[.75]$ For each transit flow $f \in \fcutc$ (i.e., a flow that is not a fresh flow), we first define a fresh set of time and process variables, say $\mathbf{t}$ and $\mathbf{Ft}$, and we add all constraints of $\plpb_{f,c}$, defined in Section \ref{app:plpb}. Also, the objective of $\plpb_{f,c}$ is added as an constraint for $\xvar{f}$: $\xvar{f} \leq \lp \fvar{v_f}{f}{v_0}{0} -  \fvar{v_0}{f}{v_0}{0} \rp$. Note that for arrival curve constraints of a cut flow $g \in \fcutc$, the burstiness $\bar{b}_g$ is replaced by the variable $\xvar{g}$.

\subsubsection{Objective} The Objective is $\max \sum_{f\in \fcutc} \xvar{f}  $.
% $\sbullet[.75]$ The Objective:
% \begin{enumerate}
%     \item[-] $\max \sum_{f\in \fcutc} \xvar{f}  $
% \end{enumerate}

\end{appendices}

\end{document}
%	\pagenumbering{arabic}